\title{An Algorithm-to-Contract Framework without Demand Queries}
\author{Ilan Doron-Arad\thanks{Math Department, MIT, Cambridge, Massachusetts, USA. \texttt{ilanda@mit.edu}} \and
Hadas Shachnai\thanks{Computer Science Department, Technion, Haifa, Israel. \texttt{hadas@cs.technion.ac.il}} \and 
Gilad Shmerler\thanks{Computer Science Department, Technion, Haifa, Israel. \texttt{shmerler@campus.technion.ac.il}} \and
Inbal Talgam-Cohen\thanks{School of Computer Science, Tel Aviv University. Email: \texttt{inbaltalgam@gmail.com}}
}
\date{}
\begin{document}

\newcommand{\algoPhaseOne}{
\begin{algorithm}[hbt]
\caption{LP Rounding-Based Action Selection}
\label{alg:phase1}
\begin{algorithmic}[1]
\Require Actions with $(p_i, c_i, w_i)$ for all $i \in \sa$, a threshold $R \in \drz$, a contract $\al \in (0,1)$, a budget $W \geq 0$ and $\eps > 0$.
\Ensure A set of actions $S$.
\State Set $h \gets \left\lceil \frac{d+1}{\eps} \right\rceil$, $S_{ALG} \gets \emptyset$, and $z \gets -\infty$.
\ForAll{$S_1 \se \sa$ such that $|S_1| \leq h$}
    \ForAll{$S_2 \se \sa$ such that $|S_2| \leq h \text{ and } S_1 \cup S_2 \in \sv$}
        \State $E_1(S_1) \gets \{i \in \sa \setminus S_1 \mid p_i > p_{min}(S_1)\}$
        \State $E_2(S_2) \gets \{i \in \sa \setminus S_2 \mid q_i > q_{min}(S_2)\}$
        \\
        \State Solve LP \eqref{prob:singleLP} to obtain a fractional basic solution $x^*$.
        \State $T \gets \left\{ i \in 
        \sa \mid x_i^* = 1 \right\}$
        \If{$u_a(T, \al) \geq z$}
            \State $z \gets u_a(T, \al)$
            \State $S_{ALG} \gets T$
        \EndIf
    \EndFor
\EndFor
\If{$S_{ALG} = \emptyset$}
    \State \Return "No feasible set of actions exists"
\Else
    \State \Return $S_{ALG}$
\EndIf
\end{algorithmic}
\end{algorithm}
}

\newcommand{\algoPhaseTwo}{
\begin{algorithm}[hbt]
\caption{Ensuring Principal Utility Through Adaptive Thresholding}
\label{alg:phase2}
\begin{algorithmic}[1]
\Require A contract $\al \in (0,1)$, an algorithm $\alg$ and $\eps > 0$.
\Ensure A set of actions approximating the principal and agent utilities.
\State $H \gets \left\{(1-\al) p_{min} \cdot (1+\eps)^i \mid i \in \left\{ 0, 1, \ldots, \log_{1+\eps} \left(\frac{\sum_{i=1}^{n}{p_i}}{p_{min}} \right) \right\} \right\}$
\State $Y \gets $ Run $\alg$ with threshold $R=0$.
\State $z_0 \gets (1-\eps) \cdot u_a(Y, \al)$
\State $X \gets \emptyset$
\For{$R \in H$}
    \State $S \gets$ Run $\alg$ with threshold $R$.
    \If{$u_a(S,\al) \geq z_0 \text{ and } u_p(S,\al) \geq u_p(X,\al)$}
        \State $X \gets S$
    \EndIf
\EndFor
\\
\Return $X$
\end{algorithmic}
\end{algorithm}
}

\newcommand{\algoMatroid}{
\begin{algorithm}[hbt]
\caption{Greedy Weighted Matroid Maximization}
\label{alg:GreedyMatroid}
\begin{algorithmic}[1]
\Require A Matroid $M = (E, \si)$ and a weight function $w: E \rightarrow \mathbb{R}$.
\Ensure A maximum weight independent set $S \in \si$.
\State Sort and relabel the elements of $E$ in non-increasing order of their weights: $w(e_1) \geq w(e_2) \geq \ldots \geq w(e_n)$
\State $S \gets \emptyset$
\For{each element $x$ in the sorted order}
    \If{$S \cup \{x\} \in \mathcal{I}$}
        \State $S \gets S \cup \{x\}$
    \EndIf
\EndFor
\State \Return $S$
\end{algorithmic}
\end{algorithm}
}

\newcommand{\algoSMFindTx}{
\begin{algorithm}[hbt]
\caption{Find Approximation to $T_k$}
\label{alg:SMFindTx}
\begin{algorithmic}[1]
\Require Parameters $b \in \drz$ and $k \in \left\{0, \delta b, 2 \delta b, \ldots, \left \lceil \frac{n}{\delta} \right \rceil \cdot \delta b \right \}$.
\Ensure A set $T_k$ of agents which is $(1-\eps)$-approximation to $T_k^*$.

\State $T_k \gets \emptyset$
\For{$t \in \{1 ,\ldots, n\}$}
    \For{$r \in \{\ell_i \mid i \in \sa\}$}
    \State $S \gets$ Run Algorithm \ref{alg:SolveDP} with parameters $t, b, r$ and $k$.
    \If{$\tilde{\ell}(S) > \tilde{\ell}(T_k)$}
        \State $T_k \gets S$
    \EndIf
    \EndFor
\EndFor
\State \Return $T_k$
\end{algorithmic}
\end{algorithm}
}

\newcommand{\algoMultiSolveDP}{
\begin{algorithm}[hbt]
\caption{Solve DP}
\label{alg:SolveDP}
\begin{algorithmic}[1]
\Require Parameters $t \in \{1, \ldots, n\}$, $b, r \in \drz$ and $k \in \left\{0, \delta b, 2 \delta b, \ldots, \left \lceil \frac{n}{\delta} \right \rceil \cdot \delta b \right \}$.
\Ensure Set of agents $S$.

\State $Q_0 \gets \left\{(i, x, y, z) \mid (i = 0 \text{ or } x = 0) \text{ and } y \leq 0 \text{ and } z \leq 0 \right\}$ \Comment{Base cases}
\State $Q_{\infty} \gets \left\{(i, x, y, z) \mid (i = 0 \text{ or } x = 0) \text{ and } (y > 0 \text{ or } z > 0) \right\}$ \Comment{Illegal cases}

 \State $B(\cdot,\cdot,\cdot,\cdot) \gets \infty$ \Comment{Initialize the table}
 \State $B(0,0,0,0) \gets 0$

 \For{$i = 1$ to $n$}
    \For{$x = 1$ to $t$}
        \For{$y \in \left\{0, \delta b, 2 \delta b, \ldots, \left\lceil \frac{n}{\delta} \right\rceil \cdot \delta b \right\}$}
            \For{$z \in \left\{0, \delta r, 2 \delta r, \ldots, \left\lceil \frac{n}{\delta} \right\rceil \cdot \delta r \right\}$}
                \State\[
B(i, x, y, z) = 
\begin{cases} 
0, & (i, x, y, z) \in Q_0 \\
\infty, & (i, x, y, z) \in Q_{\infty} \\
\min\left\{ \begin{array}{c}
B(i-1, x, y, z), \\
B(i-1, x-1, y-\tilde{p}_i, z-\tilde{\ell}_i) + w_i
\end{array} \right\}, & \text{otherwise}
\end{cases}
\]
            \EndFor
        \EndFor
    \EndFor
\EndFor

\State \Return the corresponding set of: $\max \left\{z \mid B(n,t,k,z) \leq W \right\}$

\end{algorithmic}
\end{algorithm}
}

\newcommand{\algoMMFindTx}{
\begin{algorithm}[hbt]
\caption{Find Approximation to $T_k$}
\label{alg:MultiFindTx}
\begin{algorithmic}[1]
\Require Agents with $(p_i, c_i, \Vec{w}_i)$ for all $i \in \sa$, a parameter $k \in \dr_{\geq 0}$, budgets $W_1, \ldots, W_d \in \dr_{\geq 0}$ and $\eps > 0$.
\Ensure A set $T_k$ of agent which is $(1-\eps)$-approximation to $T_k^*$.

\State Set $h \gets \left \lceil \frac{d+2}{\eps} \right \rceil$, $S_{ALG} \gets \emptyset$, and $z \gets -\infty$.
\For{$t \in \{1, \ldots, n\}$}
    \ForAll{$S_1 \subseteq \sa$ such that $|S_1| \leq h$}
        \ForAll{$S_2 \subseteq \sa$ such that $|S_2| \leq h \text{ and } S_1 \cup S_2 \in \sv$}
            \State $E_1(S_1) \gets \{i \in \sa \setminus S_1 \mid \tilde{p}_i > \tilde{p}_{min}(S_1)\}$
            \State $E_2(S_2) \gets \{i \in \sa \setminus S_2 \mid \ell_i > \ell_{min}(S_2)\}$
            \\
            \State Solve LP \eqref{prob:multiLP} to obtain a fractional basic solution $x^*$.
            \State $T \gets \left\{ i \in 
            \sa \mid x_i^* = 1 \right\}$
            \If{$\sum_{i \in T}{\ell_i} \geq z$}
                \State $z \gets \sum_{i \in T}{\ell_i}$
                \State $S_{ALG} \gets T$
            \EndIf
        \EndFor
    \EndFor
\EndFor

\If{$S_{ALG} = \emptyset$}
    \State \Return "The problem has no solution"
\Else
    \State \Return $S_{ALG}$
\EndIf
\end{algorithmic}
\end{algorithm}
}

\newcommand{\algoMulti}{
\begin{algorithm}[hbt]
\caption{Framework for Approximate Multi-Agent Contract Problems}
\label{alg:algoMulti}
\begin{algorithmic}[1]
    \Require Agents with $(p_i, c_i, w_i)$ for all $i \in \sa$, budget $W \in \dr_{\geq 0}$, $\eps > 0$ and an algorithm $\alg$.
    \Ensure Subset of agents $S$ that should exert effort.
    \State $S_{ALG} \gets \emptyset$
    \State $k_{ALG} \gets 0$
    \State $\delta \gets \frac{\eps}{n}$
    \For{$b \in \{p_i \mid i \in \sa\}$}
        \For{$k \in \left\{0, \delta b, 2 \delta b, \ldots, \left \lceil \frac{n}{\delta} \right \rceil \cdot \delta b \right \}$}
            \State $T_k \gets \text{Run $\alg$ with parameters } k \text{ and } b$. \label{step:multiRunAlgo}
            \If{$\left(1 - \sum_{i \in S_{ALG}}{\frac{c_i}{p_i}}\right) \cdot k_{ALG} \leq \left(1 - \sum_{i \in T_k}{\frac{c_i}{p_i}}\right) \cdot k$}
                \State $S_{ALG} \gets T_k$
                \State $k_{ALG} \gets k$
            \EndIf
        \EndFor
    \EndFor
    \State \Return $S_{ALG}$
\end{algorithmic}
\end{algorithm}
}

\newcommand{\algoSSFindTx}{
\begin{algorithm}[hbt]
\caption{FPTAS to $\sask$ Problem}
\label{alg:SSInit}
\begin{algorithmic}[1]
    \Require Actions with $(p_i, c_i, w_i)$ for all $i \in \sa$, budget $W \in \dr_{\geq 0}$ and $\eps > 0$.
\Ensure A set of actions $S$.

\State $z \gets $ Run traditional FPTAS of knapsack with profit $q_i$ and weight $w_i$ for each action $i \in \sa$.
\State $bound \gets (1-\eps)z$
\State $S \gets \emptyset$
\For{$b \in \{p_i \mid i \in \sa\}$}
    \For{$r \in \{q_i \mid i \in \sa\}$}
        \State $X \gets$ Run Algorithm \ref{alg:SSSolveDP} with parameters $b$, $r$ and $bound$.
        \If{$\tilde{p}(X) > \tilde{p}(S)$}
            \State $S \gets X$
        \EndIf
    \EndFor
\EndFor
\State \Return $S$
\end{algorithmic}
\end{algorithm}
}

\newcommand{\algoSSSolveDP}{
\begin{algorithm}[hbt]
\caption{Solve DP}
\label{alg:SSSolveDP}
\begin{algorithmic}[1]
\Require Parameters $b, r, bound \in \drz$.
\Ensure A set of actions.

 \State $B(\cdot,\cdot,\cdot) \gets \infty$ \Comment{Initialize the table}
 \State $B(0,0,0) \gets 0$

 \For{$i = 1$ to $n$}
    \For{$x \in \left\{0, \delta b, 2 \delta b, \ldots, \left\lceil \frac{n}{\delta} \right\rceil \cdot \delta b \right\}$}
        \For{$y \in \left\{0, \delta r, 2 \delta r, \ldots, \left\lceil \frac{n}{\delta} \right\rceil \cdot \delta r \right\}$}
                \State\[
B(i, x, y) = 
\begin{cases} 
0, & i = 0 \text{ and } x \leq 0 \text{ and } y \leq 0 \\
\infty, & i = 0 \text{ and } (x > 0 \text{ or } y > 0) \\
\min\left\{ B(i-1, x, y), B(i-1, x-\tilde{p}_i, y-\tilde{q}_i) + w_i \right\}, & \text{otherwise}
\end{cases}
\]
        \EndFor
    \EndFor
\EndFor

\State \Return the corresponding set of: $\max \left\{x \mid B(n,x,y) \leq W \text{ and } y \geq bound \right\}$

\end{algorithmic}
\end{algorithm}
}

\newcommand{\algoMCP}{
\begin{algorithm}[hbt]
\caption{Algorithm for $\mcp$ Problem}
\label{alg:MCP}
\begin{algorithmic}[1]
\Require A matroid $M = (\sa, \si)$ and a contract $\al \in (0,1)$.
\Ensure An optimal solution $S \in \si$.
\State $\delta \gets \min\{|q_k - q_\ell| \mid k, \ell \in \sa \text{ and } |q_k - q_\ell| \neq 0 \}$ (or $0$ if the minimum is undefined).
\State $\Delta \gets 
\begin{cases}
    \frac{\max_{k, \ell \in \sa} |p_k - p_\ell|}{\delta}, & \delta \neq 0 \\
    0, & \delta = 0
\end{cases}$
\State Define the weight function:
\Statex \quad $w(i) = (\Delta + 1) \cdot q_i + p_i \quad \forall i \in \sa$ \footnotemark \label{step:MCPWeightFunction}
\State $S \gets$ Run \cref{alg:GreedyMatroid} with $M$ and $w$.
\State \Return $S$
\end{algorithmic}
\end{algorithm}
\footnotetext{In case $w(j) = w(k)$ for some $j, k \in \sa$, we break ties by the action index as in \cref{dfn:mcpOrder}.}
}

\newcommand{\algoInnerLP}{
\begin{algorithm}[hbt]
\caption{Inner Solver - Matroid Constraint}
\label{alg:innerLPMatroid}
\begin{algorithmic}[1]
    \Require A representative set $T \se \sa$, a threshold $R \in \drz$, a contract $\al \in (0,1)$, a $\frac{1}{2}$-approximation $\beta_q$ to the optimal agent's utility and $0 < \eps < \frac{1}{2}$.
    \Ensure A set of actions $S_{ALG}$.
    \State $S_{ALG} \gets \emptyset$

    \ForAll{ subsets $D \subseteq T \textnormal{ such that } |D| \leq \frac{2}{\eps} \text{ and } D \in \sv$} \label{step:forInnerLP}
        \State Solve the linear program $\textnormal{LP}(D, \beta_q, R)$ and let $x$ be a basic optimal solution.
        \State $K \gets \left \{i \in E(\beta_q, R) \setminus D \mid x_i=1 \right \} \cup D$ \label{step:BMIC-defS}        
        \If{$u_a(S_{ALG}, \al) < u_a(K, \al)$}
            \State $S_{ALG} \gets K$
        \EndIf
    \EndFor
    
    \State \Return $S_{ALG}$
\end{algorithmic}
\end{algorithm}
}

\newcommand{\algoInnerLPMatching}{
\begin{algorithm}[hbt]
\caption{Inner Solver - Matching Constraint}
\label{alg:innerLPMatching}
\begin{algorithmic}[1]
    \Require A representative set $T \se \sa$, a threshold $R \in \drz$, a contract $\al \in (0,1)$ and $0 < \eps < \frac{1}{2}$.
    \Ensure A set of edges $S_{ALG}$.
    \State $S_{ALG} \gets \emptyset$
    
    \ForAll{ subsets $D \subseteq T \textnormal{ such that } |D| \leq \frac{2}{\eps} \text{ and } D \in \sv$} \label{step:forInnerLPMatching}
        \State Compute a basic optimal solution $x$ to $\textnormal{LP}(D, R)$
        \State Apply the rounding algorithm of \cite{chekuri2011multi} to $x$ to obtain an integral matching $\bar{x}$
        \State $M \gets \{ e \in \sa \mid \bar{x}_e = 1 \}$
        \State $K \gets M \cup D$
        \If{$u_a(S_{ALG}, \al) < u_a(K, \al)$}
            \State $S_{ALG} \gets K$
        \EndIf
    \EndFor
    
    \State \Return $S_{ALG}$
\end{algorithmic}
\end{algorithm}
}

\newcommand{\algoBMIC}{
\begin{algorithm}[hbt]
\caption{EPTAS for $\bmic$ and $\match$}
\label{alg:BMICmain}
\begin{algorithmic}[1]
\Require A contract $\al \in (0,1)$, a threshold $R \in \drz$, and $0 < \eps < \frac{1}{2}$.
\Ensure A solution $S$.
\State Compute a $\frac{1}{2}$-approximation $Y$ to the optimal agent's utility using the EPTAS of \cite{AradKS23icalp} with $\eps = \frac{1}{2}$. \label{step:halfApprox}

\State $\beta_q \gets u_a(Y, \al)$
\State $S_i \gets \emptyset \quad \forall i \in \left \{ 0, 1, \ldots, \left \lceil \eps^{-1} \right \rceil \right \}$

\For{$i \in \left \{ 0, 1, \ldots, \left \lceil \eps^{-1} \right \rceil \right \}$}
    \For{$\beta_p \in C$}
        \State $T \gets \emptyset$
        \For{$(r, t) \in \sigma$} \label{step:sigmaLoop}
            \State Construct an exchange set $X$ using the black-box algorithm of Huang and Ward \cite{huang2023fpt}.\label{step:exchangeSet}
            \State $T \gets T \cup X$
        \EndFor \\

        \If{the constraint is a matching}
            \State $S \gets$ Run \cref{alg:innerLPMatching} with $T$ as a representative set
        \ElsIf{the constraint is a matroid}
            \State $S \gets$ Run \cref{alg:innerLPMatroid} with $T$ as a representative set
        \EndIf \\
        
        \If{$S \in \sv$ and $u_a(S, \al) \geq 2\beta_q \cdot (1-14\eps)^i$ and $u_p(S, \al) \geq u_p(S_i, \al)$} \label{step:exchangeIf}
            \State $S_i \gets S$
        \EndIf
    \EndFor
\EndFor

\State $j \gets \argmax \left\{u_a(S_i, \al) \mid i \in  \left \{ 0, 1, \ldots, \left \lceil \eps^{-1} \right \rceil \right \} \right\}$ \label{step:bmicArgmax}
\State \Return $S_j$
\end{algorithmic}
\end{algorithm}
}

\begin{titlepage}
\maketitle

\thispagestyle{empty}
\begin{abstract}
    Consider costly and time-consuming tasks that add up to the success of a project, and must be fitted into a given time-frame. This is an instance of the classic budgeted maximization (knapsack) problem, which admits an FPTAS. Now assume an agent is performing these tasks on behalf of a principal, who is the one to reap the rewards if the project succeeds. The principal must design a contract to incentivize the agent. 
    Is there still an approximation scheme?
    
    In this work we lay the foundations for an algorithm-to-contract framework, which transforms algorithms for combinatorial problems to handle contract design problems subject to the same combinatorial constraints. 
    Our approach diverges from previous works in avoiding the assumption of demand oracle access. 
    As an example, for budgeted maximization, we show how to ``lift'' the classic FPTAS to the best-possible (approximately-IC) FPTAS for the contract problem. We establish this through our local-to-global framework, in which the local step is to approximately solve a two-sided strengthened variant of the demand problem. The global step then utilizes the local one to find the approximately optimal contract. 
    
    We apply our framework to a host of combinatorial constraints: multi-dimensional budgets, budgeted matroid, and budgeted matching constraints. In all cases we essentially match the best purely algorithmic approximation.
    Separately, we also develop a method for multi-agent contract settings. 
    Our method yields the first approximation schemes for multi-agent contract settings that go beyond additive reward functions.
\end{abstract}

\keywords{Combinatorial Contracts, Multi-Agent Contracts, Budgeted Matching,
Budgeted Matroid, Approximation Scheme
}

\newpage
\setcounter{tocdepth}{2}
\tableofcontents

\thispagestyle{empty}
\end{titlepage}
\newpage

\section{Introduction}
\label{sec:introduction}

Consider the following two problems, one purely algorithmic and the other involving incentives:

\begin{problem}[Budgeted agent] 
    \label{ex:knapsack-alg}
    An agent has $n$ possible actions, 
    where action $i\in [n]$ costs the agent $c_i\ge0$ and leads to $p_i\ge 0$ expected reward.
    The agent can choose any \emph{feasible} combination of actions $S\subseteq[n]$, earning total expected reward $p(S)=\sum_{i\in S}{p_i}$ at total cost $c(S)=\sum_{i\in S}{c_i}$.
    \footnote{Throughout we use the notation $v(S):=\sum_{i\in S}{v_i}$ for any vector $v=(v_1,\dots,v_n)$.} 
    Action set $S$ is \emph{feasible} if $w(S)\le W$, where $w_i\ge 0$ is the size of action $i$ (say, the time it takes), and $W$ is the total budget (say, the remaining time before a deadline).
    The goal is to find a feasible action set $S^\star$ that maximizes the total reward minus cost, i.e., $S^\star\in \arg\max_{\{S\subseteq [n] : S\text{ is feasible}\}} \{p(S)-c(S)\}$.
\end{problem}

In Problem~\ref{ex:knapsack-alg}, the agent's constrained optimization problem coincides with the NP-hard knapsack problem, where item $i$'s value is $p_i-c_i$. The agent can thus apply the well-known FPTAS for knapsack. We now add a strategic component: What changes if the agent is not solving the problem for himself, but rather on behalf of a principal? Does the FPTAS extend from the purely algorithmic setting to the contract setting?

\begin{problem}[Budgeted agent under a contract]
    \label{ex:knapsack-contract}
    As in Problem~\ref{ex:knapsack-alg}, the agent takes a combination of actions $S$ and incurs cost $c(S)$. But now a second player, the principal, reaps the expected reward $p(S)$. The principal must compensate the agent but can only observe the reward, not the cost.
    Without loss of generality, the principal designs a linear contract $\alpha\in(0,1)$, committing to paying the agent an $\alpha$-fraction of the reward. The problem is to find a feasible action set $S^\star$, as well as a linear contract $\alpha^\star$, such that the contract \emph{incentivizes} the agent to take action set $S^\star$;  
    in this case we say that $(S^\star,\alpha^\star)$ is \emph{incentive compatible} (\emph{IC} for short).
    Subject to feasibility and incentive compatibility of $(S^\star,\alpha^\star)$, the goal is to maximize the principal's remaining reward $(1-\alpha^\star) p(S^\star)$ after compensating the agent.
\end{problem}

While Problem~\ref{ex:knapsack-alg} is an optimization problem subject to a combinatorial constraint (in this case, a knapsack/budget constraint), in Problem~\ref{ex:knapsack-contract} we face a contract design problem subject to the same combinatorial constraint. In comparison to Problem~\ref{ex:knapsack-alg}, it has additional IC constraints, and in this work we study the relation between the two problems. Our goal is to understand, from an algorithmic perspective, the loss from delegating the knapsack (or other combinatorial) problem rather than solving it ``in house''. 
Our main question can be formulated as:

\begin{question}
\label{que:main}
Can we transform an approximation algorithm for a constrained optimization problem into an algorithm for a constrained contract problem with the same combinatorial constraint, and if so, at what loss to the approximation factor? 
\end{question}

Our work adds to a long line of research in algorithmic game theory initiated by Nisan and Ronen~\cite{nisan1999algorithmic}, which explores how incentive constraints modify the algorithmic landscape, and seeks general methods of reducing incentive design to algorithm design (see, e.g.,~\cite{briest2005approximation,mu2008truthful,CaiDW12,CaiDW13,CaiDW16,DughmiHKN21}); our work is the first to our knowledge to address this for contract design.

\paragraph{Overview of contribution.}

We consider three families of combinatorial constraints: budget (the action set fits within a single or multi-dimensional budget), budgeted matroid (in addition, the action set is an independent set of a matroid), and budgeted matching (in addition, the action set constitutes a matching in an appropriate graph). These families have natural applications as well as good approximation schemes for their purely algorithmic variants. 

For these families we study single-agent (\textsf{SA}) contract design, where the constraints apply to the agent's action set.
We show a strong positive result --- approximately-optimal contract design reduces to the design of purely algorithmic approximation schemes. In particular we show how to compute, in polynomial time, an approximately-optimal approximately-IC pair $(S^\star,\alpha^\star)$ of action set and contract, 
where the approximation factor matches that of the best-possible algorithmic approximation scheme. 
\footnote{Best-possible assuming $\pnotnp$.} 
This is established through a general approach to transforming positive algorithmic results for classic combinatorial problems, into algorithms for the single-agent contract design variants of these problems under the same combinatorial constraints on the actions. We refer to this method as the \emph{local-to-global} framework. 

We then turn to multi-agent (\textsf{MA}) settings, where multiple agents each choose whether or not to exert effort (a binary action space). We impose budget constraints on the agent (rather than action) set, and again establish a reduction from contract to algorithm design. 

\subsection{Challenges and Impossibilities}

Before detailing our positive results, we highlight the main challenges and impossibilities we face when seeking an answer to \Cref{que:main}. 

\paragraph{Combinatorial contract design.} Our setting of contract design with combinatorial constraints falls within the realm of \emph{combinatorial contract design}. This prolific area of research focuses on the complexity of computing optimal contracts, where these involve selecting a combination of outcomes~\cite{dutting2021complexity}, actions~\cite{dutting2022combinatorial}, agents~\cite{babaioff2006combinatorial,dutting2023multi,castiglioni2023multi}, principals~\cite{alon2024incomplete} --- or several of the above~\cite{DuettingEFK24}. 

A leading approach to combinatorial contract design in the literature encodes the setting's combinatorial structure as a monotone set function $f:2^{[n]} \rightarrow [0,1]$, where $n$ is the number of actions~\cite{dutting2022combinatorial} or agents~\cite{dutting2023multi}. To illustrate the challenges we focus now on $n$ actions. The function $f$ is often referred to as the \emph{reward function}, since it maps a given action set~$S\subseteq[n]$ to the principal's expected reward $f(S)$.
\footnote{The expectation is over the stochasticity of the action set's outcome.}
The algorithm for designing the combinatorial contract is then assumed to have oracle access to $f$, with either just \emph{value} oracle access, or also \emph{demand} oracle access (a stronger assumption originating from auction theory). The level of access assumed usually depends on whether $f$ is \emph{complement-free}, and if so where it falls in the complement-free hierarchy of~\cite{LehmannLN06}: additive $\subsetneq$ gross substitutes $\subsetneq$ submodular $\subsetneq$ XOS $\subsetneq$ subadditive. For XOS and its superclass of subadditive reward functions, demand query access is usually required.

Applying the standard approach to our contract setting with combinatorial constraints from Example~\ref{ex:knapsack-contract}, we get a reward function $f$ that encompasses the budget constraint and is in XOS:

\begin{observation}[Budgeted reward function is XOS]
\label{obs:XOS}
The reward function $f$ in Problem~\ref{ex:knapsack-contract} is
$$
f(S):={\max}_{\{T\subseteq S : T\text{ feasible}\}}p(T),
$$
where $T$ is feasible if $w(T)\le W$; this function is monotone XOS, but not submodular.
\end{observation}

See Appendix~\ref{appx:XOS-proof} for a proof.

\paragraph{Challenge: No demand queries.}
While the standard approach to combinatorial contract design assumes demand oracle access to $f$, this is problematic in the context of \Cref{que:main}, i.e., when studying whether contract design is computationally harder than algorithm design.
The reason is that in our context, giving demand oracle access to the contract design algorithm endows the algorithm with the ability to solve an NP-hard problem --- thus giving it an unfair advantage over the purely combinatorial algorithm. By definition, a demand oracle gets a cost vector~$c$, runs in polynomial time, and returns a set $S\subseteq [n]$ maximizing $f(S)-c(S)$. In the setting of \Cref{ex:knapsack-contract}, this amounts to solving the NP-hard knapsack problem.

More generally, the work of \cite{duetting2025multi} develops a \emph{universal} FPTAS, which applies to \emph{any} single-agent combinatorial contract problem and requires only polynomially-many value and demand queries to~$f$. A consequence of this strong result is that the assumption of demand query access effectively decouples contract design from algorithm design: regardless of the complexity of the algorithmic problem, there is an FPTAS for the corresponding contract problem. Thus, to address \Cref{que:main}, we must diverge from the standard approach and avoid the assumption of demand query access. 

We note that the existence of polytime approximation algorithms for combinatorial contracts without use of demand oracles was posed as an open problem in~\cite{dutting2022combinatorial}.

\paragraph{An impossibility result for single-agent (\textsf{SA}) settings.}

Without demand queries, contract design subject to combinatorial constraints becomes a much harder problem. 
We establish the following impossibility result, which holds already for our simplest family of constraints, namely single-dimensional budget constraints. Note that the impossibility result holds for the single-agent (\textsf{SA}) case only; for the multi-agent (\textsf{MA}) case there is no similar impossibility --- due to the low dimensionality of the action space.

We state the impossibility result using the following notation: Let $(S,\alpha)$ be a pair of feasible action set $S$ and contract~$\alpha$.
Let $u_a(S,\alpha)$ denote the agent's expected utility $\alpha p(S)-c(S)$, and let $u_p(S,\alpha)$ denote the principal's expected utility $(1-\alpha)p(S)$. Given a contract $\alpha$, let $S_{\alpha}$ denote the agent's \emph{best-response} action set, i.e., the action set that maximizes the agent's expected utility subject to feasibility. 
\footnote{We assume standard tie-breaking, by which if multiple action sets yield the same utility for the agent, the one with highest utility for the principal is selected.} 
If $u_a(S, \al)=u_a(S_\al,\al)$ we say $(S, \al)$ is incentive compatible (IC).
Let $\OPT$ be the optimal principal's utility $u_p(S,\alpha)$ over all pairs $(S_\alpha,\alpha)$.
If $u_p(S, \al)=\OPT$ we say $(S, \al)$ is optimal.

Using this terminology, the budgeted \textsf{SA} contract design problem (\Cref{ex:knapsack-contract}) is to find an optimal and IC pair $(S,\alpha)$ where $S$ is feasible. By standard arguments this problem is NP-hard; the following impossibility result strengthens this by ruling out any one-sided approximation. In our context, a one-sided approximation means finding either an optimal approximately-IC pair, or an approximately-optimal IC pair:

\begin{theorem*}[Impossibility; see \cref{hard:principalEpsNP} and \cref{hard:agentEpsNP}]
    Consider the budgeted \textsf{SA} contract design problem (\Cref{ex:knapsack-contract}). For every constant $\eps \in (0,1)$ and $\delta\ge 0$, unless $\pnp$ there is no polynomial-time algorithm that computes $(S,\al)$ such that $S$ is feasible and one of the following holds:
    \begin{itemize}[nosep,noitemsep]
        \item $(S, \al)$ is IC, and $u_p(S, \al)$ is a $(1-\eps)$-approximation to $\OPT$ up to additive $\delta$.
        \footnote{The suffix rules out the transformation of~\cite{dutting2021complexity,zuo2024new}, which allows to obtain IC at a small additive loss to $\OPT$ when the contract setting is normalized.}
        \item $(S, \al)$ is optimal, and $u_a(S, \al)$ is a $(1-\eps)$-approximation to $u_a(S_\al, \al)$.
     \end{itemize}  
\end{theorem*}

The intuition behind this impossibility result is as follows: In \Cref{ex:knapsack-contract}, once the contract $\alpha$ is fixed, then from the agent's point of view he can take any combination of $n$ actions that fit into the budget $W$, with a value of $\alpha p_i-c_i$ for every action $i$. The agent thus faces an NP-hard knapsack problem, and without a demand oracle there is no hope (under standard complexity assumptions) to identify in polynomial time whether a suggested action set $S$ is a best response in order to achieve perfect IC. A similar argument holds for the principal, ruling out perfect optimality. On the positive side, verifying \emph{approximate}-optimality is tractable. Moreover, given the well-known FPTAS for knapsack, it is reasonable to assume that neither party will settle for less than a $(1-\epsilon)$-approximation. 
This motivates \cref{dfn:approximationSchemes} below.

\subsection{Our Results} 

We give here an overview of our results; for details of our technical contribution see \cref{sec:overview}.

\paragraph{Approximation scheme for single-agent (\textsf{SA}) settings.}

To go beyond the impossibility result, we formulate a notion of two-sided approximation in which both principal and agent are guaranteed near-optimal utility. 

\begin{definition}
\label{dfn:approximationSchemes}
    An algorithm $\alg$ is an \emph{approximation scheme for \textsf{SA} contract design} if, given a setting with $n$ actions and an error bound 
    $\eps>0$, it returns a feasible solution $(S, \al)$ satisfying: 
    \begin{equation}
        u_a(S, \al) \geq (1-\eps) \cdot
        u_a(S_{\al}, \al) \quad \textrm{and} 
        \quad u_p(S, \al) \geq (1-\eps) \cdot \OPT.  u_p(S_{\al'}, \al').
        \label{eq:approx-scheme}
    \end{equation}  
\end{definition}

While a straightforward generalization of Definition~\ref{dfn:approximationSchemes} allows separate error bounds $\eps,\eps' > 0$ for the principal and agent, for notational simplicity we use the same error bound throughout (this is without loss given our results). 

The two-sided nature of the approximation scheme in Definition~\ref{dfn:approximationSchemes} is captured by Equation~\eqref{eq:approx-scheme}, where the first inequality is an $\eps$-IC requirement. 
The notion of $\eps$-IC is well-studied in equilibrium computation (e.g.~\cite{Daskalakis11,GoosR18}), algorithmic mechanism design (e.g.~\cite{bei2011bayesian,cai2013mechanism,Weinberg14,GonczarowskiW21,cai2021efficient}), and algorithmic contract design (e.g.~\cite{dutting2021complexity,zuo2024new,BacchiocchiGC0025}).
\footnote{The focus of~\cite{BacchiocchiGC0025} is on robustness making it orthogonal to our exploration.}
As explained by Balseiro et al.~\cite{BalseiroBC24}, in practice agents may be ``limited by their computational capabilities [...] and may not be able to perfectly optimize their response to a mechanism. This has motivated the introduction of approximate incentive compatibility (IC) as an appealing solution concept for practical mechanism design.'' 
Milgrom~\cite{Milgrom11} highlights that understanding notions of approximate IC and their implications for performance is one of the four critical issues in the practice of market design (see also~\cite{DayMilgrom08,Budish10,KojimaPathak09}). 
There is also behavioral motivation for $\eps$-IC, including bias towards truthfulness in mechanism design and towards fulfilling the principal's instructions in contract design, even at a small personal loss. 

Our focus is on \emph{relative} (or \emph{multiplicative}) $\eps$-IC, which has the advantage of being scale invariant.
\footnote{Daskalakis~\cite{Daskalakis11} in his work on relative $\eps$-NE (Nash equilibrium) gives the following example to illustrate this: ``Imagine a play of some game in which a player is gaining an expected payoﬀ of \$1M from her current strategy, but could improve her payoff to \$1.1M via some other strategy. Compare this situation to a play of the same game where the player's payoff is -\$50k and could become \$50k via a different strategy [...] If one subscribes to the theory of diminishing marginal utility of wealth [...] the two situations could be very different, making the relative notion of approximation more appropriate''.} 
A multiplicative guarantee is appropriate in our context of approximation algorithms: In \cref{ex:knapsack-contract}, both parties face knapsack problems. An additive approximation to the knapsack problem is NP-hard to achieve in general~\cite[Theorem 2.5.2]{kellerer2004basic}, and while it becomes achievable for bounded values, it is then strictly weaker compared to the multiplicative guarantee of the known FPTAS (see \cref{appx:add-approx}). \cref{dfn:approximationSchemes} thus requires a multiplicative approximation for both principal and agent.

\paragraph{Results for single-agent (\textsf{SA}) settings.}

An approximation scheme $\alg$ according to \cref{dfn:approximationSchemes} is classified by its running time; $\alg$ is
\begin{itemize}[nosep, noitemsep]
\item a {\em PTAS} if its runtime is $O( n^{h(1/\eps)})$ for some function $h : \dr_+ \to \dr_+$;
\item an {\em EPTAS} if its runtime is $h (1/\eps) \cdot poly(n)$ for some exponential function $h : \dr_+ \to \dr_+$;
\item an {\em FPTAS} if its runtime is polynomial in both $n$ and $1/\eps$.  
\end{itemize}

\noindent Using this classification, our results for \textsf{SA} contract settings appear in Table~\ref{tab:results-SA}. Interestingly, all results match the best-known guarantees for their algorithmic counterparts, and almost all results are tight (up to the open question for budgeted matching). 
All guarantees are obtained through our local-to-global framework, demonstrating its flexibility. The framework enables us to bring cutting-edge techniques from approximation algorithms into the realm of contract design. By utilizing relax and round-type optimization, it takes a different approach than current techniques for combinatorial contract design, enriching our toolbox.

\begin{table}[t]
\centering
\renewcommand{\arraystretch}{1.2}
\begin{NiceTabular}{cccc}[hvlines]
\toprule
\multicolumn{1}{c}{\bf Single-Agent Problem} & \bf FPTAS & \bf EPTAS & \bf PTAS \\
\midrule

\hline

\makecell{Budgeted \\ ($\sask$)}  & \markwithref{\checkmark}{thm:SASK-FPTAS} & \checkmark & \checkmark \\

\makecell{Multi-Budgeted \\ ($\samk$)} & \xmark & \markwithref{\xmark}{hard:SAMK-EPTAS} & \markwithref{\checkmark}{thm:SAMK-PTAS} \\

\makecell{Budgeted Matroid \\($\bmic$)} & \markwithref{\xmark}{hard:BMIC-FPTAS} & \markwithref{\checkmark}{thm:BMICalgo} & \checkmark \\

\makecell{Budgeted Matching \\ ($\match$)} & Open & \markwithref{\checkmark}{thm:BMICalgo} & \checkmark \\

\bottomrule
\end{NiceTabular}
\caption{Summary of our results for single-agent (\textsf{SA}) contracts with combinatorial constraints. A cell with no reference to a theorem indicates the result follows from another result in the same row.}
\label{tab:results-SA}
\end{table}

\begin{table}[t]
\centering
\renewcommand{\arraystretch}{1.2}
\begin{NiceTabular}{cccc}[hvlines]
\toprule
\multicolumn{1}{c}{\bf Multi-Agent Problem} & \bf FPTAS & \bf EPTAS & \bf PTAS \\
\midrule

\hline

\makecell{Budgeted\\ ($\mask$)} & \markwithref{\checkmark}{thm:MASKalgo} & \checkmark & \checkmark \\

\makecell{Multi-Budgeted\\ ($\mamk$)} & \xmark & \markwithref{\xmark}{hard:MAMK-EPTAS} & \markwithref{\checkmark}{thm:MAMKalgo} \\

\bottomrule
\end{NiceTabular}
\caption{Summary of our results for multi-agent (\textsf{MA}) contracts with combinatorial constraints.} 
\label{tab:results-MA}
\end{table}

\paragraph{Results for multi-agent (\textsf{MA}) settings.}
We adopt the multi-agent binary-effort model of~\cite{dutting2023multi} (see \cref{sec:MASK}).
Each agent $i$ decides whether to exert effort.
If a set $S \subseteq \sa$ exerts effort, the principal obtains reward $f(S)$ and each agent
$i \in S$ incurs cost $c_i \ge 0$.
The principal offers a nonnegative contract $\alpha \in \drz^n$, where $\alpha_i$ denotes agent $i$’s share of the realized reward.
We require \emph{stability}, meaning that $S$ is an exact Nash equilibrium under $\alpha$.
For any fixed $S$, there exists an optimal contract inducing it, so the principal’s problem reduces to selecting a feasible set $S$
that maximizes
$g(S) = \left(1 - \sum_{i \in S} \frac{c_i}{f(S) - f(S \setminus \{i\})} \right) \cdot f(S)$.
An approximation algorithm returns a feasible pair $(S,\alpha)$ such that $S$ is stable under $\alpha$
and $g(S)$ approximates the optimum over all feasible equilibrium outcomes.

In our setting, each agent $i$ additionally has a size $w_i \in \drz$, and the principal faces a total budget $W \in \drz$.
We require that the induced set $S$ be both stable and budget-feasible, i.e.,
$\sum_{i \in S} w_i \le W$.
Accordingly, the principal maximizes $g(S)$ over all equilibrium-inducible sets satisfying this constraint.

The impossibility for \textsf{SA} contract settings with multiple actions does not apply to \textsf{MA} contract settings with binary actions. Thus, for \textsf{MA} settings there is hope to establish approximation schemes that return a feasible solution $(S,\al)$, where $\al$ is near-optimal for the principal and $S$ forms an exact Nash equilibrium for the agents.
However, currently an FPTAS is known only for the vanilla case of an additive function $f$~\cite{dutting2023multi} (in this context $f(S)$ is the principal's reward if the set of agents $S$ exerts effort).
\footnote{There is also an additive PTAS for an instructive special case of graph-based supermodular functions~\cite{deo2024supermodular}.} 
For submodular $f$, a PTAS is ruled out even with demand queries~\cite{ezra2024approximability,duetting2025multi}. 

As noted above (see \cref{obs:XOS}), when viewed through the lens of classical combinatorial contracts, our results yield the first approximation schemes for \textsf{MA} settings that extend beyond additive functions.
Previously, the only approximation scheme known for \textsf{MA} settings was an FPTAS for additive reward functions \cite{dutting2023multi}. 
Our schemes handle budgeted and even multi-budgeted reward functions, which reside in XOS; see Table~\ref{tab:results-MA}.

While we illustrate the application of our method to budget constraints, it can be naturally extended to a broader class of combinatorial constraints.

\subsection{Related Work}
\label{subsec:related-work}

\paragraph{Algorithmic Contract Theory.}

Contracts are an essential economic tool for facilitating delegation and collaboration \cite{holmstrom1979moral,grossman1992analysis}. A contract is an incentive scheme, designed to bridge the informational gap between an agent who exerts effort, and a principal who observes only the noisy outcome of this effort. 
In recent years, contracts are increasingly studied from a computational perspective, starting with works on multiple agents~\cite{babaioff2006combinatorial}, repeated settings~\cite{HoSV16}, and the power of simple linear contracts~\cite{dutting2019simple}. The literature on algorithmic contract design is expanding in directions like learning~\cite{CaiDP15,ChenCDH24}, Bayesian analysis~\cite{GuruganeshSW21,AlonDT21}, or combinatorial inspections~\cite{EzraLR24}; see~\cite{dutting2024algorithmic} for a survey.

Most closely related to our results are works on contract design under explicit combinatorial constraints. Gong et al.~\cite{GongGLY23} study multi-agent contracts with a \emph{cardinality} constraint on the number of active agents, while D\"utting et al.~\cite{dutting2024combinatorial} consider \emph{bipartite matching} constraints in single-agent settings. The latter work gives an optimal algorithm for the special case of \emph{one-sided costs}—where costs are incurred only on one side of the matching—leveraging the existence of a polynomial-time demand oracle, and shows that in the general case the structure becomes significantly more complex. In contrast, in Section~\ref{subsec:matching_LP} we study \emph{general matching} constraints under a budget in the multi-agent setting and focus on approximation algorithms due to inherent computational hardness. Finally, the work of~\cite{DFGR24} considers single-agent settings with submodular reward functions and addresses a question orthogonal to ours: whether the computational hardness of contract optimization is merely an artifact of the difficulty of answering demand queries. They answer this in the negative, showing that even with demand-query access, an exponential number of queries is required.

\paragraph{Knapsack (Budgeted Maximization) Problems.} 

The knapsack problem, famously established as NP-hard by Karp \cite{karp2010reducibility}, remains a cornerstone in computational complexity. While a pseudo-polynomial time dynamic programming solution is known \cite{kellerer2004knapsack}, the development of efficient approximation algorithms has remained a key area of focus. Ibarra and Kim \cite{ibarra1975fast} introduce the first FPTAS for the knapsack problem, employing a rounding technique and dynamic programming. Subsequent research has led to the development of several refined and faster FPTAS (see, e.g., \cite{kellerer2004knapsack,deng2023approximating,chen2024nearly}).
For the multi-dimensional knapsack problem, assuming a constant number of dimensions $d > 1$, Frieze and
Clarke \cite{frieze1984approximation} develop a PTAS. Caprara et al.~\cite{caprara2000approximation} subsequently refine their approach to achieve improved runtime. Kulik and Shachnai \cite{kulik2010there} show there is no EPTAS for this problem already when $d=2$, unless $\pnp$.

There is an extensive line of research on problems generalizing the classic knapsack problem with an additional feasibility constraint, such as a matroid or a matching constraint \cite{GrZe10,BBGS11,chekuri2011multi}, or even independence in graphs \cite{pferschy2009knapsack,doron2025tight}. Berger et al.~\cite{BBGS11} obtain a PTAS for budgeted matching and budgeted matroid intersection based on a Lagrangian relaxation of the problem. Recently, these results were improved to an EPTAS; first for one matroid \cite{doron2023eptas} and then for matroid intersection and matching constraints \cite{AradKS23icalp}, using the representative set technique that was later generalized to other more general constraints \cite{doron2023IPEC,DGK24}. For budgeted maximization on matroids an FPTAS is ruled out \cite{doron2024lower}, and the question is open for budgeted matching. 

While for budgeted matroids an FPTAS is ruled out, there are FPTAS for special cases such as knapsack with a cardinality constraint \cite{caprara2000approximation}, {\em multiple-choice} knapsack \cite{kellerer2004multiple}, and more generally budgeted maximization on a laminar matroid \cite{DKS23laminar}. For more than one budget constraint, there are PTAS for multi-budgeted matching and multi-budgeted matroid intersection \cite{GrZe10,chekuri2011multi}. Finally, there are also works exploring similar variants such as a covering rather than a budgeted constraint along with a matroid constraint \cite{chakaravarthy2013knapsack}, or restricting a solution to be a {\em basis}; for the latter, there is an EPTAS for spanning tree constraint \cite{hassin2004efficient} and a multi-criteria FPTAS \cite{hong2004fully}.

\paragraph{Budgeted Economic Design Problems.}

The study of auctions (rather than contracts) with a combinatorial budget constraint, a.k.a.~\emph{knapsack auctions}, is summarized by Roughgarden in~\cite[Chapter 4]{Roughgarden16} from a computational perspective, and by Milgrom in~\cite{Milgrom21} from an economic perspective. Beyond their theoretic appeal, knapsack auctions are of practical importance with applications like fitting ads onto a webpage or into the Super Bowl commercial break.

We remark that our problem of contract design subject to a budget constraint is unrelated to the settings studied in~\cite{budget-contracts,SaigET24,AharoniHT25,feldman2025budget}. In these works, the budget limits the principal's payment to the agent(s), often with respect to the total project's value. In our work, the budget is a constraint on the agent's set of actions, or alternatively on which sets of agents are feasible (independently of the value they produce).

\paragraph{Approximate demand.}
Our local-to-global framework effectively implements approximate demand queries for the classes of combinatorial constraints that we study. Prior work on combinatorial auctions developed a weaker notion of approximate demand queries to a set function $f$: Given a pricing $c$, the query returns a set $S$ such that $f(S)-c(S) \ge a\cdot f(S')-c(S')$ for all $S' \ne S$, where $a$ is the approximation factor \cite{harshaw2019submodular, sviridenko2017optimal}. In particular, submodular functions have such approximate demand oracles with $a=1-1/e$. An important difference is that $S$ need not approximately-maximize the agent's utility, which is what we seek in this work.

\paragraph{Organization.}
\cref{sec:Preliminaries} formally introduces our model. 
In \cref{sec:overview}
we present our local-to-global framework and our main technical contributions.
Section~\ref{sec:global_phase} gives a detailed outline of the global phase in our framework. In \cref{sec:SAMK} we instantiate our 
framework to get a PTAS for the multi-budgeted single-agent ($\samk$) problem. In \cref{sec:BMIC} we develop our EPTAS for the budgeted matroid and matching single-agent ($\bmic$ and $\match$) problems.
Section~\ref{sec:Discussion} concludes.
Our remaining results as given in Tables~\ref{tab:results-SA} and \ref{tab:results-MA}, are deferred to Appendices~\ref{appendix:Hardness}-\ref{sec:SASK}.

\section{Preliminaries}
\label{sec:Preliminaries}

\paragraph{Single-agent setting.} 

Let $\mathcal{A}$ be a finite set of actions. Each action $i\in\mathcal{A}$ is associated with a \emph{success contribution} (profit) $p_i\in\dr_{\ge 0}$ and a \emph{cost} $c_i\in\drz$. For any set $S\subseteq\mathcal{A}$, let $p(S) = \sum_{i\in S} p_i$ denote the total success contribution.
A linear contract is parameterized by $\alpha\in(0,1)$: for every unit of realized success, the agent receives an $\alpha$-fraction and the principal retains the remaining $(1-\alpha)$-fraction. Accordingly, if the agent selects $S$ under contract $\alpha$, then the agent’s utility is $u_a(S,\alpha) = \sum_{i\in S}(\alpha p_i-c_i)$ and the principal’s utility is $u_p(S,\alpha) = (1-\alpha)\,p(S)$. For convenience, define the agent’s per-action utility contribution under $\alpha$ as
$q_i = \alpha p_i-c_i$.

In our paper, each action $i$ is further associated with $d$ nonnegative prices $w_{i,1},\ldots,w_{i,d}\in\drz$, and the instance specifies budgets $W_1,\ldots,W_d\in\drz$ (for a fixed constant $d$). For each type of constraint, let $\sv$ denote the family of feasible subsets $S$.

As mentioned above, Theorems~\ref{hard:principalEpsNP} and~\ref{hard:agentEpsNP} establish that the single-agent contract problem is NP-hard, even when approximating either the principal's utility or the agent's utility independently. This necessitates allowing for intermediate, non-optimal solutions on the agent’s part. We begin by introducing some definitions that formalize our requirements for a solution with respect to the agent's utility.

\begin{definition}[IC]
\label{dfn:IC}
    A solution $(S, \al)$ is incentive compatible (IC) if the agent has no incentive to deviate from their chosen action set $S$ under the given contract $\al$. Formally, for any $S' \in \sv$, it holds that $u_a(S, \al) \geq u_a(S', \al)$.
\end{definition}

\begin{definition}[$\eps$-IC]
\label{dfn:epsIC}
    A solution $(S, \al)$ is $\eps$-incentive compatible ($\eps$-IC) if for any $S' \in \sv$, it holds that
    $u_a(S, \al) \geq (1-\eps) \cdot u_a(S', \al)$.
\end{definition}

Given $\varepsilon > 0$, the principal determines both the contract parameter $\alpha$ and the set of actions $S$. This variant, in which the principal specifies both the contract and the agent’s intended response, has been studied in several prior works on algorithmic contract design, including~\cite{dutting2021complexity, AlonDT21, dutting2024ambiguous}. The agent’s role is limited to accepting the proposed set $S$ if it satisfies the $\varepsilon$-incentive compatibility ($\varepsilon$-IC) constraint, or rejecting it otherwise.

The tie-breaking rule is implicit in this model: in the event of a tie in the agent’s utility, the principal is assumed to choose a set of actions that favors their own objective. Notably, setting $\varepsilon = 0$ recovers the original model, where the agent is required to choose an optimal response. Under this condition, the principal must select the utility-maximizing set $S_\alpha$, subject to the tie-breaking rule.

Based on the above discussion, we can formulate our problem with respect to a feasibility family $\sv$ in the following way:
\begin{equation}
\label{prob:general}
\begin{aligned}
    \max_{\al \in (0,1), S \se \sa} \quad & u_p(S,\al) \\
    \textrm{subject to} \quad & u_a(S,\al) \geq (1-\eps) \cdot u_a(S', \al) \quad \forall S' \in \sv, \\
    & S \in \sv.
\end{aligned}
\end{equation}
Using this formulation we obtain the following problems:
\begin{itemize}
    \item \emph{Multi-budget} ($\samk$): $\sv=\{S\subseteq\sa: \sum_{i\in S} w_{i,j}\le W_j, \forall j\in\{1,\ldots,d\}\}$ (and $\sask$ for $d=1$).

    \item \emph{Budgeted matroid} ($\bmic$): given a matroid $\mathcal{M}=(\sa,\si)$, $\sv=\{S\in\si:\ \sum_{i\in S} w_i\le W\}$.

    \item \emph{Budgeted matching} ($\match$): given an undirected graph $G=(V,\sa)$ (actions are the edges), we define $\sv=\{S\subseteq\sa:\ S\in\Gamma(G),\ \sum_{e\in S} w_e\le W\}$, where $\Gamma(G)$ is the family of all matchings in $G$.
\end{itemize}

For a comprehensive treatment of budgeted (knapsack) maximization, we refer the reader to~\cite{kellerer2004knapsack}. For an overview of matroid theory, see, e.g., \cite{oxley2006matroid}.

\section{Overview of Techniques}
\label{sec:overview}
Below we overview the techniques used for deriving our results.

\subsection{A Unifying Local-Global Framework}

A fundamental difference between budgeted maximization and contract design for a budgeted agent is that in the latter problem, there are two objective functions rather than one, namely the expected utilities of both principal and agent. 
To address this, we develop a general, two-phase algorithmic framework comprising of \textit{local} and \textit{global} approximation phases. 
Notably, neither phase requires a demand oracle.

\paragraph{Local phase.}
Implementing the local phase without demand queries poses a key challenge. In a standard two-stage approach, for a fixed contract parameter $\alpha$, one first computes the agent’s best response $u_a(S,\alpha)$ and then selects the value of $\alpha$ that maximizes the resulting principal utility $u_p(S,\alpha)$ over a suitable discretization of $\alpha$. When exact best responses are available, this strategy can yield a good approximation to the principal’s optimal utility.

This reasoning, however, breaks down in our oracle-free setting, where only approximate best responses can be computed due to the NP-hardness of the underlying problem. For a given $\alpha$, there may exist multiple sets $S_1,S_2 \in \mathcal{S}_\mathcal{V}$ that are $\varepsilon$-IC, while only $S_1$ achieves a $(1-\varepsilon)$-approximation to the principal’s best attainable utility at $\alpha$. An algorithm that merely approximates $\max_{T \in \mathcal{S}_\mathcal{V}} u_a(T,\alpha)$ may therefore return $S_2$, resulting in a principal utility that is arbitrarily far from optimal, despite satisfying the agent’s incentive constraint.

In the remainder of this section, we discuss how to address this challenge under several natural constraints, including multi-dimensional budgets and budgeted matroid and matching constraints (see Appendix~\ref{sec:SASK} for the case of single-dimensional budgets).

\paragraph{Global phase.}
The core new insight behind this framework is that, given a \emph{local} approximation algorithm $-$ that is, one that guarantees a $(1 - \eps)$-approximation to the agent’s utility and at least $(1 - \eps) R$ utility to the principal for some threshold $R \in \drz$, under a fixed contract $\alpha$ $-$ this guarantee can be systematically extended to yield a global approximation across all possible contracts.
This is called the \emph{global} phase, and is achieved by evaluating a polynomial-sized set of candidate thresholds and contracts, ensuring that one pair of these candidates will satisfy the $(1-\eps)$-approximation guarantee for the principal among all contracts, while simultaneously satisfying the $\eps$-IC constraint.
Importantly, the global phase is generic and can be applied in the same manner to all variants of the single agent setting studied in this paper.

\paragraph{Interpretation.}
One way to interpret our framework is as follow. The local phase essentially achieves a strengthened version of an approximate demand query. 
Approximate demand queries as previously defined in the literature (see \cref{subsec:related-work}) do \emph{not} provide strong enough guarantees for our purpose.
We require an approximate demand query that simultaneously approximates both the agent's and the principal's utilities. Achieving such oracles in polynomial-time when the combinatorial constraint yields an NP-hard optimization problem is a major challenge. For example, such oracles are infeasible for certain natural classes of rewards functions like submodular functions~\cite{FeigeJ14,CaiTW20}. 
The difficulty stems directly from having two objective functions. 

Once the local phase is complete, i.e., a two-sided approximate demand oracle is designed, the global phase can be applied.
The local phase utilizes approximation schemes for the underlying combinatorial problems. Thus, our framework essentially shows how to translate approximation schemes into contracts, and spotlights two-sided approximate demand as the crucial component. This answers our motivating question of when do incentive constraints \emph{not} have a significant affect on approximation guarantees (when approximate demand oracles are tractable).

\subsection{Main Technical Contribution: Budgeted Matroid and Budgeted Matching} 

Our method for solving $\bmic$ and $\match$ (see \cref{sec:BMIC}) generalizes and improves the {\em representative set} technique of \cite{doron2023eptas,DGK24,doron2023IPEC}. Given a set of items, the technique finds a small {\em representative set} $-$
the high profit items of a nearly optimal solution. This allows to obtain an EPTAS by: $(i)$ enumerating over subsets of the representative set, and then $(ii)$ adding small profit items, e.g., by solving an LP relaxation of a sub-instance including only small-profit items. Rounding the LP solution may cause only a small harm to the approximation guarantee.

As opposed to previous settings in which the technique was used, $\bmic$ and $\match$ require to approximate both the principal's and the agent's utility, i.e., two objectives. This hinders the direct application of the classic representative set technique, which heavily builds on the attribute that for two items of roughly the same profit $-$ one has a smaller or equal weight and can always be considered more cost-effective than the other. In contrast, our setting does not satisfy this monotonicity property as the utilities of the agent and the principal are unrelated.

Our generalization of the representative set technique leads to two key advances that may be of independent interest. First, it supports the simultaneous optimization of multiple objectives, achieving a $(1 - \eps)$-approximation for each. Second, it constructs a representative set of size $O\left(\eps^{-6}\right)$, a significant improvement over the previous $\Omega\left(\eps^{-(1/\varepsilon)}\right)$ bound of~\cite{AradKS23icalp}. As a result, we obtain faster running times for existing EPTASs for budgeted matroid and matching problems, reducing them from $2^{O \left(\eps^{-2} \cdot \log \left( \frac{1}{\eps} \right) \right)} \cdot poly(n)$ to $2^{O \left(\eps^{-1} \cdot \log \left( \frac{1}{\eps} \right) \right)} \cdot poly(n)$ (see the proof of \cref{lemma:SRSisRepset}).

We note that \cref{hard:BMIC-FPTAS} rules out the existence of an FPTAS for $\bmic$, under standard complexity assumptions.
Furthermore, the existence of an FPTAS for classical budgeted matching  remains an open question (see, e.g.,~\cite{doron2023eptas,doron2023IPEC}). This extends to the contract-design setting. Thus, we focus on developing an EPTAS instead.

\paragraph{Improved runtime via representative sets.}
A natural approach to solving $\bmic$ and $\match$ is to generalize the classical technique for constrained budgeted maximization of~\cite{schafer2015budgeted} to our two-utility setting. The idea is to guess the set of \emph{large} actions with respect to both $p_i$ and $q_i$, and then complete the solution by solving and rounding an LP relaxation over the remaining low-value actions.
While conceptually straightforward, this approach yields only a PTAS, since the brute-force enumeration over candidate sets of large actions scales poorly with $|\mathcal{A}|$.

Our main technical contribution is a refinement of this approach that substantially improves the running time. We construct a small \emph{representative set} $T \subseteq \mathcal{A}$, whose size depends only on $\varepsilon$, with the following guarantee: there exists a $(1-\varepsilon)$-approximate $\varepsilon$-IC solution $S$ whose \emph{large} actions are all contained in $T$. As a consequence, it suffices to enumerate subsets of large actions drawn from $T$; the remaining low-impact actions can then be selected by solving and rounding an LP relaxation, yielding an EPTAS.

To construct the representative set, we introduce the notion of \emph{exchange sets}: small collections of actions with the property that any high-impact action used by a candidate solution can be replaced by a \emph{similar} action from the exchange set, incurring only a negligible loss in both objectives while preserving feasibility (matroid independence or matching). We build exchange sets within coarse \emph{profit classes}, which group actions according to the magnitude of their $(p_i,q_i)$ values, and take their union over all classes.

\paragraph{Completing the solution via LP rounding.}
After constructing the representative set, we address the remaining low-impact actions—those with small $p_i$ and $q_i$ values. We design separate algorithms for the matroid and matching constraints, both relying on a linear programming relaxation to incorporate these actions while maintaining feasibility and preserving the overall approximation guarantee.
\begin{itemize}
    \item For the matroid constraint, we show that, despite restricting enumeration to a representative set rather than the full action space, it is still possible to identify a small collection of actions that are critical to the optimal solution. Once this set is correctly guessed, we solve a linear program over the matroid polytope. The resulting solution contains only a small number of fractionally selected actions. Because this phase involves only low-value items, discarding the fractional components incurs only a negligible loss, enabling us to obtain an EPTAS.

    \item For the matching constraint, the algorithm follows a structure similar to the matroid case. We first identify a small set of critical actions within the representative set. We then focus on actions with sufficiently small profit and cost values and formulate a linear program over the induced subgraph. The resulting fractional solution is rounded using the randomized algorithm of~\cite{chekuri2011multi}, yielding a matching with desirable structural properties. By carefully tuning the rounding parameters and the thresholds defining small values, we show that this matching can be combined with the initial guess to obtain a feasible solution that, with high probability, simultaneously approximates the optimal principal and agent utilities.
\end{itemize}
The following lemma captures the core guarantee provided by the local phase.
\begin{lemma}
\label{lemma:approxBMI}
Given a contract $\alpha \in (0,1)$ and a threshold $R \in \mathbb{R}_{\geq 0}$, if there exists an optimal solution $S_\alpha$ such that $u_p(S_\alpha, \alpha) \geq R$, then there exists a (randomized)
algorithm that runs in time $2^{O \left(\eps^{-1} \cdot \log \left( \frac{1}{\eps} \right) \right)} \cdot \mathrm{poly}(n)$, a feasible set of actions $S \in \mathcal{V}$ such that (with high
probability) $u_p(S, \alpha) \geq (1 - \eps) R$ and the $\eps$-IC constraint is satisfied.
\end{lemma}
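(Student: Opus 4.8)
The plan is to realize the local phase as a two-stage construction: first build a \emph{representative set} $T$ of cardinality $O(\eps^{-O(1)})$ (ultimately $O(\eps^{-6})$), and then complete a feasible solution out of $T$ together with small-profit actions via an LP-rounding step tailored to the matroid (resp.\ matching) constraint. Throughout, fix the contract $\al$ and the threshold $R$, and assume an optimal solution $S_\al$ with $u_p(S_\al,\al)\ge R$ exists. Write $p(\cdot)$ and $q(\cdot)$ for the two "profits" (agent reward contribution and a suitably scaled principal-oriented quantity), and recall $u_a(S,\al)=\al p(S)-c(S)$, $u_p(S,\al)=(1-\al)p(S)$. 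We may assume access to a $\tfrac12$-approximation $\beta_q$ to the optimal agent utility (Step~\ref{step:halfApprox} of \cref{alg:BMICmain}, via the EPTAS of~\cite{AradKS23icalp}), so after a geometric guessing loop over $i\in\{0,\dots,\lceil\eps^{-1}\rceil\}$ and over $\beta_p\in C$ we may assume the target values $u_a(S_\al,\al)\in[2\beta_q(1-12\eps)^{i+1},2\beta_q(1-12\eps)^i]$ and $u_p(S_\al,\al)\approx\beta_p$ are known up to a $(1\pm O(\eps))$ factor; there are only $\mathrm{poly}(n)$ such guesses.

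\textbf{Step 1 (representative set).} Partition the actions into \emph{profit classes} according to the (discretized, logarithmically-bucketed) pair $(p_i,q_i)$, so that within a class all actions contribute essentially the same to both objectives; only $O(\eps^{-2})$ classes can contain a ``high-profit'' action of $S_\al$, because each such action carries an $\Omega(\eps)$-fraction of one of the (bounded) target values. For each relevant pair $(r,t)\in\sigma$ (the value/cardinality profile of high-profit actions in a class) we invoke the black-box ``exchange set'' procedure of Huang and Ward~\cite{huang2023fpt}: this yields a small set $X$ such that any single high-profit action of $S_\al$ in that class can be swapped for a member of $X$ while preserving feasibility of $S_\al$ in the matroid/matching and changing each objective by at most a negligible amount. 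Taking $T=\bigcup X$ over all classes and profiles gives $|T|=2^{O(\eps^{-1}\log(1/\eps))}$ at worst for the enumeration, but crucially a representative set of size $O(\eps^{-6})$ that still contains a ``stand-in'' for every high-profit action of some near-optimal solution $S'$ with $u_a(S',\al)\ge u_a(S_\al,\al)-\eps$ and $u_p(S',\al)\ge(1-\eps)R$. The key lemma here is that one can iteratively replace high-profit actions of $S_\al$ by representatives, one class at a time, and each replacement loses only $O(\eps/\text{(number of classes)})$ in each objective, so the total loss is $O(\eps)$; monotonicity of the old technique is replaced by the exchange-set guarantee, which is what lets us handle two unrelated objectives simultaneously.

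\textbf{Step 2 (LP completion).} Enumerate over subsets $D\subseteq T$ with $|D|\le 2/\eps$ and $D\in\sv$; for the correct guess, $D$ is exactly the set of high-profit representatives of $S'$. Fix such a $D$. For the \emph{matroid} case, solve $\mathrm{LP}(D,\beta_q,R)$: maximize the agent's utility over the matroid polytope restricted to $D$ plus the low-profit actions $E(\beta_q,R)\setminus D$, with $D$ forced in, subject to a linear constraint $(1-\al)\sum_i p_i x_i\ge R$ enforcing the principal's threshold, and an agent-utility lower bound. A basic feasible solution of a matroid polytope intersected with two extra linear constraints has at most two fractional coordinates; since these lie among low-profit actions, rounding them down costs $O(\eps)$ in each objective, and we set $K=\{i\in E(\beta_q,R)\setminus D\mid x_i=1\}\cup D$ (line~\ref{step:BMIC-defS} of \cref{alg:innerLPMatroid}). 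For the \emph{matching} case, we instead form $\mathrm{LP}(D,R)$ over the subgraph induced by low-profit edges, obtain a basic optimal $x$, and apply the dependent-rounding algorithm of Chekuri, Vondr\'ak and Zenklusen~\cite{chekuri2011multi}: it returns an integral matching $\bar x$ with $\mathbb{E}[\bar x]=x$ and strong concentration for the two linear objectives (and a negligible violation of one side that we absorb by pre-shrinking the thresholds by $(1-\eps)$), so with constant probability (boostable by repetition) the matching $M\cup D$ is feasible and meets both targets up to $(1\pm O(\eps))$. In both cases, take the best $K$ over all $D$.

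\textbf{Wrap-up and the main obstacle.} Over all guesses of $(i,\beta_p,D,(r,t)$-profiles$)$, one choice reconstructs a set $S$ with $u_a(S,\al)\ge u_a(S_\al,\al)-\eps$ (the $\eps$-IC constraint) and $u_p(S,\al)\ge(1-\eps)R$; the check in line~\ref{step:exchangeIf} of \cref{alg:BMICmain} retains it. Rescaling $\eps$ by a constant absorbs the $O(\eps)$ slack. The total running time is $\mathrm{poly}(n)$ for the outer loops, times $|T|^{O(1/\eps)}=2^{O(\eps^{-1}\log(1/\eps))}$ for the subset enumeration, times $\mathrm{poly}(n)$ for each LP solve and rounding, giving the claimed $2^{O(\eps^{-1}\log(1/\eps))}\cdot\mathrm{poly}(n)$. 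The main obstacle, and the place where real work is needed, is Step~1: proving that exchange sets across profit classes compose, i.e.\ that the one-action-at-a-time replacement argument keeps $S_\al$ feasible in the matroid (matching) \emph{and} loses only $O(\eps)$ \emph{per objective in aggregate} despite there being $\Theta(\eps^{-2})$ classes — this is precisely where the classical representative-set monotonicity fails and where the $O(\eps^{-6})$ size bound must be fought for. The secondary delicate point is the matching case, where the rounding of~\cite{chekuri2011multi} only gives concentration, not a hard guarantee, so the thresholds, the gap $12\eps$ in line~\ref{step:exchangeIf}, and the number of independent trials must be tuned together to drive the failure probability down while keeping all approximation losses $O(\eps)$.
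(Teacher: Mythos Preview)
Your proposal is correct and follows essentially the same approach as the paper: build a representative set $T$ of size $O(\eps^{-6})$ by taking the union of Huang--Ward exchange sets over the $O(\eps^{-4})$ two-dimensional profit classes, then enumerate subsets $D\subseteq T$ of size $\le 2/\eps$ and complete each via an LP over the low-profit actions (rounded directly for the matroid, via \cite{chekuri2011multi} for the matching). Two minor corrections: the matroid LP has at most $4$ fractional entries (not $2$), per Theorem~3 of \cite{GrZe10}; and the per-replacement loss in Step~1 is a $(1-\eps)$ multiplicative factor on that single item's contribution (since items in a class are within $(1-\eps)$ of each other), not $O(\eps/\text{number of classes})$---the aggregate $(1-O(\eps))$ bound follows because the replacements act on disjoint items whose values sum to at most the total high-profit value.
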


\subsection{Multi-Agent Setting}
For the multi-agent setting (see Sections \ref{sec:MASK}, \ref{sec:MAMK}), 
we introduce a general framework that reduces constrained multi-agent instances to a polynomial-size family of structured subproblems. While we describe the framework for single- and multi-dimensional budgets, it extends naturally to other combinatorial feasibility constraints. The framework is modular: any approximation algorithm for the induced subproblems can be invoked only polynomially many times and lifted, via a black-box guarantee, to a near-optimal solution for the original instance.

We then instantiate the framework for the $\mask$ objective. \cite{dutting2023multi} show that $\mask$ is hard even without budgets, yet admits an FPTAS in the unconstrained setting. Building on their structural insights, we extend this guarantee to the \emph{budget-feasible} case. Recall that for a feasible set $S$,  the utility of the principal is $g(S)=\Bigl(1-\sum_{i\in S}\frac{c_i}{p_i}\Bigr)\cdot \sum_{i\in S} p_i$.
Our algorithm exploits this multiplicative structure and proceeds in two phases. First, we discretize agents' parameters via rounding, reducing the relevant value scales to be polynomial in $n$ and $1/\eps$. This yields only polynomially many candidate subproblems and allows us to identify near-optimal candidates for the ``profit'' term $\sum_{i\in S}p_i$, while enforcing auxiliary structural constraints that preserve near-optimality. Second, conditioned on this structure, we optimize the ``penalty'' term $1-\sum_{i\in S}\frac{c_i}{p_i}$ by reformulating the resulting subproblem into an equivalent, more tractable form and solving it with a dynamic program that achieves a $(1-\eps)$-approximation. The lifting guarantee then yields an FPTAS for $\mask$ under the budget constraint.

Finally, we demonstrate the generality of the approach by also deriving a PTAS for $\mamk$ under multi-dimensional budgets via the same reduction and an LP-based subroutine for the resulting subproblems.

\section{Global Phase}
\label{sec:global_phase}
In this section we present the global phase and demonstrate how to systematically leverage the local approximation algorithm—originally providing guarantees only for a fixed contract and a fixed threshold—to obtain a globally near-optimal solution. This step elevates the local guarantees to full approximation guarantees for the overall problem.

\subsection{Approximating the Principal’s Optimal Utility}
We first determine an approximate optimal value of $R$ that can be ensured to the principal for a given contract. 

\begin{lemma}
\label{lemma:phase2}
    Given a contract $\alpha \in (0,1)$,
    let $\alg$ be an algorithm that, for any threshold $R \in [0,u_p(S_\al,\al)]$, returns a set $S \se \sa$ satisfying
    \begin{equation}
    \label{eq:phase2algProps}
        u_a(S, \alpha) \geq (1 - \eps) \cdot u_a(S_\alpha, \alpha) \quad \text{and} \quad u_p(S, \alpha) \geq (1 - \eps) R.
    \end{equation}
    Then one can compute, using a $poly(n, 1/\eps)$ number of calls to $\alg$, a set of actions $S$ such that
    \begin{equation*}
        u_a(S, \alpha) \geq (1 - \eps) \cdot u_a(S_\alpha, \alpha) \quad \text{and} \quad u_p(S, \alpha) \geq (1 - \eps) \cdot u_p(S_\alpha, \alpha).
    \end{equation*}
\end{lemma}
\begin{proof}
    Observe that the optimal principal utility, $u_p(S_\al,\al)$, must satisfy:
    \begin{equation*}
        u_p(S_\al,\al) \in \left[(1-\al) p_{min},(1-\al)\sum_{i=1}^{n}{p_i}\right].
    \end{equation*}
    Denote by $H$ the set of guesses for the optimal principal utility:
    \begin{equation*}
        H = \left\{(1-\al) p_{min} \cdot (1+\eps)^i \mid i \in \left\{ 0, 1, \ldots, \log_{1+\eps} \left(\frac{\sum_{i=1}^{n}{p_i}}{p_{min}} \right) \right\} \right\}.
    \end{equation*}
    Thus, there exist some $r \in H$ such that:
    \begin{equation}
    \label{eq:r-range}
        (1-\eps) \cdot u_p(S_\al,\al) \leq r \leq u_p(S_\al,\al).
    \end{equation}
    Note that the size of $H$ is polynomial in both $n$ and $\frac{1}{\eps}$, even when the values of $p_i$ are arbitrarily small, due to the logarithmic scaling. This holds under the assumption that all $p_i$ values are explicitly provided as part of the input. We now describe an algorithm that constructs the set $S$ in the lemma.
    \algoPhaseTwo

    Algorithm \ref{alg:phase2} iteratively explores different values of $R$ to approximate the optimal principal utility while ensuring that the resulting solution satisfies the $\eps$-IC constraint. Initially, $\alg$ is invoked with the given $\al$ and with the threshold $R=0$, yielding a set $Y$. By Eq.~\eqref{eq:phase2algProps}, since $R=0$ (i.e., the principal does not require any utility), we are guaranteed that: $(1-\eps) \cdot u_a(S_\al,\al) \leq u_a(Y,\al) \leq u_a(S_\al,\al)$ which implies the following bounds on the initial utility threshold $z_0$:
    \begin{equation}
        \label{eq:z0-ratio}
        (1-\eps)^2 \cdot u_a(S_\al,\al) \leq z_0 \leq (1-\eps) \cdot u_a(S_\al,\al).
    \end{equation}

    Next, consider the iteration of \cref{alg:phase2} where $R = r$, and let $S$ denote the set returned by $\alg$ in this iteration. By Eq.~\eqref{eq:phase2algProps}, such a set $S$ must exist, as $r \leq u_p(S_\alpha, \alpha)$. The lemma further guarantees that the principal’s utility satisfies $u_p(S, \al) \geq (1-\eps) \cdot r \geq (1-\eps)^2 \cdot u_p(S_\al, \al)$, as derived from \cref{eq:r-range}. Moreover, by Eq.~\eqref{eq:phase2algProps}, the solution $(S, \al)$ satisfies the $\eps$-IC constraint, and hence, by Eq.~\eqref{eq:z0-ratio}, we have $u_a(S, \alpha) \geq z_0$, as required in \cref{alg:phase2}. By choosing a sufficiently small value of $\eps$, we obtain the desired approximation guarantees for both the principal and the agent.
\end{proof}

\subsection{Computing a Near-Optimal Contract}
To complete the proof, it remains to establish the existence of a contract $\al$ that achieves the desired approximation guarantee. We show that such a contract can be computed using a number of queries polynomial in $n$ and $1/\eps$ to a given algorithm. Our approach builds on key ideas from \cite{duetting2025multi}. For completeness, we provide the full proof in Appendix~\ref{appx:missing_proofs}.

\begin{prop}[Adapted from \cite{duetting2025multi}]
    \label{claim:alphaStarRange}
    There exists a contract $\alpha \in [0,1]$ satisfying
      $1 - \al \leq 1 - \al^* \leq \frac{1 - \al}{1 - \eps}$,
    that can be computed in time $\mathrm{poly}(n, 1/\eps)$, where $\alpha^*$ denotes the optimal contract.
\end{prop}

Leveraging \Cref{claim:alphaStarRange}, we now present the second component of the global phase.

\begin{lemma}
\label{lemma:findAlpha}
Suppose that for any $\eps > 0$ and any contract $\al' \in (0,1)$, algorithm $\alg$ returns a set $S'$ satisfying:
    \begin{equation}
    \label{eq:DemandsOfAlg}
        u_a(S', \al') \geq (1-\eps) \cdot u_a(S_{\al'}, \al') \quad \text{and} \quad u_p(S', \al') \geq (1-\eps) \cdot u_p(S_{\al'}, \al').
    \end{equation}
    Then, one can compute, using a number of queries to $\alg$ that is polynomial in $n$ and $1/\eps$, a solution $(S, \al)$ such that:
    \begin{equation*}
        u_a(S, \al) \geq (1-\eps) \cdot u_a(S_{\al}, \al) \quad \text{and} \quad u_p(S, \al) \geq (1-\eps) \cdot u_p(S_{\al'}, \al') \quad \forall \al' \in (0,1).
    \end{equation*}
\end{lemma}

\begin{proof}
    Let $\eps > 0$, and let $\al^*$ denote the optimal contract. Let $\alpha$ be the contract guaranteed by \cref{claim:alphaStarRange}, and let $S_\al$ and $S_{\al^*}$ denote the optimal sets for the agent when using the contracts $\al$ and $\al^*$, respectively. Proposition 3.1 from \cite{dutting2022combinatorial} guarantees that if $0 < \al^* \leq \al$, then $p(S_{\al^*}) \leq p(S_\al)$. Therefore, by \cref{claim:alphaStarRange}, we have: $(1-\eps) \cdot (1-\al^*) \cdot p(S_{\al^*}) \leq (1-\al) \cdot p(S_{\al})$. Equivalently, based on the definition of the principal's utility, we have: $(1 - \eps) \cdot u_p(S_{\al^*}, \al^*) \leq u_p(S_{\al}, \al)$.
    
    Recall that $\alg$, when provided with a contract $\al$, returns a set of actions $S$ such that $u_p(S, \al) \geq (1-\eps) \cdot u_p(S_{\al}, \al)$. By combining this with the above equations, we derive: $(1 - \eps)^2 \cdot u_p(S_{\al^*}, \al^*) \leq u_p(S, \al)$. Furthermore, the solution $(S, \al)$ satisfies the $\eps$-IC constraint, as directly implied by \cref{eq:DemandsOfAlg}. Therefore, by choosing an appropriate value of $\eps$, we can achieve the desired approximation ratio for both the principal's and agent's utilities, as required.
\end{proof}

\section{Multi-Budgeted Single-Agent Settings}
\label{sec:SAMK}

In this section, we introduce our local-to-global framework through the lens of the $\samk$ problem.

\subsection{Local Phase}

\begin{theorem}
\label{thm:SAMK-PTAS}
There exists a PTAS for $\samk$; that is, for any given $\eps > 0$, the algorithm computes a $(1 - \eps)$-approximation of the optimal principal's utility while satisfying the $\eps$-IC constraint.
\end{theorem}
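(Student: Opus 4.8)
The plan is to prove \cref{thm:SAMK-PTAS} by instantiating the local--global framework of \cref{sec:local_SAMK} and \cref{sec:gloabal_SAMK}: first establish the \emph{local} guarantee delivered by \cref{alg:phase1} for a single fixed pair $(\al,R)$ (a set that is simultaneously $\eps$-IC and gives the principal at least $(1-\eps)R$), then \emph{lift} it to all contracts via \cref{alg:phase2} together with an outer loop over a polynomial grid of contracts. Throughout, $\OPT=u_p(S_{\al^\star},\al^\star)$ is the optimal principal utility over contracts with the agent playing the exact, tie-broken best response, and after deleting actions with $q_i<0$ (which lie in no best response, where $q_i=\al p_i-c_i$) we may assume $p_i,q_i\ge 0$ for all $i$.

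\textbf{Local phase.} Fix $\al\in(0,1)$ and $R\le u_p(S_\al,\al)$; by the tie-breaking assumption $S_\al$ already certifies that some exact best response has principal utility $\ge R$. Set $h=\lceil(d+1)/\eps\rceil$ as in \cref{alg:phase1} and consider the iteration in which $S_1$ is the set of the (up to) $h$ highest-$p_i$ actions of $S_\al$ and $S_2$ the (up to) $h$ highest-$q_i$ actions of $S_\al$. Then every $i\in S_\al\setminus S_1$ has $p_i\le p_{\min}(S_1)$, so $i\notin E_1(S_1)$, and symmetrically $i\notin E_2(S_2)$; hence the indicator of $S_\al$ is a feasible point of LP~\eqref{prob:singleLP} (with $S_1\cup S_2$ forced to $1$, the ground set restricted to actions outside $E_1(S_1)\cup E_2(S_2)$), so the LP's agent value is $\ge u_a(S_\al,\al)$ while the threshold constraint $(1-\al)\sum_i p_i x_i\ge R$ and all $d$ budget constraints hold. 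A basic optimal solution $x^\star$ has at most $d+1$ fractional coordinates (only $d+1$ nontrivial constraints), each with $p_i\le p_{\min}(S_1)$ and $q_i\le q_{\min}(S_2)$. Rounding down to $T=\{i:x^\star_i=1\}\supseteq S_1\cup S_2$ preserves feasibility, and in the only nontrivial case $|S_1|=h$ we have $p_{\min}(S_1)\le\frac{1}{h}\sum_{i\in S_1}p_i\le\frac{1}{h}\sum_i p_i x^\star_i$ and likewise for $q$, so the discarded mass is at most a $\frac{d+1}{h}\le\eps$ fraction of $\sum_i p_i x^\star_i$ and of $\sum_i q_i x^\star_i$ (the case $|S_\al|\le h$ is easier: all of $S_\al$ is forced in and nothing relevant is discarded). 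Therefore $u_p(T,\al)\ge(1-\eps)R$ and $u_a(T,\al)\ge(1-\eps)u_a(S_\al,\al)\ge(1-\eps)u_a(S',\al)$ for every feasible $S'$, i.e.\ $(T,\al)$ is $\eps$-IC (\cref{dfn:epsIC}); the runtime is $n^{O(d/\eps)}$. (This LP-rounding step is in the spirit of the multi-dimensional knapsack PTAS of \cite{frieze1984approximation}, strengthened by the threshold constraint to control both objectives at once.)

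\textbf{Global phase.} For fixed $\al$, I would show \cref{alg:phase2} recovers $u_p(S_\al,\al)$ up to $(1-\eps)^{O(1)}$: its call with $R=0$ returns $Y$ with $u_a(Y,\al)\ge(1-\eps)u_a(S_\al,\al)$, so $z_0\le u_a(S_\al,\al)$; the grid $H$ is geometric with ratio $1+\eps$ over $[(1-\al)p_{\min},(1-\al)\sum_i p_i]\ni u_p(S_\al,\al)$, so some $R\in H$ lies in $[(1-\eps)u_p(S_\al,\al),u_p(S_\al,\al)]$, and for that $R$ the local routine returns $S$ with $u_p(S,\al)\ge(1-\eps)^2 u_p(S_\al,\al)$, $u_a(S,\al)\ge z_0$, and $\eps$-IC. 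To cover all contracts, wrap this in an outer loop over $\al=1-(1-\eps)^j$ for $j=0,1,\dots,\Theta(\eps^{-1}\log(n/\eps))$ (polynomially many grid points; if $\OPT<\eps$ the empty solution already suffices by \cref{cor:additiveMulti}, otherwise $1-\al^\star\ge\OPT/n\ge\eps/n$ so the grid reaches $1-\al^\star$), returning the overall best by principal utility. Picking the grid point $\al\ge\al^\star$ with $1-\al\ge(1-\eps)(1-\al^\star)$ and adding the two best-response inequalities (for $S_\al$ at $\al$ and for $S_{\al^\star}$ at $\al^\star$) yields $(\al-\al^\star)(p(S_\al)-p(S_{\al^\star}))\ge 0$, hence $p(S_\al)\ge p(S_{\al^\star})$ and $u_p(S_\al,\al)=(1-\al)p(S_\al)\ge(1-\eps)(1-\al^\star)p(S_{\al^\star})=(1-\eps)\OPT$. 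Chaining the bounds, the returned $(S,\al)$ satisfies $u_p(S,\al)\ge(1-\eps)^3\OPT$ and is $\eps$-IC; rescaling $\eps$ by a constant and noting the total runtime is $n^{O(d/\eps)}\cdot\mathrm{poly}(n,\eps^{-1})=n^{O(1/\eps)}$ for constant $d$ gives a PTAS (and no EPTAS, by \cref{hard:SAMK-EPTAS}).

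\textbf{Main obstacle.} The crux is the local phase. Approximating the agent's utility in isolation is provably insufficient, since two $\eps$-best responses can have very different principal utilities (the phenomenon flagged in \cref{sub:techniques}); the remedy is to encode the principal's target $R$ as an explicit covering constraint in LP~\eqref{prob:singleLP} and to guess the top-$h$ actions by $p_i$ \emph{and}, separately, by $q_i$. The technical heart is then the accounting showing that the $\le d+1$ discarded fractional variables form an $\eps$-fraction of the LP's $p$-mass \emph{and} of its $q$-mass simultaneously, which is exactly what forces the guessing size $h=\lceil(d+1)/\eps\rceil$. Everything else (the geometric grids for $R$ and $\al$, the monotonicity of $p(S_\al)$ in $\al$, and translating the final $\eps$-IC solution into an exact-IC one via the transformation of \cite{zuo2024new}) is routine.
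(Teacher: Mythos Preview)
Your proposal is correct and follows the paper's approach almost verbatim: the local phase (guess top-$h$ actions by $p_i$ and separately by $q_i$, solve LP~\eqref{prob:singleLP}, drop the $\le d+1$ fractional coordinates) is exactly \cref{lemma:ApproximationRatioProof}, and the threshold grid is exactly \cref{lemma:phase2}.

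The one substantive deviation is in the contract enumeration. The paper does not use your geometric grid $\al=1-(1-\eps)^j$ with the case split on $\OPT\gtrless\eps$; instead it invokes \cref{claim:alphaStarRange}, which guesses $j^\star\in\argmax_{j\in S_{\al^\star}}c_j$ and uses Lemma~4.1 of \cite{duetting2025multi} to sandwich $\al^\star$ between explicit $\al_{\min},\al_{\max}$, then grids geometrically inside that range. This yields a \emph{purely multiplicative} $(1-\eps)$-guarantee on the principal's utility without any bounded-reward assumption or additive slack, whereas your grid needs $p_i\le 1$ (to get $1-\al^\star\ge\OPT/n$) and falls back on the additive $\eps$ of \cref{dfn:approximationSchemes} when $\OPT<\eps$; note also that your citation of \cref{cor:additiveMulti} for that case is in the wrong direction (that corollary goes multiplicative $\Rightarrow$ additive, not the reverse). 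Both routes work, but the paper's avoids the case analysis and the normalization hypothesis at the cost of importing an external lemma, while yours is more self-contained. One small cleanup: your deletion of actions with $q_i<0$ is $\al$-dependent and should be done inside the loop over $\al$, not as a global preprocessing step.
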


Given a threshold $R \in [0, u_p(S_\alpha,\alpha)]$ and a contract $\alpha \in (0,1)$, our goal is to find a feasible action set $S$ such that $u_p(S,\alpha) \ge (1-\varepsilon)R$ while satisfying the $\varepsilon$-IC constraint, if such an $S$ exists.

We start by guessing a partial solution, specified by two small sets of actions
$S_1, S_2 \subseteq \sa$, each of size at most
$h = \left\lceil \frac{d+1}{\varepsilon} \right\rceil$.
The combined set $S := S_1 \cup S_2$ is then extended to an approximate solution
by solving a linear program over the remaining actions.

To this end, we exclude certain actions from consideration in the LP phase.
Define
$E_1(S_1) := \{\, i \in \sa \setminus S_1 \mid p_i > p_{\min}(S_1) \,\}$ and
$E_2(S_2) := \{\, i \in \sa \setminus S_2 \mid q_i > q_{\min}(S_2) \,\}$,
where $p_{\min}(S) := \min_{j \in S} p_j$ and
$q_{\min}(S) := \min_{j \in S} q_j$.
We then let $E(S) := E_1(S_1) \cup E_2(S_2)$.
Given an initial guess of $S$, we formulate a linear programming relaxation in which 
$x_i$ is a (fractional) indicator for the selection of action $i$.

\begin{align}
\textnormal{LP}(S) \quad \max_{x \in [0,1]^n} \quad & \textstyle\sum_{i=1}^n q_i \cdot x_i \label{prob:singleLP}\\
\textrm{subject to} \quad & \textstyle\sum_{i=1}^n w_{i,j} \cdot x_i \leq W_j \quad \text{for all } j \in \{1, \ldots, d\}, \label{constraint:knapsackLP} \\
\quad & (1-\al) \cdot \textstyle\sum_{i=1}^n p_i x_i \geq R, \label{constraint:principalUtilityLP} \\
\quad & 0 \leq x_i \leq 1 \quad \text{for } i \notin S \cup E(S), \nonumber \\
\quad & x_i = 1 \quad \text{for } i \in S, \nonumber \\
\quad & x_i = 0 \quad \text{for } i \in E(S) \setminus S. \nonumber
\end{align}

Given an initial guess of $S_1, S_2$, compute an optimal basic solution $x^*$ for $\textnormal{LP}(S)$. 
Denote by $F$ the actions that are fractionally allocated in the solution, i.e., $F = \{i \in \sa \mid 0 < x^*_i < 1\}$. Substituting $x_i = 0$ for all $i \in F$ results in an integral solution. Finally, we return as a solution the set of actions that maximizes the objective function across all possible guesses of $S_1, S_2$.
We give the pseudocode of the local phase in Algorithm \ref{alg:phase1}.
\algoPhaseOne

The correctness of the algorithm follows from the next claims; their proofs appear in Appendix~\ref{appx:missing_proofs}.

\begin{claim}
\label{claim1:SAD}
    Algorithm \ref{alg:phase1} yields a feasible solution for $\sask$ and runs in polynomial time.
\end{claim}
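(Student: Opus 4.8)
The plan is to verify the two assertions separately: feasibility of the returned solution, and polynomial running time. For feasibility, I first observe that whenever Algorithm~\ref{alg:phase1} returns a set $S_{ALG} = T$ with $T = \{i \in \sa \mid x^*_i = 1\}$ for some basic optimal solution $x^*$ of LP~\eqref{prob:singleLP}, the set $T$ is exactly the integral restriction obtained by zeroing out the fractionally allocated actions $F$. I would then argue that $T$ satisfies all constraints of problem~\eqref{prob:sask}. The budget constraints $\sum_{i \in T} w_{i,j} \le W_j$ hold because $x^*$ satisfies constraint~\eqref{constraint:knapsackLP} and zeroing coordinates only decreases the left-hand side (prices are nonnegative). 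For the $\eps$-IC constraint, the key point is the threshold $R$: since $(S,\al)$ is returned only when we are in a global iteration that has fixed a threshold, and the global phase (Algorithm~\ref{alg:phase2}) only accepts sets with $u_a(S,\al) \ge z_0 = (1-\eps)u_a(Y,\al)$ where $Y$ is a best response to $\al$ with $R=0$ — so I need to tie this to the fact that $u_a(Y,\al) = u_a(S_\al,\al)$ when $R=0$, giving $(1-\eps)$-IC. Actually, more carefully: the claim as stated concerns Algorithm~\ref{alg:phase1} in isolation, so I only need that the returned $T$ is a \emph{feasible} set of actions for the constraint family $\sv$ (a member of $\sv$ satisfying the budgets), not necessarily $\eps$-IC; the $\eps$-IC guarantee is established in the subsequent claims. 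So feasibility here means: $T \in \sv$ and $T$ satisfies the budget constraints. The budget part is as above; membership in $\sv$ for the pure multi-budget setting is automatic since $\sv$ is \emph{defined} by the budget constraints.

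For the running time, I would bound each ingredient. The outer two loops enumerate all subsets $S_1, S_2 \se \sa$ of size at most $h = \lceil (d+1)/\eps \rceil$; since $d$ is a fixed constant and $\eps$ is fixed, $h = O(1)$, so there are at most $\binom{n}{\le h}^2 = n^{O(h)} = n^{O(1/\eps)}$ iterations — polynomial in $n$ for fixed $\eps$. Inside each iteration, computing $E_1(S_1)$ and $E_2(S_2)$ takes $O(n)$ time; solving LP~\eqref{prob:singleLP} to obtain a basic optimal solution takes polynomial time by any polynomial-time LP algorithm (e.g., interior-point followed by a rounding to a vertex, or the ellipsoid method), and the LP has $n$ variables and $d+O(n)$ constraints; extracting $T$ and evaluating $u_a(T,\al)$ is $O(n)$. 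Hence the total running time is $n^{O(1/\eps)} \cdot \mathrm{poly}(n) = n^{O(1/\eps)}$, which is polynomial for each fixed $\eps$ — consistent with the PTAS classification (note: \emph{not} polynomial in $1/\eps$, but the claim only asserts ``polynomial time'' for the procedure with $\eps$ treated as a constant here).

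The main obstacle I anticipate is the feasibility direction, specifically making precise what ``feasible solution to the $\sask$ problem'' means in the context where Algorithm~\ref{alg:phase1} is a subroutine that does \emph{not} itself enforce the $\eps$-IC constraint — one has to be careful that the claim is about feasibility with respect to the combinatorial constraint $\sv$ and the budget (which is immediate from the LP constraints and nonnegativity of weights), while the incentive guarantee is deferred. A secondary subtlety is confirming that the LP~\eqref{prob:singleLP} is always feasible when a valid solution exists, so that line~6 does not fail — but since $S_1 = S_2 = \emptyset$ is among the guesses and, under the hypothesis that some feasible $\eps$-IC solution with principal utility $\ge R$ exists, its indicator vector is LP-feasible, the ``No feasible set'' branch is returned only when genuinely no feasible solution exists. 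I would state this carefully but the verification is routine.
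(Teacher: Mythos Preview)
Your proposal is correct and follows essentially the same approach as the paper: feasibility holds because zeroing the fractional coordinates of $x^*$ can only decrease each budget sum (weights are nonnegative), and the running time is $n^{O(h)} \cdot \mathrm{poly}(n) = n^{O((d+1)/\eps)}$ from enumerating the pairs $(S_1,S_2)$ and solving one LP per pair. Your digressions on $\eps$-IC (which you rightly set aside) and on LP feasibility in the ``No feasible set'' branch are extra care not present in the paper's brief proof, but they do not change the argument.
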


\begin{claim}
\label{claim:numOfFractional}
    The number of fractional variables in $x^*$ is at most $d+1$, i.e., $|F| \leq d+1$.
\end{claim}

\begin{lemma}
\label{lemma:ApproximationRatioProof}
    Given $\al \in (0,1)$ and $R \in \drz$, if $R \leq u_p(S_\al,\al)$, then the integral solution $S$ produced by Algorithm \ref{alg:phase1} satisfies $u_p(S,\al) \geq (1-\eps)R$ and the $\eps$-IC constraint.
\end{lemma}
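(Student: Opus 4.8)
The plan is to fix an optimal solution $S_\al$ (an exact best-response with $u_p(S_\al,\al)\ge R$, which exists by hypothesis) and show that at least one guess $(S_1,S_2)$ in the enumeration ``captures'' the structure of $S_\al$ well enough that the LP and its rounding lose only an $\eps$-fraction of the principal's value while keeping the solution $\eps$-IC. Concretely, I would take $S_1$ to be the $\min\{h,|S_\al|\}$ actions of $S_\al$ with largest $p_i$, and $S_2$ the $\min\{h,|S_\al|\}$ actions of $S_\al$ with largest $q_i$. With this choice, every action of $S_\al$ not already in $S_1$ has $p_i\le p_{min}(S_1)$, so it is \emph{not} in $E_1(S_1)$; similarly actions of $S_\al\setminus S_2$ are not in $E_2(S_2)$. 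Hence $S_\al\subseteq \sa\setminus E(S)$ and moreover $S:=S_1\cup S_2\subseteq S_\al\subseteq S_\al\cup S\in\sv$, so on this guess the LP \eqref{prob:singleLP} is feasible: the indicator vector of $S_\al$ satisfies the knapsack constraints \eqref{constraint:knapsackLP}, the principal-utility constraint \eqref{constraint:principalUtilityLP} (since $u_p(S_\al,\al)\ge R$), and the fixed-variable constraints. Therefore the LP optimum $x^*$ has objective $\sum_i q_i x^*_i \ge q(S_\al)=u_a(S_\al,\al)$, which since $S_\al$ is the agent's best response equals $\max_{S'\in\sv}u_a(S',\al)$.

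Next I would argue about rounding. By Claim~\ref{claim:numOfFractional} the set $F$ of fractional variables has $|F|\le d+1$, and $F\cap(S\cup E(S))=\emptyset$, so $F$ consists entirely of ``small'' actions: each $i\in F$ has $p_i\le p_{min}(S_1)$ and $q_i\le q_{min}(S_2)$. Zeroing out $F$ yields the integral set $T=\{i:x^*_i=1\}$, which is feasible (it still satisfies the knapsack constraints since we only decreased coordinates, it contains $S$ hence lies in $\sv$ by monotonicity of the feasibility family — this is where I need the downward-closedness assumption implicit in the model, and for multi-budget $\sv$ is indeed downward closed). The key quantitative step: the value lost in the principal's objective is $(1-\al)\sum_{i\in F}p_i x^*_i \le (1-\al)\sum_{i\in F}p_i \le (d+1)\cdot(1-\al)p_{min}(S_1)$. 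Since $S_1\subseteq S_\al$ and $|S_1|=\min\{h,|S_\al|\}$, if $|S_\al|\le h$ then $S_1=S_\al$, $F$ must be empty among actions with $p_i\le p_{min}(S_1)$ strictly — actually more cleanly: if $|S_\al|> h$ then $p_{min}(S_1)$ is at most the average of the top $h$ values, so $(1-\al)p_{min}(S_1)\le \frac{1}{h}(1-\al)p(S_\al)=\frac{1}{h}u_p(S_\al,\al)$. Combining, the principal's loss is at most $\frac{d+1}{h}u_p(S_\al,\al)\le\eps\cdot u_p(S_\al,\al)$ by the choice $h=\lceil(d+1)/\eps\rceil$, so $u_p(T,\al)\ge(1-\eps)u_p(S_\al,\al)\ge(1-\eps)R$. (If $|S_\al|\le h$ the loss is $0$ since no fractional small action can appear — I would verify $F=\emptyset$ in that case, or simply bound the loss by $0$.)

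For the $\eps$-IC guarantee, I would symmetrically bound the agent's loss: $u_a(T,\al)=\sum_{i\in T}q_i = \sum_i q_i x^*_i - \sum_{i\in F}q_i x^*_i \ge \max_{S'\in\sv}u_a(S',\al) - (d+1)q_{min}(S_2)$, and by the same averaging argument (using $S_2\subseteq S_\al$ and that the agent's best-response value is at least $q(S_\al)=\sum_{i}q_i\mathbbm{1}[i\in S_\al]\ge$ the top-$h$ sum when $|S_\al|>h$) we get $(d+1)q_{min}(S_2)\le \eps\cdot\max_{S'\in\sv}u_a(S',\al)$, hence $u_a(T,\al)\ge(1-\eps)\max_{S'\in\sv}u_a(S',\al)$, which is exactly the $\eps$-IC condition. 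Finally, since Algorithm~\ref{alg:phase1} returns the $T$ maximizing $u_a(T,\al)$ over all guesses, and one guess achieves both bounds, the returned set inherits the $\eps$-IC property; and among guesses satisfying $\eps$-IC we additionally need the returned one to have large principal utility — here I would note that the particular good guess has $u_p\ge(1-\eps)R$, so the algorithm's output (which maximizes the agent's objective, then breaks ties / is post-processed as needed) must be re-examined. The main obstacle I anticipate is precisely this last point: the algorithm as written maximizes $u_a(T,\al)$, not $u_p$, so I must argue that the $\eps$-IC-satisfying, principal-good guess is not ``beaten'' by some other guess with marginally higher agent utility but poor principal utility — resolving this likely requires either that the global phase (Algorithm~\ref{alg:phase2}) handles the principal side, or a more careful statement that the agent-optimal $T$ among all guesses still satisfies the principal bound because the LP's principal constraint \eqref{constraint:principalUtilityLP} forces $u_p(T,\al)\ge R - (\text{rounding loss})\ge(1-\eps)R$ for \emph{every} feasible guess. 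I would lean on this latter observation: constraint~\eqref{constraint:principalUtilityLP} guarantees $(1-\al)\sum p_i x^*_i\ge R$ on any guess where the LP is feasible, and the rounding loss is uniformly at most $\eps R$ by the averaging bound applied with the returned $T$'s own $S_1$, so $u_p(T,\al)\ge(1-\eps)R$ automatically — this is the cleanest route and the step I'd spend the most care on.
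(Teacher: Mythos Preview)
Your approach coincides with the paper's: fix the ``good'' guess where $S_1,S_2$ are the top-$h$ actions of $S_\al$ by $p_i$ and by $q_i$ respectively, observe that the indicator of $S_\al$ is feasible for LP~\eqref{prob:singleLP} at this guess, and bound the rounding loss via Claim~\ref{claim:numOfFractional} together with the fact that every fractional coordinate is ``small'' in both $p$ and $q$. The agent's side is correct.

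The gap is on the principal's side. Constraint~\eqref{constraint:principalUtilityLP} only yields $(1-\al)\sum_i p_i x_i^*\ge R$; since the LP maximizes the $q$-objective, the $p$-value of $x^*$ need not be anywhere near $u_p(S_\al,\al)$. After zeroing out $F$ you therefore get $u_p(T,\al)\ge R-(\text{loss})$, and your loss bound $(d+1)(1-\al)p_{\min}(S_1)\le \eps\cdot u_p(S_\al,\al)$ gives only $u_p(T,\al)\ge R-\eps\cdot u_p(S_\al,\al)$, which can be far below $(1-\eps)R$ when $u_p(S_\al,\al)\gg R$; the asserted step ``$u_p(T,\al)\ge(1-\eps)u_p(S_\al,\al)$'' does not follow from anything you have. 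The paper's fix is a case split on $\sigma_1:=\sum_{i\in S_1}p_i$: if $(1-\al)\sigma_1>R$ then already $u_p(T,\al)\ge(1-\al)\sigma_1>R$ since $S_1\subseteq T$; otherwise $(1-\al)p_{\min}(S_1)\le(1-\al)\sigma_1/h\le R/h$, so the loss is at most $(d+1)R/h\le\eps R$ and $u_p(T,\al)\ge(1-\eps)R$. This case split---bounding the loss by $\eps R$ in terms of $R$ alone rather than $u_p(S_\al,\al)$---is exactly the missing ingredient in your final paragraph as well: it is what lets you argue the principal bound for whichever guess the algorithm actually returns, not just the good one.
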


\begin{proof}
    Let $S_\al = \{i_1, \ldots, i_g\}$ denote an optimal integral solution for the problem. If $g \leq 2h$ then we are done, since in some iteration, the scheme will try $S_1, S_2$ such that $S_1 \cup S_2 = S_\al$. Otherwise, we define two subsets of $S_\al$ in the following way. The first subset, denoted by $S_{1,h}^*$, is constructed by selecting the $h$ actions from $S_\al$ that exhibit the maximum profit values $p_i$. The second subset, $S_{2,h}^*$, is defined analogously, comprising the $h$ actions from $S_\al$ possessing the highest net profit values $q_i$.

    Let $\sigma_1 = \sum_{i \in S_{1,h}^*}{p_i}$ and $\sigma_2 = \sum_{i \in S_{2,h}^*}{q_i}$. Therefore, for any action $i \notin \bigcup_{j \in \{1,2\}}{S_{j,h}^* \cup E_j(S_{j,h}^*)}$, it holds that $p_i \leq \sigma_1 / h$ and $q_i \leq \sigma_2 / h$.

    Let $z_{LP}, x^{LP}$ denote the objective value and the fractional allocation vector of LP \eqref{prob:singleLP}, respectively. Consider the iteration where $S_{1,h}^*$ and $S_{2,h}^*$ are guessed. Observe that by our definitions of $S_{1,h}^*$ and $S_{2,h}^*$, these subsets may overlap. As we do not know $S_{\alpha}$, we need to guess each of these subsets independently, so $S_{2,h}^*$ is selected out of the \emph{entire} set of actions. It is important to note that each action $k \in S_\al$ either belongs to $S_{1,h}^* \cup S_{2,h}^*$ or to the complement of $E_1(S_{1,h}^*) \cup E_2(S_{2,h}^*)$. It follows from the definition of these sets - if $p_k > p_{min}(S_{1,h}^*)$ then $k \in S_{1,h}^*$, and if $q_k > q_{min}(S_{2,h}^*)$ then $k \in S_{2,h}^*$. Conversely, if neither condition holds, then $k \notin E_1(S_{1,h}^*) \cup E_2(S_{2,h}^*)$. Hence, the LP formulation permits the selection of all actions in $S_\al$. Therefore, it holds that:
    \begin{equation}
    \label{eq:validSolution}
        z_{LP} \geq \sum_{i \in S_\al}{q_i} \geq \sum_{i \in S_{2,h}^*}{q_i} = \sigma_2,
    \end{equation}
    where the second inequality holds since $S_{2,h}^* \se S_\al$. Now, to establish the approximation ratio for the agent's utility, we have:
    \begin{align*}
        u_a(S,\al) & = z_{LP} - \sum_{i \in F}{q_i \cdot x_i^{LP}} \geq z_{LP} - \sum_{i \in F}{q_i} \\
        & \geq z_{LP} - {(d+1) \cdot \frac{\sigma_2}{h}} \geq z_{LP} - {(d+1) \cdot \frac{z_{LP}}{h}} \geq z_{LP} \cdot (1-\eps),
    \end{align*}
    where the second inequality holds by \cref{claim:numOfFractional} and the the fact that $F \cap (S_{2,h}^* \cup E(S_{2,h}^*)) = \emptyset$ implies $q_i \leq \sigma_2 / h$, for all $i \in F$, and the third inequality is due to \cref{eq:validSolution}.

    To show that the rounded principal's utility, $u_p(S,\al)$, satisfies $u_p(S,\al) \geq (1-\eps) R$, we focus on the loss caused by the rounding. Since $x^{LP}$ is a feasible solution, constraint \eqref{constraint:principalUtilityLP} implies that $(1-\al) \cdot \sum_{i=1}^n p_i x_i^{LP} \geq R$. We consider two cases: If $(1-\al) \sigma_1 > R$, then we achieve the required utility directly, as $u_p(S,\al) \geq (1-\al) \sigma_1 > R \geq (1-\eps)R$. Here, the first inequality holds because the initial selection of $S_1$ and $S_2$ ensures that the principal's utility is at least $(1-\al) \sigma_1$, and including additional actions in the LP cannot decrease this value. The second case arises when $(1-\al) \sigma_1 \leq R$. In this scenario we have:
    \begin{align*}
        u_p(S,\al) & = R - (1-\al)\cdot \sum_{i \in F}{p_i x_i^{LP}} \geq R - (1-\al)\cdot \sum_{i \in F}{p_i} \\
        & \geq R - (d+1) \cdot {\frac{(1-\al) \cdot \sigma_1}{h}} \geq R - (d+1) \cdot {\frac{R}{h}} \geq R \cdot (1-\eps),
    \end{align*}
    where the justification for the second inequality is analogous to the explanation for the agent's utility, and the third inequality follows directly from the assumption $(1-\al) \sigma_1 \leq R$.
\end{proof}

\subsection{Global Phase}
We now apply the
global phase, as described in \cref{sec:global_phase}.

\begin{proof}[Proof of \cref{thm:SAMK-PTAS}]
Observe that \cref{lemma:ApproximationRatioProof} establishes the approximation guarantees required to apply \cref{lemma:phase2}, which in turn satisfies the conditions needed to invoke \cref{lemma:findAlpha}. The desired approximation guarantee follows directly.

Furthermore, the resulting algorithm is a PTAS, since \cref{alg:phase1} runs in polynomial time and both
\cref{lemma:phase2} and \cref{lemma:findAlpha} invoke it only polynomially many times in $n$ and $1/\varepsilon$.
\end{proof}

\section{Budgeted Matroid and Matching in Single-Agent Settings}
\label{sec:BMIC}

In this section, we present an EPTAS for the $\bmic$ and $\match$ problems, where the agent’s set of actions is subject to a budget constraint as well as a matroid or a matching constraint.
To tackle the multi-objective scenario imposed by contracts, we generalize the {\em representative set} technique for budgeted matroid and budgeted matching proposed by \cite{doron2023eptas,doron2023IPEC} (see \cref{sec:introduction}). Our main result is the following.
\begin{theorem}
\label{thm:BMICalgo}
    Both $\bmic$ and $\match$ admit an EPTAS.
\end{theorem}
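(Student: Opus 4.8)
The plan is to instantiate the local--global framework of \cref{sec:SAMK} with a local algorithm tailored to the budget-plus-matroid (resp.\ budget-plus-matching) constraint, establishing \cref{lemma:approxBMI}, and then lift it to a global two-sided scheme exactly as in \cref{sec:gloabal_SAMK}, finally converting to the guarantee of \cref{dfn:approximationSchemes}. Since the global step only makes $\mathrm{poly}(n,1/\eps)$ calls to the local oracle, a local oracle running in time $2^{O(\eps^{-1}\log(1/\eps))}\cdot\mathrm{poly}(n)$ yields an EPTAS rather than a PTAS.

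First I would build the \emph{local} phase. Fix $\al$ and a threshold $R$ with $u_p(S_\al,\al)\ge R$; the target is a feasible $S$ with $u_p(S,\al)\ge(1-\eps)R$ and $\eps$-IC. Unlike the multi-budget case, directly guessing the high-$p_i$ and high-$q_i$ actions would blow the running time up to $n^{\mathrm{poly}(1/\eps)}$, so I would replace that enumeration by a \emph{representative set} $T\subseteq\sa$ of size $\eps^{-O(1)}$ that, up to substitution, captures all high-contribution actions of some near-optimal solution. Concretely: bucket the actions into $O(\eps^{-1}\log(1/\eps))$ \emph{profit classes} (geometric buckets in both the $\tilde p_i$ and $\tilde\ell_i$ scales), and for each class invoke the black-box exchange-set construction of Huang and Ward~\cite{huang2023fpt} (\cref{step:exchangeSet} of \cref{alg:BMICmain}); let $T$ be the union of these exchange sets. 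The key structural claim is an exchange argument: starting from $S_\al$, every high-value action lying outside $T$ can be swapped for a ``similar'' action of $T$ — similar in both objectives and exchangeable in the matroid/matching — losing only an $\eps$-fraction in each of $u_a$ and $u_p$. \textbf{I expect this to be the main obstacle:} the classical representative-set argument crucially uses that among two items of nearly equal profit the lighter one dominates, whereas here ``profit'' is two-dimensional ($p_i$ for the principal, $q_i$ for the agent) with no monotone relation, so the exchange must be carried out class-by-class while simultaneously controlling both coordinates and preserving the combinatorial constraint; squeezing the size down to $O(\eps^{-6})$ (the claimed improvement over $\Omega(\eps^{-1/\eps})$ in~\cite{AradKS23icalp}) is where the accounting is delicate.

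Once $T$ is fixed I would complete the solution by \emph{LP rounding over low-value actions}, using \cref{alg:innerLPMatroid} and \cref{alg:innerLPMatching}. Enumerate every $D\subseteq T$ with $|D|\le 2/\eps$ and $D\in\sv$ (only $2^{O(\eps^{-1}\log(1/\eps))}$ of them); for each, restrict to actions whose $p_i,q_i$ are below suitable thresholds and solve the relaxation $\textnormal{LP}(D,\beta_q,R)$ that maximizes $u_a$ subject to the budget constraint, the matroid (resp.\ matching) polytope constraint, and the added linear constraint $u_p\ge R$ — the exact analogue of LP~\eqref{prob:singleLP} with the matroid/matching polytope replacing the knapsack polytope. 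For the matroid case a basic feasible solution has only $O(1)$ fractional coordinates, so discarding them deletes $O(1/\eps)$ low-value actions per objective, i.e.\ an $\eps$-fraction; for the matching case a basic solution need not be almost integral, so instead apply the rounding of Chekuri, Vondr\'ak and Zenklusen~\cite{chekuri2011multi} to get an integral matching that approximately preserves both linear objectives with the right concentration, tuning the ``small-value'' thresholds so the total loss is $\eps$. Taking the best candidate over all guesses $D$, over the $O(1/\eps)$-many geometric guesses $\beta_p\in C$ and the loop index $i$ of \cref{alg:BMICmain} (anchored by the $\tfrac12$-approximation $\beta_q$ to the optimal agent utility from the EPTAS of~\cite{AradKS23icalp}, \cref{step:halfApprox}), and over $(r,t)\in\sigma$, yields a feasible $S$ with $u_p(S,\al)\ge(1-\eps)R$ and $\eps$-IC in time $2^{O(\eps^{-1}\log(1/\eps))}\cdot\mathrm{poly}(n)$ — this is \cref{lemma:approxBMI}.

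Finally I would run the global phase verbatim: \cref{lemma:phase2} converts the threshold-parametrized local guarantee into a two-sided $(1-\eps)$-approximation for a fixed $\al$ with $\mathrm{poly}(n,1/\eps)$ calls, and \cref{lemma:findAlpha} (using \cref{claim:alphaStarRange} and monotonicity of $p(S_\al)$ in $\al$, Proposition~3.1 of~\cite{dutting2022combinatorial}) extends this to all contracts with $\mathrm{poly}(n,1/\eps)$ further calls; a polynomial number of calls to a $2^{O(\eps^{-1}\log(1/\eps))}\cdot\mathrm{poly}(n)$-time subroutine is still $2^{O(\eps^{-1}\log(1/\eps))}\cdot\mathrm{poly}(n)$, so the two-sided scheme is an EPTAS. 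The last step is to turn a two-sided $(1-\eps)$-approximation (a contract plus an $\eps$-best-response) into the object of \cref{dfn:approximationSchemes} — a contract plus an \emph{exact} best-response whose principal utility is within a $(1-\sqrt\eps)$ factor (minus $\sqrt\eps-\eps$) of $\OPT$ — via Lemma~14 of~\cite{zuo2024new}; reparametrizing $\eps\mapsto\eps^2$ gives the claimed EPTAS for both $\bmic$ and $\match$.
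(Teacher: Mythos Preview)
Your proposal is correct and follows essentially the same approach as the paper: establish \cref{lemma:approxBMI} via the representative-set technique (exchange sets from~\cite{huang2023fpt} over two-dimensional profit classes, then LP rounding with basic-solution sparsity for matroids and the rounding of~\cite{chekuri2011multi} for matchings), and lift to the global guarantee via \cref{lemma:phase2} and \cref{lemma:findAlpha}. A couple of counts are off but inconsequential --- $|C|$ is $\mathrm{poly}(n)$ rather than $O(1/\eps)$, and the number of profit classes is $|\sigma|=O(\eps^{-4})$ (it is two-dimensional) rather than $O(\eps^{-1}\log(1/\eps))$ --- neither affecting the EPTAS running time.
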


\subsection{Local Phase: Constructing a Representative Set}

Given a contract $\al \in (0,1)$, let $p_i'(\al) = (1-\al) \cdot p_i$. When $\al$ is clear from the context, we use $p_i'= p_i'(\al)$. Towards proving \cref{lemma:approxBMI} for this setting, we introduce some useful notations and claims. To simplify notation, we use $z_i$ for either $p_i'$ or $q_i$ when applying the same arguments to both, and $z(S)$ to denote either $u_p(S, \al)$ or $u_a(S, \al)$. This unified notation will streamline the subsequent analysis.

Denote by $H_p$ the set of actions with "high" profit, i.e., $H_p = \left \{ i \in \sa \mid p_i' \geq \eps \cdot u_p(S_\al, \al) \right \}$, and similarly for $q_i$, $H_q = \left \{ i \in \sa \mid q_i \geq \eps \cdot u_a(S_\al, \al) \right \}$. Moreover, let $\frac{u_a(S_\al, \al)}{2} \leq \beta_q \leq u_a(S_\al, \al)$. This value can be found in $poly(n)$ time using, e.g., the EPTAS of \cite{AradKS23icalp} with $\eps = \frac{1}{2}$ using $q_i$ as the profit of the $i$-th item. Finally, let $H = H_p \cup H_q$ and $\psi(\eps) = 2 \eps^{-2}$.

\begin{definition}
    Denote by $\sv$ the set of feasible solutions under either the matroid or matching constraint. Given $x \in \drz$, let $\sv_{\leq x}$ be the subset of solutions $S \in \sv$ satisfying $|S| \leq x$.
\end{definition}

\begin{definition}[Representative Set]
\label{dfn:representative}
    Let $0<\eps<\frac{1}{2}$ and $T \se \sa$. We say that $T$ is a representative set, if there is a solution $S \in \sv$ such that the following holds:
    \begin{enumerate}
        \item $S \cap H \se T$,
        \item $u_p(S, \al) \geq (1-6\eps) \cdot u_p(S_\al, \al)$,
        \item $u_a(S, \al) \geq (1-6\eps) \cdot u_a(S_\al, \al)$.
    \end{enumerate}
\end{definition}

\begin{lemma}
\label{lemma:InnerSolveLP}
    Given a representative set $T \se \sa$ such that $|T| \leq 4 \psi(\eps)^3$, there exists an algorithm that runs in time $2^{O \left(\eps^{-1} \cdot \log \left( \frac{1}{\eps} \right) \right)} \cdot poly(n)$ and returns (with high probability) a feasible solution $S \in \sv$ such that $u_a(S, \al) \geq (1-14\eps) \cdot u_a(S_\al, \al)$ and $u_p(S, \al) \geq (1-14\eps) R$ for any 
    $R \in \left[\frac{u_p(S_\al, \al)}{2},u_p(S_\al, \al)\right]$.
    \footnote{By a standard rescaling of $\eps$, this implies a $(1-\eps)$-approximation.}
\end{lemma}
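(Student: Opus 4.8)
\textbf{Proof plan for Lemma~\ref{lemma:InnerSolveLP}.}
The plan is to enumerate over small subsets of the representative set $T$ in order to ``guess'' the high-profit actions of a near-optimal solution, and then to complete the solution by solving an LP relaxation over the low-profit actions, rounding it with small loss. Concretely, fix the solution $S\in\sv$ guaranteed by the fact that $T$ is a representative set (Definition~\ref{dfn:representative}), so that $S\cap H\se T$, $u_p(S,\al)\ge(1-4\eps)u_p(S_\al,\al)$ and $u_a(S,\al)\ge(1-4\eps)u_a(S_\al,\al)$. Since the high-profit actions of $S$ (those in $H_p\cup H_q$) number at most $O(\eps^{-1})$ on each side by the definition of $H_p,H_q$ (an action in $H_p$ contributes at least $\eps\,u_p(S_\al,\al)$, so there are at most $\eps^{-1}$ of them in any solution, and similarly for $H_q$), the set $D:=S\cap H$ has size $O(\eps^{-1})$ and lies inside $T$. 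So the algorithm loops over all $D\se T$ with $|D|\le O(\eps^{-1})$ and $D\in\sv$; since $|T|\le 4\psi(\eps)^3=O(\eps^{-6})$, this enumeration takes $2^{O(\eps^{-1}\log(1/\eps))}$ iterations, matching the target running time.

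For each guess $D$, I would restrict attention to the remaining ``low-value'' actions $E$ (those not in $H$, and with $D$ already committed), and solve an LP relaxation of the problem of maximizing $u_a(\cdot,\al)$ over $\sv$ subject to the budget constraint $w(\cdot)\le W$ and a threshold constraint $u_p(\cdot,\al)\ge R$, with the variables for $D$ fixed to $1$ and variables for excluded high actions fixed to $0$. For the matroid case this LP is over the matroid polytope (intersected with one knapsack constraint and one threshold constraint), so a basic optimal solution has at most two fractional variables (two non-matroid constraints); for the matching case one solves the LP over the matching polytope and rounds it via the algorithm of~\cite{chekuri2011multi}, whose guarantee is that the rounded integral matching loses only a bounded amount on each of the two linear objectives ($w$ and $p'$). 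In both cases, since all fractional/dropped actions are low-value, each contributes at most $\eps\cdot u_p(S_\al,\al)$ to the principal and at most $\eps\cdot u_a(S_\al,\al)$ to the agent; a constant number of such actions is dropped, giving an additional $O(\eps)$ multiplicative loss on top of the $(1-4\eps)$ factor inherited from $S$, which is what produces the stated $(1-12\eps)$ and $(1-8\eps)$ bounds.

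The key structural point, exactly as in the claim in Lemma~\ref{lemma:ApproximationRatioProof} for the multi-budget case, is that the threshold constraint $u_p(\cdot,\al)\ge R$ is \emph{built into the LP}, so every feasible LP solution—and hence the rounded solution, up to the small rounding loss—automatically has principal utility close to $R$; this is what lets us simultaneously control both objectives without a two-sided demand oracle. Feasibility of the LP in the relevant iteration follows because the guess $D=S\cap H$ makes the (near-optimal) $S$ itself LP-feasible: its principal utility exceeds $(1-4\eps)u_p(S_\al,\al)\ge(1-4\eps)R$ — one would want to set the LP threshold to a slightly relaxed $R$ or argue a case split as in Lemma~\ref{lemma:ApproximationRatioProof} (if $p'(D)$ already exceeds $R$ we are immediately done; otherwise the low-value rounding loss is bounded relative to $R$). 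One then takes, over all guesses $D$, the solution maximizing $u_a(\cdot,\al)$, which certifies $\eps$-IC since the true near-optimal $S$ is among the feasible candidates.

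\textbf{Main obstacle.} The delicate part will be the matching case: the LP-rounding theorem of~\cite{chekuri2011multi} is randomized and its guarantees on the two side objectives hold only approximately/with high probability, so I expect the real work to be (i) choosing the threshold for ``small'' $p'_i,q_i$ values and the rounding parameters so that the unavoidable rounding error is dominated by $\eps\cdot u_p(S_\al,\al)$ and $\eps\cdot u_a(S_\al,\al)$ respectively, and (ii) handling the budget constraint, which the rounding may violate slightly, by shrinking the effective budget in the LP or by deleting one more low-value edge—again absorbing the loss into the $O(\eps)$ slack. The matroid case should be comparatively routine because the number of fractional variables in a basic solution is bounded by a constant (two extra linear constraints), so deterministic rounding by dropping fractional coordinates suffices.
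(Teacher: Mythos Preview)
Your proposal is correct and follows essentially the same approach as the paper: enumerate small subsets $D\subseteq T$ of size $O(\eps^{-1})$, build the threshold constraint on the principal's utility directly into the LP (the paper relaxes it to $(1-4\eps)R - u_p(D,\al)$, precisely as you anticipate), round by dropping fractional coordinates in the matroid case and by invoking the \cite{chekuri2011multi} rounding in the matching case, and absorb the loss using that all dropped actions are low-value. One minor correction: with two extra linear constraints on top of the matroid polytope, a basic solution can have up to \emph{four} fractional coordinates (the bound from \cite{GrZe10} is $2k$ for $k$ side constraints, not $k$); the paper uses exactly this, but of course any constant suffices and your analysis goes through unchanged.
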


Assuming the correctness of \cref{lemma:InnerSolveLP} (see Sections~\ref{subsec:matching_LP} and \ref{subsec:matroid_LP} for the proof), we now demonstrate how to construct a small representative set and prove \cref{lemma:approxBMI}.

\begin{definition}[Replacement]
\label{dfn:replacement}
    Let $0<\eps<\frac{1}{2}$, $S \in \sv_{\leq \psi (\eps)}$, and $Z_S \subseteq \sa$. Then $Z_S$ is a replacement of $S$ if the following holds: 
    \begin{enumerate}
        \item $(S \setminus H) \cup Z_S \in \sv_{\leq \psi(\eps)}$, \label{prop:rep1}
        \item $w(Z_S) \leq w(S \cap H)$, \label{prop:rep2}
        \item $u_p \left((S \setminus H) \cup Z_S, \al \right) \geq (1-\eps) \cdot u_p(S, \al)$, \label{prop:rep3}
        \item $u_a\left((S \setminus H) \cup Z_S, \al \right) \geq (1-\eps) \cdot u_a(S, \al)$, \label{prop:rep4}
        \item $|Z_S| \leq |S \cap H|$. \label{prop:rep5}
    \end{enumerate}
\end{definition}

\begin{definition}[{\sc SRS}]
\label{dfn:SRS}
    Let $0<\eps<\frac{1}{2}$ and $T \subseteq \sa$, we say that $T$ is a strict representative set (SRS) if, for any $S \in \sv_{\leq \psi(\eps)}$, there is a replacement $Z_S \subseteq T$ of $S$.
\end{definition}

\subsubsection{SRS Implies a Representative Set}

The next lemma will be instrumental in proving \cref{lemma:approxBMI}.

\begin{lemma}
\label{lemma:SRSisRepset}
    Given $\al \in (0,1)$ and $\eps \in (0,\frac{1}{2})$, let $T \se \sa$ be an SRS. Then, $T$ is also a representative set.
\end{lemma}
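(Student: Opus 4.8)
The plan is to start from the optimal solution $S_\al$, truncate it to its ``cheap'' part together with the high-profit elements, and then invoke the SRS property to swap out the high-profit elements for a replacement drawn from $T$. Concretely, first I would observe that the high-profit elements $H \cap S_\al$ cannot be too many: since every $i \in H_p$ contributes at least $\eps \cdot u_p(S_\al,\al)$ to the principal, and similarly for $H_q$, we get $|S_\al \cap H| \le 2/\eps \le \psi(\eps)$. This lets me pass to a feasible sub-solution $S' \se S_\al$ of size at most $\psi(\eps)$ that still contains all of $S_\al \cap H$, and whose utilities are a $(1-\eps)$-fraction (or better) of those of $S_\al$ for both principal and agent: one keeps $S_\al \cap H$ in full, and keeps enough of the ``low-profit'' remainder $S_\al \setminus H$ so that the dropped low-profit elements account for at most an $\eps$-fraction of each of the two objectives --- here I would use that each low-profit element has $p_i' < \eps\, u_p(S_\al,\al)$ and $q_i < \eps\, u_a(S_\al,\al)$, so a greedy/counting argument bounds how many must be retained (matroid/matching feasibility of a subset of a feasible set is automatic, and keeping $S_\al \cap H$ plus a few cheap elements keeps the size under $\psi(\eps)$).

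Next, with $S' \in \sv_{\le \psi(\eps)}$ in hand, I apply Definition~\ref{dfn:SRS}: there is a replacement $Z_{S'} \se T$ of $S'$. Set $S := (S' \setminus H) \cup Z_{S'}$. Property~\ref{prop:rep1} of Definition~\ref{dfn:replacement} gives $S \in \sv_{\le \psi(\eps)} \se \sv$, and property~\ref{prop:rep2} together with $w(S' \setminus H) \le w(S')\le W$ (budget feasibility of $S'$) gives $w(S) \le w(S')\le W$, so $S$ respects the budget as well. Properties~\ref{prop:rep3} and~\ref{prop:rep4} give $u_p(S,\al) \ge (1-\eps) u_p(S',\al) \ge (1-\eps)^2 u_p(S_\al,\al) \ge (1-4\eps) u_p(S_\al,\al)$ and the analogous bound for $u_a$, which are exactly conditions~2 and~3 of Definition~\ref{dfn:representative}. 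Finally, condition~1, namely $S \cap H \se T$: the elements of $S$ are either in $Z_{S'} \se T$, which is fine, or in $S' \setminus H$, which by construction are not in $H$ at all; hence $S \cap H \se Z_{S'} \se T$. (One should double-check that the replacement $Z_{S'}$ may itself contain elements of $H$ --- that is allowed and in fact intended, since these are precisely the representatives of the high-profit elements we removed.)

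The main obstacle is the passage from $S_\al$ to the size-$\psi(\eps)$ sub-solution $S'$ while simultaneously controlling \emph{both} objectives. Dropping low-profit elements is cheap for whichever objective ``sees'' them as low-profit, but an element can be low-profit for the principal and high for the agent, or vice versa --- except that such an element lies in $H_q$ or $H_p$ respectively, hence in $H$, hence is retained. So the elements eligible for removal are exactly those outside $H$, and for these every unit dropped costs at most $\eps\, u_p(S_\al,\al)$ on the principal side \emph{and} at most $\eps\, u_a(S_\al,\al)$ on the agent side; a careful accounting shows we can drop all but $O(1/\eps)$ of them and still lose at most a constant-times-$\eps$ fraction of each objective, keeping $|S'| = |S_\al \cap H| + O(1/\eps) = O(1/\eps) \le \psi(\eps)$. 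I would also make sure the loss bookkeeping (the $(1-\eps)$ from the truncation step composed with the $(1-\eps)$ from the replacement step) fits inside the $(1-4\eps)$ budget of Definition~\ref{dfn:representative}, which it does with room to spare; choosing the truncation to lose at most a $2\eps$-fraction and using $(1-2\eps)(1-\eps) \ge 1-4\eps$ suffices.
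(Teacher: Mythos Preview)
Your truncation step is the gap. You claim that from $S_\al$ one can extract $S' \se S_\al$ of size at most $\psi(\eps)$ that keeps all of $S_\al \cap H$ and still achieves a $(1-O(\eps))$-fraction of both objectives, on the grounds that every element outside $H$ contributes at most an $\eps$-fraction to each objective. But that inequality points the wrong way: knowing each cheap element is small bounds how many you can \emph{drop} (dropping $k$ of them costs at most $k\eps$ of each objective), not how few you need to \emph{keep}. If $S_\al$ consists of $n$ equal-valued elements with $n \gg 1/\eps$, then $S_\al \cap H = \emptyset$ and any sub-solution of size $\psi(\eps)=2\eps^{-2}$ captures only a $2\eps^{-2}/n$ fraction of either objective, which can be made arbitrarily small. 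So the passage ``we can drop all but $O(1/\eps)$ of them and still lose at most a constant-times-$\eps$ fraction'' is false, and the rest of the argument collapses.

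The paper avoids this by never trying to shrink the whole of $S_\al$. It introduces a \emph{second}, finer threshold $\eps^2$ to carve out $L = \{i \in S_\al : p_i' \ge \eps^2 u_p(S_\al,\al) \text{ or } q_i \ge \eps^2 u_a(S_\al,\al)\}$; this set has size at most $2\eps^{-2}=\psi(\eps)$, so the SRS property applies to $L$ and yields $\Delta_L = (L\setminus H)\cup Z_L$. The crucial extra step (Claim~\ref{claim:findXtoS}) then uses the matroid exchange property (or a matching argument) to reattach all but $O(\eps^{-1})$ elements of $Q=S_\al\setminus L$ to $\Delta_L$, obtaining $S=\Delta_L\cup X$. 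Because every element of $Q$ is below the $\eps^2$ threshold, the $O(\eps^{-1})$ dropped elements cost only $O(\eps^{-1})\cdot \eps^2 = O(\eps)$ of each objective. In short, the final witness $S$ is allowed to be large; only the piece you feed to the SRS must fit in $\sv_{\le \psi(\eps)}$. Your plan conflates these two size requirements.
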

\begin{proof}
   Let $S_\al$ be the optimal response of the agent. Denote by
  \begin{equation*}
       L_p = \{ i \in S_\al \mid p_i' \geq \eps^2 \cdot u_p(S_\al, \al) \} \qquad \text{and} \qquad
       L_q = \{ i \in S_\al \mid q_i \geq \eps^2 \cdot u_a(S_\al, \al) \}
   \end{equation*}
   the sets of actions with high values of utility for the principal and the agent, respectively.
   Also denote $L = L_p \cup L_q$ and $Q = S_\al \setminus L$.
   
  \begin{claim}
   \label{claim:LinMatroid}
       $L \in \sv_{\leq \psi(\eps)}$.
   \end{claim}
   
   Since $T$ is an SRS, and due to \cref{claim:LinMatroid}, there is a replacement $Z_L \se T$ of $L$. Let $\Delta_L = (L \setminus H) \cup Z_L$. By \cref{dfn:replacement}, it holds that $\Delta_L \in \sv_{\leq \psi(\eps)}$. The following claim is required in order to complete the construction of $S$:

  \begin{claim}
  \label{claim:findXtoS}
       There is a set $X \se Q \setminus \Delta_L$ such that $|X| \geq |Q| - 5 \eps^{-1}$ and $\Delta_L \cup X \in \sv$.
   \end{claim}

   Let $X$ be the set from \cref{claim:findXtoS}, and define $S = \Delta_L \cup X$. The following claims establish key properties of these sets:
   \begin{claim}
   \label{claim:SisFeasible}
       $S \in \sv$ and $S \cap H \subseteq T$.
   \end{claim}

   It remains to analyze the approximation ratio of $S$. Note that:
   \begin{equation}
   \label{eq:smallElementsBound}
       z(Q \setminus X) \leq |Q \setminus X| \cdot z_{max}(Q) \leq 5 \eps^{-1} \cdot z_{\max}(Q) \leq 5 \eps^{-1} \cdot \eps^2 \cdot z(S_\alpha) = 5\eps \cdot z(S_\alpha),
   \end{equation}
   where the second inequality follows from \cref{claim:findXtoS}, and the third holds since $z_i \leq \eps^2 \cdot z(S_\al)$ for all $i \in Q$. We can conclude that the approximation ratio of $S$ is:
   \begin{align*}
      z(S) & = z(\Delta_L) + z(X) \geq (1-\eps) \cdot z(L) + z(Q) - z(Q \setminus X) \\
       & \geq (1-\eps) \cdot z(S_\al) - z(Q \setminus X) \geq (1-\eps) \cdot z(S_\al) - 5\eps \cdot z(S_\al) = (1-6\eps) \cdot z(S_\al),
   \end{align*}
   where the first equality follows from the fact that $\Delta_L \cap X = \emptyset$. The first inequality is justified by \cref{dfn:replacement}. The second inequality holds because $S_\al = L \cup Q$ and it is trivially true that $z(Q) \geq (1-\eps) \cdot z(Q)$. Finally, the last inequality is based on \cref{eq:smallElementsBound}.    
\end{proof}

\subsubsection{Constructing an SRS via Exchange and Substitution Sets}

By \cref{lemma:SRSisRepset}, it suffices to construct an SRS to obtain a representative set. Let
\begin{equation*}
   C = \left\{(1-\al) p_{min} \cdot 2^i \mid i \in \left[ \log_{2} \left(\frac{\sum_{i=1}^{n}{p_i}}{p_{min}} \right) \right] \right\}
\end{equation*}
be the set of guesses for the principal's optimal utility, where $p_{min} = \min_{i \in \sa}p_i$. Since $u_p(S_\al, \al) \in \left[(1-\al) p_{min},(1-\al)\sum_{i=1}^{n}{p_i}\right]$, there exists $\beta_p \in C$ such that:
\begin{equation}
\label{eq:betaP-range}
    \frac{u_p(S_\al, \al)}{2} \leq \beta_p \leq u_p(S_\al, \al).
\end{equation}
Consider the guess of $\beta_p$ that satisfies \cref{eq:betaP-range}. Partition the actions into profit classes as follows:
\begin{gather*}
    \forall r \in \left \{1, \ldots, \left \lceil \log_{1-\eps} \left(\frac{\eps}{2}\right) \right \rceil \right \}: \quad \sk_{r}^{(p)}(\beta_p) = \left \{ i \in \sa \mid \frac{p_i'}{2 \beta_p} \in \left ((1-\eps)^r, (1-\eps)^{r-1} \right ] \right \}, \\
    \forall t \in \left \{1, \ldots, \left \lceil \log_{1-\eps} \left(\frac{\eps}{2}\right) \right \rceil \right \}: \quad \sk_{t}^{(q)}(\beta_q) = \left \{ i \in \sa \mid \frac{q_i}{2 \beta_q} \in \left ((1-\eps)^t, (1-\eps)^{t-1} \right ] \right \}.
\end{gather*}
Furthermore, we define:
\begin{gather*}
    \sk_0^{(p)}(\beta_p) = \sa \setminus \bigcup_{r \in \left \{1, \ldots, \left \lceil \log_{1-\eps} \left(\frac{\eps}{2}\right) \right \rceil \right \}} \sk_{r}^{(p)}(\beta_p), \qquad
    \sk_0^{(q)}(\beta_q) = \sa \setminus \bigcup_{t \in \left \{1, \ldots, \left \lceil \log_{1-\eps} \left(\frac{\eps}{2}\right) \right \rceil \right \}} \sk_{t}^{(q)}(\beta_q).
\end{gather*}
Finally, for $(r, t) \in \left \{0, 1, \ldots, \left \lceil \log_{1-\eps} \left(\frac{\eps}{2}\right) \right \rceil \right \} ^2 \setminus (0,0)$, denote:
\begin{equation*}
    \sk_{r, t}(\beta_p, \beta_q) = \sk_{r}^{(p)}(\beta_p) \cap \sk_{t}^{(q)}(\beta_q).
\end{equation*}
For convenience, let $\sigma = \left \{0, 1, \ldots, \left \lceil \log_{1-\eps} \left(\frac{\eps}{2}\right) \right \rceil \right \} ^2 \setminus (0,0)$. This set represents the indices for which we defined the profit classes $\sk_{r, t}(\beta_p, \beta_q)$. This partition of the actions has the following useful property:

\begin{observation}
\label{obs:exactlyOne}
    For every $i \in \bigcup_{(r, t) \in \sigma} \sk_{r,t}(\beta_p, \beta_q)$ such that $\{i\} \in \sv$, there exists a unique $(r, t) \in \sigma$ for which $i \in \sk_{r,t}(\beta_p, \beta_q)$.
\end{observation}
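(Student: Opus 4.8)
\textbf{Proof plan for Observation~\ref{obs:exactlyOne}.}
The plan is to show that the profit classes $\sk_r^{(p)}(\beta_p)$ (ranging over $r \in \{0,1,\ldots,\lceil \log_{1-\eps}(\eps/2)\rceil\}$) partition $\sa$, and likewise for $\sk_t^{(q)}(\beta_q)$; then the claim follows by taking the common refinement. First I would fix an action $i$ with $\{i\} \in \sv$. Since $\{i\}$ is a feasible singleton, $i$ can be part of a solution, so $p_i' = (1-\al)p_i \geq (1-\al)p_{\min} > 0$, hence the ratio $\rho_p := \frac{p_i'}{2\beta_p}$ is a strictly positive real number. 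I then split into two cases based on the magnitude of $\rho_p$.

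If $\rho_p > \eps/2$, I claim $i$ lies in exactly one of the classes $\sk_1^{(p)},\ldots,\sk_{\lceil \log_{1-\eps}(\eps/2)\rceil}^{(p)}$. Indeed, the half-open intervals $\big((1-\eps)^r,(1-\eps)^{r-1}\big]$ for $r = 1,2,\ldots$ are pairwise disjoint and their union over all $r \geq 1$ is $(0,1]$. Now $\rho_p \leq 1$ because $p_i' \leq u_p(S_\al,\al) \leq 2\beta_p$ using $i \in S_\al$-feasibility... actually more carefully: $p_i' \leq (1-\al)\sum_j p_j$ need not be $\leq 2\beta_p$ in general, but since $p_i' \leq u_p(\{i\},\al) \leq u_p(S_\al,\al)$ only if $\{i\}$ extends — the cleanest bound is $\rho_p \leq 1 \iff p_i' \leq 2\beta_p$; I would verify this holds because $p_i'$ is the principal utility contribution of a single feasible action and $2\beta_p \geq u_p(S_\al,\al) \geq p_i'$ when $\{i\}$ can be completed to an optimal-comparable solution — if this monotonicity is not immediate I fall back on defining $\sk_0^{(p)}$ to also absorb any $i$ with $\rho_p > 1$, which the definition of $\sk_0^{(p)}$ as the complement already does. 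Combined with the case $\rho_p \leq \eps/2$, where by the definition of $\sk_0^{(p)}$ the action $i$ belongs to $\sk_0^{(p)}(\beta_p)$ and to no $\sk_r^{(p)}$ with $r \geq 1$ (since all those intervals lie above $(1-\eps)^{\lceil\log_{1-\eps}(\eps/2)\rceil - 1} \geq \eps/2$... here I'd check the index bound: $(1-\eps)^{\lceil \log_{1-\eps}(\eps/2)\rceil} \leq \eps/2$, so intervals with $r$ up to that ceiling cover $(\eps/2, 1]$ exactly, and the residue $(0,\eps/2]$ goes to class $0$). Thus in every case $i$ lies in exactly one class of the $p$-partition, and symmetrically in exactly one class of the $q$-partition.

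Finally, I would conclude: given $i \in \bigcup_{(r,t)\in\sigma}\sk_{r,t}(\beta_p,\beta_q)$ with $\{i\}\in\sv$, let $r^*$ be the unique $p$-class index and $t^*$ the unique $q$-class index of $i$. Then $i \in \sk_{r^*,t^*}(\beta_p,\beta_q) = \sk_{r^*}^{(p)}(\beta_p) \cap \sk_{t^*}^{(q)}(\beta_q)$, and for any other $(r,t) \in \sigma$ with $i \in \sk_{r,t}(\beta_p,\beta_q)$ we would need $i \in \sk_r^{(p)}$ and $i \in \sk_t^{(q)}$, forcing $r = r^*$ and $t = t^*$ by uniqueness within each partition. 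Since $i$ is in the union over $\sigma$, the pair $(r^*,t^*)$ is not $(0,0)$ only if... actually $(r^*,t^*)$ could be $(0,0)$ a priori, but membership in the union over $\sigma$ forces $i \in \sk_{r,t}$ for some $(r,t)\neq(0,0)$, and by uniqueness that $(r,t)$ must equal $(r^*,t^*)$, so $(r^*,t^*) \neq (0,0)$, i.e., $(r^*,t^*) \in \sigma$. This gives existence and uniqueness.

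The main obstacle I anticipate is the boundary bookkeeping: verifying that the chosen number of classes $\lceil \log_{1-\eps}(\eps/2)\rceil$ makes the intervals $\big((1-\eps)^r,(1-\eps)^{r-1}\big]$ for $r = 1,\ldots,\lceil\log_{1-\eps}(\eps/2)\rceil$ together with the ``small'' residue $(0,\eps/2]$ exactly tile $(0,1]$, and confirming that every relevant ratio indeed lies in $(0,1]$ (equivalently $p_i' \leq 2\beta_p$ and $q_i \leq 2\beta_q$, which I would establish from $i$ being feasible and $\beta_p,\beta_q$ being constant-factor underestimates of the corresponding optimal utilities, with anything exceeding $1$ harmlessly swept into class $0$). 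Once the tiling is pinned down, the partition property and hence the observation follow immediately.
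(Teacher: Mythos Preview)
Your approach is correct and matches what the paper implicitly relies on (the statement is given as an Observation without proof). Note, however, that your detour about whether $\rho_p \leq 1$ is entirely unnecessary: since $\sk_0^{(p)}$ is \emph{defined} as the complement $\sa \setminus \bigcup_{r \geq 1} \sk_r^{(p)}$, the classes $\sk_0^{(p)}, \sk_1^{(p)}, \ldots$ partition $\sa$ by construction (the half-open intervals for $r\geq 1$ are pairwise disjoint, and class $0$ absorbs everything else, whether the ratio is too small \emph{or} too large), and likewise for the $q$-classes; uniqueness then follows immediately from the common refinement, with the hypothesis $i \in \bigcup_{(r,t)\in\sigma}\sk_{r,t}$ forcing the unique pair to lie in $\sigma$.
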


Based on the above partition to profit classes, we define the concepts of an exchange set and a substitution set.
\begin{definition}[Exchange Set]
\label{dfn:exchangeSet}
    Given $\al \in (0,1)$, $0<\eps<\frac{1}{2}$, $(r, t) \in \sigma$, and $X \subseteq {\sk}_{r,t}(\beta_p, \beta_q)$. We say that $X$ is an exchange set if: 

    For all $\Delta \in \sv_{\leq \psi (\eps)}$ and $a \in (\Delta \cap {\sk}_{r,t}(\beta_p, \beta_q)) \setminus X$ there is $b \in X \setminus \Delta$ satisfying 
    \begin{itemize}
        \item $w(b) \leq w(a)$,
        \item $\Delta \setminus \{a\} \cup \{b\} \in \sv_{\leq \psi(\eps)}$.
    \end{itemize} 
\end{definition}

\begin{definition}[Substitution]
\label{dfn:sub}
    For $G \in \sv_{\leq \psi(\eps)}$ and $Z_G \se \bigcup_{(r, t) \in \sigma } {\sk}_{r,t}(\beta_p, \beta_q)$, $Z_G$ is a substitution of $G$ if the following holds.
    \begin{enumerate}
        \item $(G \setminus H) \cup Z_G \in \sv_{\leq \psi(\eps)}$, \label{prop:sub1}
        
        \item $w(Z_G) \leq w(G \cap H)$, \label{prop:sub2}
    
        \item $(G \setminus H) \cap Z_G = \emptyset$, \label{prop:sub3}
        
        \item For all $(r, t) \in \sigma$ it holds that $|{\sk}_{r,t}(\beta_p, \beta_q) \cap Z_G| = |{\sk}_{r,t}(\beta_p, \beta_q) \cap G \cap H|$. \label{prop:sub4}
    \end{enumerate}
\end{definition}
The following claims give some useful properties of the substitution set.
\begin{claim}
\label{claim:thereIsSubstitution}
    If for all $(r, t) \in \sigma$ it holds that $T \cap {\sk}_{r, t}(\beta_p, \beta_q)$ is an exchange set, then, for any $G \in \sv_{\leq \psi (\eps)}$, there is a substitution $Z_G$ of $G$ such that $Z_G \se T$. 
\end{claim}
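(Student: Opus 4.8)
The plan is to build $Z_G$ class-by-class. Fix $G \in \sv_{\leq \psi(\eps)}$, and for each $(r,t) \in \sigma$ let $A_{r,t} = G \cap H \cap \sk_{r,t}(\beta_p,\beta_q)$ be the high-value actions of $G$ in that class. We want to replace each $A_{r,t}$ by an equal-cardinality subset of $T \cap \sk_{r,t}(\beta_p,\beta_q)$, one element at a time, using the exchange-set property. The natural approach is an iterative swapping argument: start with $\Delta = G$ (which lies in $\sv_{\leq\psi(\eps)}$ since $G$ does), and repeatedly pick an action $a \in \Delta$ that is ``high'' but not yet in $T$, invoke \cref{dfn:exchangeSet} with the current $\Delta$ to obtain a replacement $b \in (T\cap\sk_{r,t}) \setminus \Delta$ with $w(b) \le w(a)$ and $\Delta \setminus \{a\} \cup \{b\} \in \sv_{\leq\psi(\eps)}$, and update $\Delta$. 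The set $Z_G$ is then the collection of all the $b$'s introduced, and $(G\setminus H)\cup Z_G$ is the final $\Delta$ (after all high elements of $G$ outside $T$ have been swapped out — noting that high elements of $G$ that already lie in $T$ need no swap and can simply be kept, i.e., placed into $Z_G$ directly).

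The key steps, in order, are: (i) observe that throughout the process $\Delta$ stays in $\sv_{\leq\psi(\eps)}$, so \cref{dfn:exchangeSet} remains applicable at every step; (ii) argue the process terminates — each swap strictly decreases $|\Delta \cap \sk_{r,t} \setminus (T \cap \sk_{r,t})|$ for the class in question, since the newly added $b$ lies in $T$ and we remove an $a \notin T$, and crucially $b \notin \Delta$ means we never re-introduce something already there, so $|\Delta|$ is non-increasing and the "distance to $T$" strictly drops; (iii) collect $Z_G$ as the set of introduced elements together with the already-in-$T$ high elements of $G$, and verify the four properties of \cref{dfn:sub}. Property~\ref{prop:sub1} is the terminal membership in $\sv_{\leq\psi(\eps)}$. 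Property~\ref{prop:sub2}, $w(Z_G) \le w(G\cap H)$, follows by summing $w(b) \le w(a)$ over all swaps and adding the kept elements' weights verbatim — here one must set up the bookkeeping so that each removed $a$ is charged to exactly one added $b$, which is clean because each exchange step is a single $a \leftrightarrow b$ swap. Property~\ref{prop:sub3}, $(G\setminus H)\cap Z_G = \emptyset$: elements of $G\setminus H$ are never removed during the process (we only ever swap out elements of $H$), and each $b$ we add satisfies $b \notin \Delta$ at the moment of addition, while $(G\setminus H) \subseteq \Delta$ is maintained as an invariant, so $b \notin G\setminus H$; the kept high elements are in $H$, hence disjoint from $G\setminus H$. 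Property~\ref{prop:sub4}, the per-class cardinality match: each swap preserves $|\Delta \cap \sk_{r,t}|$ (since $b$ and $a$ lie in the same class by \cref{dfn:exchangeSet}), and using \cref{obs:exactlyOne} to ensure the classes are disjoint, $Z_G$ inherits exactly $|A_{r,t}|$ elements in class $(r,t)$.

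The main obstacle I anticipate is the interaction between classes during the iterative swaps, and the termination/invariant bookkeeping. When we swap within class $(r,t)$, the exchange-set guarantee is about $\Delta$ as a whole being in $\sv_{\leq\psi(\eps)}$ — so a swap in one class could in principle interfere with the analysis of another, and one must be careful that processing classes in sequence (fully resolving one class before moving to the next, or interleaving) keeps all invariants — namely $(G\setminus H)\subseteq\Delta$, $\Delta\in\sv_{\leq\psi(\eps)}$, and $|\Delta\cap\sk_{r,t}|$ unchanged for every class — intact at all times. A subtle point is that the new element $b$ lies in $\sk_{r,t}$ but could a priori collide with a future swap's requirements; however since each $b\notin\Delta$ when added and we only ever remove $H$-elements not already in $T$, no element is touched twice, so the process is genuinely finite and the charging is one-to-one. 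Once these invariants are pinned down, the four properties of \cref{dfn:sub} follow essentially mechanically, and $Z_G \subseteq T$ holds by construction since every element of $Z_G$ is either an introduced $b \in T\cap\sk_{r,t}$ or a kept high element of $G$ that was already in $T$.
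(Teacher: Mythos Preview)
Your constructive approach is correct: starting from $\Delta = G$ and swapping out each element of $G\cap H\setminus T$ one at a time using the exchange-set property works, and the invariants you identify (namely $(G\setminus H)\subseteq\Delta$, $\Delta\in\sv_{\leq\psi(\eps)}$, per-class cardinalities preserved, distinctness of the introduced $b$'s) are exactly what is needed.

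The paper takes a slightly different, non-constructive route that bypasses the iterative bookkeeping you flag as the main obstacle. It first observes that $G\cap H$ is itself a substitution of $G$, so the set of substitutions is nonempty; it then \emph{defines} $Z_G$ to be any substitution maximizing $|Z_G\cap T|$, and argues by contradiction. If some $a\in Z_G\setminus T$ existed, one invokes the exchange-set property once with $\Delta = (G\setminus H)\cup Z_G$ to obtain $b\in T$, and checks directly that $Z_G-a+b$ is again a substitution with strictly larger intersection with $T$, contradicting maximality. This extremal framing reduces the whole argument to analyzing a single swap, with the invariants absorbed into the hypothesis ``$Z_G$ is a substitution.'' Your version is more explicit and constructive, which is arguably more informative, but the paper's version is shorter and avoids the termination and distinctness analysis entirely.
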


\begin{claim}
\label{claim:subtitueIsRepset}
    Let $G \in \sv_{\leq \psi(\eps)}$, and let $Z_G$ be a substitution of $G$ such that $Z_G \se T$. Then $Z_G$ is a replacement of $G$.
\end{claim}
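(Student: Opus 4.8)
The goal is to verify that a substitution $Z_G$ of $G$ (with $Z_G \subseteq T$) satisfies all five conditions in Definition~\ref{dfn:replacement}. Conditions~\ref{prop:rep1} and~\ref{prop:rep2} are handed to us almost verbatim: Definition~\ref{dfn:sub}\ref{prop:sub1} gives $(G\setminus H)\cup Z_G \in \sv_{\leq\psi(\eps)}$, and Definition~\ref{dfn:sub}\ref{prop:sub2} gives $w(Z_G)\le w(G\cap H)$. For condition~\ref{prop:rep5}, $|Z_G|\le |G\cap H|$, I would sum Definition~\ref{dfn:sub}\ref{prop:sub4} over all $(r,t)\in\sigma$: the left side counts the elements of $Z_G$ lying in $\bigcup_{(r,t)\in\sigma}\sk_{r,t}(\beta_p,\beta_q)$, which is all of $Z_G$ by the containment hypothesis on $Z_G$ in Definition~\ref{dfn:sub}; the right side counts elements of $G\cap H$ in those classes, which is at most $|G\cap H|$. (One should check that every element of $Z_G$ does land in some $\sk_{r,t}$ with $(r,t)\neq(0,0)$; this follows because singletons from $Z_G$ are feasible — being subsets of a feasible set via~\ref{prop:sub1} — so Observation~\ref{obs:exactlyOne} applies and places each such element in a unique class, and the $(0,0)$ class is excluded from $\sigma$ by construction, with the understanding that elements of $Z_G$ are not in $\sk_0^{(p)}\cap\sk_0^{(q)}$.)

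The substance of the proof is conditions~\ref{prop:rep3} and~\ref{prop:rep4} — the utility bounds $z\bigl((G\setminus H)\cup Z_G,\al\bigr)\ge (1-\eps)\,z(G,\al)$ for $z\in\{u_p,u_a\}$. Using the unified notation, I want to show $z\bigl((G\setminus H)\cup Z_G\bigr)\ge (1-\eps)\,z(G)$. Since $Z_G \cap (G\setminus H)=\emptyset$ by Definition~\ref{dfn:sub}\ref{prop:sub3}, the left side equals $z(G\setminus H)+z(Z_G)$, while $z(G)=z(G\setminus H)+z(G\cap H)$. So it suffices to prove $z(Z_G)\ge (1-\eps)\,z(G\cap H)$ — actually it suffices to prove it class-by-class, i.e.\ $z(Z_G\cap\sk_{r,t})\ge(1-\eps)\,z(G\cap H\cap\sk_{r,t})$ for each $(r,t)\in\sigma$, and then sum. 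The key point is that within a single profit class $\sk_{r,t}(\beta_p,\beta_q)$, any two actions $i,j$ have $z_i$ and $z_j$ within a factor $(1-\eps)$ of each other (this is the whole reason the classes are defined by geometric bucketing of $p_i'/2\beta_p$ and $q_i/2\beta_q$): concretely $z_i \ge (1-\eps)\,z_j$ for $z=p'$ when $i,j\in\sk_r^{(p)}$, and likewise for $q$ on $\sk_t^{(q)}$. Combining this with the cardinality match from Definition~\ref{dfn:sub}\ref{prop:sub4}, $|Z_G\cap\sk_{r,t}| = |G\cap H\cap\sk_{r,t}|$, pair up the two equinumerous sets arbitrarily and bound termwise, yielding $z(Z_G\cap\sk_{r,t}) \ge (1-\eps)\,z(G\cap H\cap\sk_{r,t})$.

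One subtlety to handle carefully: elements of $G\cap H$ that sit in the $\sk_0$ class, i.e.\ with index $(r,t)$ having $r=0$ or $t=0$. By the definition of $H$ and the ranges of $r$ over $\{1,\dots,\lceil\log_{1-\eps}(\eps/2)\rceil\}$, an element of $H_p$ has $p_i' \ge \eps\, u_p(S_\al,\al) \ge \eps\,\beta_p > (\eps/2)\cdot 2\beta_p$, which forces it out of $\sk_0^{(p)}$ (the residual class collects exactly the actions with ratio below $(1-\eps)^{\lceil\log_{1-\eps}(\eps/2)\rceil}\le \eps/2$); symmetrically for $H_q$ and $\sk_0^{(q)}$. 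Hence every element of $G\cap H$ lies in some genuine class $\sk_{r,t}$ with $(r,t)\in\sigma$, so no mass in the class-by-class sum is lost, and Definition~\ref{dfn:sub}\ref{prop:sub4} indeed accounts for all of $G\cap H$. I expect this bookkeeping — pinning down that $H$ avoids the residual classes, and that the geometric bucketing really does give the within-class ratio bound with the exact constant — to be the main (though not deep) obstacle; the rest is assembling the pieces.
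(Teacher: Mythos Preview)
Your proposal is correct and follows essentially the same approach as the paper's proof: Properties~\ref{prop:rep1} and~\ref{prop:rep2} are inherited directly from Definition~\ref{dfn:sub}; Properties~\ref{prop:rep3} and~\ref{prop:rep4} are obtained by a class-by-class comparison using the cardinality match (Property~\ref{prop:sub4}) together with the within-class $(1-\eps)$ ratio bound from the geometric bucketing, combined via the disjointness in Property~\ref{prop:sub3}; and Property~\ref{prop:rep5} comes from summing Property~\ref{prop:sub4} over $(r,t)\in\sigma$ (the paper in fact obtains equality $|Z_G|=|G\cap H|$, but your $\leq$ suffices). Your explicit check that elements of $G\cap H$ avoid the residual class $(0,0)$ is a welcome bit of bookkeeping that the paper leaves to Observation~\ref{obs:exactlyOne}.
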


Using the above claims we have the following:
\begin{lemma}
\label{lemma:sufficientRep}
    Let $T \se \sa$. If $T \cap {\sk}_{r, t}(\beta_p, \beta_q)$ is an exchange set
 for all $(r, t) \in \sigma$, then $T$ is an SRS. 
\end{lemma}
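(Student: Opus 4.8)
The plan is to unwind the definition of an SRS (\cref{dfn:SRS}) and reduce it directly to the machinery already built: \cref{claim:thereIsSubstitution} and \cref{claim:subtitueIsRepset}. Fix $\al \in (0,1)$ and $0<\eps<\frac12$, and suppose $T \se \sa$ has the property that $T \cap \sk_{r,t}(\beta_p,\beta_q)$ is an exchange set for every $(r,t) \in \sigma$. To show $T$ is an SRS, by \cref{dfn:SRS} I must take an arbitrary $S \in \sv_{\leq \psi(\eps)}$ and produce a replacement $Z_S \se T$ of $S$ in the sense of \cref{dfn:replacement}.

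The argument is then a two-step chain. First, apply \cref{claim:thereIsSubstitution} with $G := S$: since by hypothesis $T \cap \sk_{r,t}(\beta_p,\beta_q)$ is an exchange set for all $(r,t)\in\sigma$, there exists a substitution $Z_S$ of $S$ with $Z_S \se T$. Second, apply \cref{claim:subtitueIsRepset} with $G := S$ and this $Z_S$: it tells us that $Z_S$ is a replacement of $S$. Since $Z_S \se T$, this is exactly what \cref{dfn:SRS} demands, so $T$ is an SRS. The only mild subtlety is to check that the hypotheses of the two claims line up with what we have: \cref{claim:thereIsSubstitution} requires $S \in \sv_{\leq\psi(\eps)}$, which holds by our choice of $S$; and \cref{claim:subtitueIsRepset} requires $Z_S$ to be a substitution of $S$ contained in $T$, which is precisely the output of the first step.

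I expect no real obstacle here — the lemma is a bookkeeping composition of the two preceding claims, and the genuine content (constructing substitutions from exchange sets, and verifying that substitutions satisfy the five conditions of \cref{dfn:replacement}, in particular the weight bound \ref{prop:rep2} and the two utility bounds \ref{prop:rep3}--\ref{prop:rep4}) has been isolated into \cref{claim:thereIsSubstitution} and \cref{claim:subtitueIsRepset}. If anything needs care, it is making sure that the quantification is handled correctly: the exchange-set hypothesis on $T$ is a \emph{single} global assumption, while the SRS conclusion quantifies over \emph{all} $S \in \sv_{\leq\psi(\eps)}$, so the two claims must be invoked once per such $S$ — but since both claims are themselves universally quantified over $G \in \sv_{\leq\psi(\eps)}$, this causes no difficulty. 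The proof is therefore a short paragraph: instantiate, chain, conclude.
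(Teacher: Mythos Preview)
Your proposal is correct and follows exactly the same approach as the paper: take an arbitrary $G \in \sv_{\leq \psi(\eps)}$, invoke \cref{claim:thereIsSubstitution} to obtain a substitution $Z_G \se T$, then invoke \cref{claim:subtitueIsRepset} to conclude that $Z_G$ is a replacement of $G$. The paper's proof is the same two-line chaining of these two claims, with no additional content.
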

\begin{proof}
    Let $G \in \sv_{\leq \psi(\eps)}$. To show that $T$ is an SRS, we need to demonstrate that there exists a replacement $Z_G \se T$ of $G$. By \cref{claim:thereIsSubstitution}, there exists a substitution $Z_G$ of $G$ such that $Z_G \se T$. Moreover, by \cref{claim:subtitueIsRepset}, $Z_G$ is also a replacement of $G$, as required.
\end{proof}

\subsubsection{Constructing an Exchange Set}

For the proof of \cref{lemma:approxBMI}, we need to show how to construct an exchange set. Huang and Ward \cite[Theorem 3.6]{huang2023fpt} presented an algorithm that, given $\beta_p, \beta_q$ and $(r,t) \in \sigma$, constructs an exchange set $X \se T \cap {\sk}_{r, t}(\beta_p, \beta_q)$ such that $|X| \leq \psi(\eps)$ in $poly(n)$ time. By leveraging this black-box algorithm, we can propose an algorithm for constructing a solution for the local phase for our problems (see \Cref{alg:BMICmain}).

\algoBMIC

\begin{proof}[Proof of \cref{lemma:approxBMI}]
    Note that when $\beta_p$ satisfies \cref{eq:betaP-range}, $T$ is a representative set, since for all $(r, t) \in \sigma$, $T \cap {\sk}_{r, t}(\beta_p, \beta_q)$ is an exchange set. Therefore, by Lemmas \ref{lemma:sufficientRep} and \ref{lemma:SRSisRepset}, we may consider the iteration of the algorithm where $T$ is a representative set. Hence, for all $i \in \left \{ 0, 1, \ldots, \left \lceil \eps^{-1} \right \rceil \right \}$, if $S_i \neq \emptyset$, \cref{lemma:InnerSolveLP} implies that:
    \begin{equation}
    \label{eq:bmicPrincipalBound}
        (1-14\eps) \cdot u_p(S_\al, \al) \leq u_p(S_i, \al).
    \end{equation}
    Now, by the definition of $\beta_q$, there exists an $i^* \in \left \{ 0, 1, \ldots, \left \lceil \eps^{-1} \right \rceil \right \}$ such that:
    \begin{equation*}
        (1-14\eps)^2 \cdot u_a(S_\al, \al) \leq  2\beta_q \cdot (1-14\eps)^{i^*} \leq (1-14\eps) \cdot u_a(S_\al, \al).
    \end{equation*}
    In this iteration, by \cref{lemma:InnerSolveLP}, $S$ satisfies $(1-14\eps) \cdot u_a(S_\al, \al) \leq u_a(S, \al)$. It follows that $S_{i^*} \neq \emptyset$. Combining this with the previous inequality, we get: $(1-14\eps)^2 \cdot u_a(S_\al, \al) \leq u_a(S, \al)$.
    
    Finally, we return a set $S_j$ that satisfies \cref{eq:bmicPrincipalBound} and $u_a(S_j, \al) \geq u_a(S_{i^*}, \al)$. Therefore, by choosing an appropriate value of $\eps$, we achieve the desired approximation.

    Now, observe that the following equation holds for all $0<\eps<\frac{1}{2}$:
    \begin{equation}
    \label{eq:epsBound}
        \left \lceil \log_{1-\eps} \left(\frac{\eps}{2}\right) \right \rceil + 1 \leq \log_{1-\eps} \left(\frac{\eps}{2}\right) + 2 \leq 4 \eps^{-2}.
    \end{equation}

    For the running time of \cref{alg:BMICmain}, we note that $|C| = poly(n)$, and \cref{eq:epsBound}, $|\sigma| \leq \frac{16}{\eps^{4}}$. By \cref{lemma:InnerSolveLP} each iteration takes $2^{O \left(\eps^{-1} \cdot \log \left( \frac{1}{\eps} \right) \right)} \cdot poly(n)$ time, and Step \eqref{step:halfApprox} takes $poly(n)$ time, since $\eps = \frac{1}{2}$ is a constant. Therefore, the total running time is bounded by:
    $poly(n) \cdot \eps^{-1} \cdot 2^{O \left(\eps^{-1} \cdot \log \left( \frac{1}{\eps} \right) \right)} = poly(n) \cdot 2^{O \left(\eps^{-1} \cdot \log \left( \frac{1}{\eps} \right) \right)}$.

    To bound the size of the returned representative set, we note that there are at most $16 \eps^{-4} = 4 \psi(\eps)^2$ iterations of Step \eqref{step:sigmaLoop}, and the size of each exchange set in Step \eqref{step:exchangeSet} is at most $\psi(\eps)$ using the results of \cite{huang2023fpt}.
\end{proof}

\subsection{Local Phase: Matching}
\label{subsec:matching_LP}

We now proceed to prove \cref{lemma:InnerSolveLP} for the matching constraint. 

Let $L$ and $D$ be two sets to be defined later. We consider the following linear program.
\begin{align}
\textnormal{LP}(D ,R)~~~~~~~~~~~~~~~~~
\max \quad & \sum_{e \in L \setminus D} q_e \cdot x_e \nonumber \\
\text{subject to\quad} & (1-\al) \sum_{e \in L \setminus D} p_e \cdot x_e \geq (1-6\eps) R - u_p(D, \al), \label{constraint:matching-principal}\\
& \sum_{e \in L \setminus D} w_e \cdot x_e \leq W - w(D), \nonumber \\
& x \in \mathcal{P}_{G_D}. \nonumber
\end{align}
Here, $\mathcal{P}_{G_D}$ denotes the matching polytope of the graph $G_D = (V,\sa_D)$ where $\sa_D$ is obtained by removing from $\sa$ all edges that share an endpoint with any edge in $D$. Let $\beta_p,\beta_q$ be the guesses satisfying
\[
\frac{u_p(S_\al,\al)}{2} \le \beta_p \le u_p(S_\al,\al),
\qquad
\frac{u_a(S_\al,\al)}{2} \le \beta_q \le u_a(S_\al,\al),
\]
We define the set of \emph{low-value edges} as
\[
L
=
\left\{
e \in \sa
\;\middle|\;
p_e' \le 2\eps \beta_p,\;
q_e \le 2\eps \beta_q
\right\}.
\]
Given a representative set $T \subseteq \sa$, we exhaustively enumerate all subsets
$D \subseteq T$ with $|D| \le 2/\eps$.
For each such $D$, we solve $\textnormal{LP}(D,R)$ over the ground set $L\setminus D$.

\begin{claim}
\label{claim:matching-LP}
When $D = S \cap H$ is correctly guessed, the set $S \setminus H$ forms a feasible solution to $\textnormal{LP}(D, R)$.
\end{claim}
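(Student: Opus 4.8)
The plan is to verify, one constraint at a time, that the indicator vector $x$ of $S \setminus H$ (restricted to edges in $L \setminus D$) is feasible for $\textnormal{LP}(D,R)$ when $D = S \cap H$, where $S$ is the solution guaranteed by \cref{dfn:representative} and $R \in [0, u_p(S_\al,\al)]$. First I would argue that $S \setminus H \subseteq L$: by definition, every edge $e \in S \setminus H$ satisfies $p_e' < \eps \cdot u_p(S_\al,\al)$ and $q_e < \eps \cdot u_a(S_\al,\al)$ (otherwise $e \in H$), so $e$ has small profit in both objectives; the containment $e \in L_w$ (small weight) should follow from the choice of $\delta$ together with the fact that high-weight edges can be absorbed into $H$ or bounded in number — this is exactly the kind of normalization/thresholding argument used in the analogous matroid claim, and I expect it to be the place where one must be a little careful about how $\mu_p, \mu_q, \mu_w$ are defined. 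Granting $S \setminus H \subseteq L$, and since $D = S \cap H$ is disjoint from $S \setminus H$, the vector $x = \mathbbm{1}_{S \setminus H}$ is supported on $L \setminus D$, as required.

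Next I would check the matching-polytope constraint $x \in \mathcal{P}_{G_D}$. Since $S \in \sv = \Gamma(G)$ is a matching and $D \subseteq S$, the edges of $S \setminus H$ form a matching that shares no endpoint with any edge of $D$ (they are all part of the single matching $S$); hence $S \setminus H \subseteq \sa_D$ and $S \setminus H$ is a matching in $G_D$, so its indicator vector lies in the integral matching polytope $\mathcal{P}_{G_D}$.

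For the budget constraint, $w(S \setminus H) = w(S) - w(D) \le W - w(D)$, using $w(S) \le W$ (feasibility of $S$) and $D = S \cap H \subseteq S$. For the principal-utility constraint~\eqref{constraint:matching-principal}, I would write $(1-\al)\sum_{e \in L \setminus D} p_e x_e = (1-\al) \, p(S \setminus H) = u_p(S,\al) - u_p(D,\al)$, and then bound $u_p(S,\al) \ge (1-4\eps) \cdot u_p(S_\al,\al) \ge (1-4\eps) R$, where the first inequality is property~2 of \cref{dfn:representative} and the second uses $R \le u_p(S_\al,\al)$ together with $1-4\eps > 0$. Rearranging gives $(1-\al)\sum_{e \in L \setminus D} p_e x_e \ge (1-4\eps) R - u_p(D,\al)$, which is precisely~\eqref{constraint:matching-principal}.

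The main obstacle I anticipate is the first step — showing $S \setminus H \subseteq L$, i.e.\ that every edge surviving the removal of high-utility actions also has small \emph{weight} relative to $\mu_w$. The profit bounds in both objectives are immediate from the definition of $H$, but the weight bound is not a consequence of the representative-set definition and must come from how $\mu_w$ (and the scaling parameter $\delta = \eps^3/20k$) are set up earlier for the matching instance; once that bookkeeping is in place, the remaining three constraint checks are routine substitutions as sketched above.
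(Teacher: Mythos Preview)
Your approach is exactly the paper's: the paper simply asserts that \cref{claim:matching-LP} follows by ``a straightforward adaptation'' of the proof of \cref{claim:BMIC-LP}, and your constraint-by-constraint verification of the budget constraint, the principal-utility constraint \eqref{constraint:matching-principal}, and membership in the matching polytope $\mathcal{P}_{G_D}$ mirrors the three checks in that matroid proof precisely, with the hereditary-property argument correctly replaced by the observation that $S\setminus H$ and $D$ are disjoint pieces of the single matching $S$.

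Where you go beyond the paper is in flagging the containment $S\setminus H\subseteq L$, and in particular the weight condition $S\setminus H\subseteq L_w$. You are right that this has no analog in the matroid case (the set $E(\beta_q,R)$ there involves only $p'$ and $q$ thresholds), and that nothing in the definition of $H=H_p\cup H_q$ controls edge weights. The paper does not address this point---neither in its one-line deferral here nor in the proof of \cref{claim:BMIC-LP}, which itself never verifies the analogous containment $S\setminus H\subseteq E(\beta_q,R)$. Moreover, the quantities $\mu_p,\mu_q,\mu_w$ that define $L$ are only introduced later as expectations over the rounded matching $M$, making the definition of $L$ somewhat circular as written. So your ``main obstacle'' is not a flaw in your reasoning but a genuine presentation gap in the paper; the parts you do verify are correct and at least as detailed as what the paper provides.
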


Let $x$ denote an optimal (fractional) solution to $\textnormal{LP}(D,R)$. Chekuri et al.~\cite{chekuri2011multi} presented a polynomial-time randomized rounding algorithm which, given a fractional matching $x$, produces an integral matching $\bar{x}$ corresponding to a vertex of the matching polytope $\mathcal{P}_{G_D}$. Let $M = \{ e \in \sa \mid \bar{x}_e = 1 \}$ denote integral matching.

\begin{prop}[\cite{chekuri2011multi}]
\label{claim:rounding_props}
   For any $\gamma \in (0,\frac{1}{2})$ and $\eps \in [0,1]$, it holds that:
    \begin{enumerate}
        \item $\mathbb{E}[\mathbbm{1}_M] = (1-\gamma)x$,
        
        \item $\Pr \left[ z(M) \leq (1-\eps) \cdot \mu_z \right] \leq e^{-\mu_z \gamma \eps^2/20}$, where $\mu_z = \mathbb{E}[z(M)]$ for $z \in \{p', q\}$,
        
        \item $\Pr \left[ w(M) \geq (1+\eps) \cdot \mu_w \right] \leq e^{-\mu_w \gamma \eps^2/20}$, where $\mu_w = \mathbb{E}[w(M)]$.
    \end{enumerate}
\end{prop}

\begin{claim}
\label{claim:utilitiesMatchingProps}
    With high probability, the matching $M$ satisfies:
    \begin{itemize}
        \item $u_a(M, \al) \geq (1 - \eps)^2 \cdot u_a(S \setminus H, \al)$,
        \item $u_p(M, \al) \geq (1 - \eps)^2 \cdot \left((1 - 6\eps) R - u_p(D, \al)\right)$.
    \end{itemize}
\end{claim}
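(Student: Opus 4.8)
The plan is to run the randomized rounding of Chekuri et al.\ (\cref{claim:rounding_props}) on the basic optimal solution $x$ of $\textnormal{LP}(D,R)$ for the good guess $D = S\cap H$, with rounding parameter $\gamma := \eps$, and then combine its three concentration guarantees by a union bound. First I would record the expectations: \cref{claim:rounding_props}(1) gives $\mathbb{E}[a(M)] = (1-\gamma)\,a(x)$ for every $a\in\{p,q,w\}$, writing $a(x) := \sum_{e\in L\setminus D} a_e x_e$. For the guess $D = S\cap H$, \cref{claim:matching-LP} tells us that $S\setminus H$ is a feasible point of $\textnormal{LP}(D,R)$, so since $x$ maximizes the agent's objective we get $q(x) \ge q(S\setminus H) = u_a(S\setminus H,\al)$; feasibility of $x$ itself gives $(1-\al)\,p(x) \ge (1-4\eps)R - u_p(D,\al)$ and $w(x) \le W - w(D)$.

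Next I would invoke the tail bounds: on an event $\mathcal{E}$ obtained by a union bound over the three failure events in \cref{claim:rounding_props}(2)--(3), we have $q(M) \ge (1-\eps)\,\mathbb{E}[q(M)]$, $p(M) \ge (1-\eps)\,\mathbb{E}[p(M)]$, and $w(M) \le (1+\eps)\,\mathbb{E}[w(M)]$. Chaining these with the bounds above and using $\gamma = \eps$, so that $(1-\gamma)(1-\eps) = (1-\eps)^2$, gives on $\mathcal{E}$
\[
u_a(M,\al) = q(M) \ \ge\ (1-\eps)^2\,q(x) \ \ge\ (1-\eps)^2\,u_a(S\setminus H,\al),
\]
\[
u_p(M,\al) = (1-\al)\,p(M) \ \ge\ (1-\eps)^2\,(1-\al)\,p(x) \ \ge\ (1-\eps)^2\big((1-4\eps)R - u_p(D,\al)\big),
\]
which are exactly the two asserted inequalities; moreover $w(M) \le (1+\eps)(1-\eps)\,w(x) \le w(x) \le W - w(D)$, so the budget is respected and $M$ may be joined with $D$ downstream.

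The hard part will be certifying that $\mathcal{E}$ holds with high probability, i.e.\ that each failure probability $e^{-\mathbb{E}[a(M)]\,\gamma\eps^2/20}$ is negligible, which forces each $\mathbb{E}[a(M)]$ to be sufficiently large. This is precisely the role of the normalization carried out just before the claim: after rescaling so that $\max_{e\in L\setminus D}a_e = 1$, applying the low-profit membership condition to the surviving unit-weight edge yields $\delta^{-1}\le\mu_a$, and with $\delta = \eps^3/(20k)$ each relevant target is at least $\Omega(k/\eps^3)$; after a short case split handling the regime in which $D$ alone already nearly saturates a constraint (where the remaining low-value edges contribute only a negligible amount and the claim holds trivially), this makes the expectations $\mathbb{E}[a(M)] = (1-\gamma)a(x)$ of the same order, so each bad-event probability is $e^{-\Omega(k)}$ and choosing the constant $k$ large enough pushes the union bound below any prescribed threshold. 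Modulo this probabilistic bookkeeping, the two displayed chains finish the proof.
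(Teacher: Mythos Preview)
Your proposal is correct and follows essentially the same argument as the paper: set $\gamma=\eps$, use $\mathbb{E}[z(M)]=(1-\eps)z(x)$ from \cref{claim:rounding_props}(1), apply the concentration bounds with $\mu_z\ge\delta^{-1}$ coming from the normalization (so each failure probability is $e^{-k}$), and then chain with the feasibility/optimality of $x$ via \cref{claim:matching-LP}. You additionally fold in the weight bound $w(M)\le W-w(D)$, which the paper defers to the proof of \cref{lemma:InnerSolveLP}, and you are somewhat more explicit about the case analysis underpinning the lower bound on $\mu_z$; otherwise the arguments coincide.
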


\begin{proof}
    Fix $z \in \{p',q\}$ and let $x$ denote an optimal fractional solution of $\textnormal{LP}(D,R)$.
    Define the fractional $z$-value $Z_x := \sum_{e \in L \setminus D} x_e z_e$, and let $M$ be the integral matching obtained by applying the randomized rounding procedure of \cite{chekuri2011multi} to $x$.
    
    Setting $\gamma=\eps$ in \Cref{claim:rounding_props}, we have for every edge $e \in L\setminus D$ that $\mathbb{E}[\mathbbm{1}_M(e)] = (1-\eps)\,x_e$. By linearity of expectation,
    \begin{align*}
    \mu_z
    \;:=\;
    \mathbb{E}[z(M)]
    =
    \mathbb{E}\!\left[\sum_{e \in L \setminus D} z_e \mathbbm{1}_M(e)\right]
    =
    \sum_{e \in L \setminus D} z_e \mathbb{E}[\mathbbm{1}_M(e)] =
    (1-\eps)\sum_{e \in L \setminus D} x_e z_e
    =
    (1-\eps)\,Z_x .
    \end{align*}
    
    Applying \Cref{claim:rounding_props} with $\gamma=\eps$, we obtain
    \[
    \Pr\!\left[z(M) \le (1-\eps)\mu_z\right]
    \;\le\;
    \exp\!\left(-\frac{\mu_z \eps^3}{20}\right).
    \]
    Substituting $\mu_z=(1-\eps)Z_x$, we get
    \[
    \Pr\!\left[z(M) \le (1-\eps)\cdot(1-\eps)Z_x\right]
    =
    \Pr\!\left[z(M) \le (1-\eps)^2 Z_x\right]
    \;\le\;
    \exp\!\left(-\frac{(1-\eps)Z_x \eps^3}{20}\right).
    \]
    Therefore, with probability at least $1-\exp\!\left(-\frac{(1-\eps)Z_x \eps^3}{20}\right)$, it holds that
    \begin{equation}
    \label{eq:z_rounding_bound}
    z(M) \;\ge\; (1-\eps)^2 Z_x.
    \end{equation}
    
    Take $z=p'$. Since $x$ is feasible for $\textnormal{LP}(D,R)$, it satisfies
    \[
    Z_x = \sum_{e \in L \setminus D} p'_e x_e
    \ge (1-6\eps)R - u_p(D,\al).
    \]
    Combining this with \eqref{eq:z_rounding_bound} yields that, with probability at
    least $1-\exp\!\left(-\frac{(1-\eps)Z_x \eps^3}{20}\right)$,
    \[
    u_p(M,\al)
    =
    \sum_{e \in M} p'_e
    \ge
    (1-\eps)^2 \left((1-6\eps)R - u_p(D,\al)\right).
    \]
    
    Now take $z=q$.
    By \Cref{claim:matching-LP}, the set $S\setminus H$ is feasible for
    $\textnormal{LP}(D,R)$, and thus the optimal solution $x$ satisfies
    \[
    Z_x = \sum_{e \in L \setminus D} x_e q_e \ge q(S\setminus H).
    \]
    Combining this with \eqref{eq:z_rounding_bound} yields that, with probability at
    least $1-\exp\!\left(-\frac{(1-\eps)Z_x \eps^3}{20}\right)$,
    \[
    u_a(M,\al)
    =
    q(M)
    \ge
    (1-\eps)^2 \, q(S\setminus H)
    =
    (1-\eps)^2 \, u_a(S\setminus H,\al).
    \]
    
    Repeat the rounding independently $O(\log n)$ times and take the best feasible
    outcome. A standard Chernoff-style amplification and a union bound over
    $z\in\{p',q\}$ yield that both inequalities hold with high probability.
\end{proof}

We now present \cref{alg:innerLPMatching} and prove \cref{lemma:InnerSolveLP} for the matching constraint.
\algoInnerLPMatching

\begin{proof}[Proof of \cref{lemma:InnerSolveLP}]
     We focus on the iteration in which the algorithm correctly guesses $D = S \cap H$. Let $K = D \cup M$ denote the final solution returned by the algorithm. It is straightforward to verify that $K$ is a feasible matching in $G$, as $M$ is a feasible matching in the subgraph $G_D$ and $D \cap M = \emptyset$ by construction. With high probability, and by applying \cref{claim:utilitiesMatchingProps}, we obtain the following bound on the agent's utility:
    \begin{align*}
        u_a(K, \al) & = u_a(D, \al) + u_a(M, \al) \geq u_a(S \cap H, \al) + (1-\eps)^2 \cdot u_a(S \setminus H, \al) \\
        & \geq (1-\eps)^2 \cdot u_a(S, \al) \geq (1-\eps)^2 \cdot (1-6\eps) \cdot u_a(S_\al, \al) \geq (1-14\eps) \cdot u_a(S_\al, \al).
    \end{align*}
    Now, for the principal’s utility, observe that for \emph{any guess} $D$. it holds that:
    \begin{align*}
        u_p(K, \al) & = u_p(D, \al) + u_p(M, \al) \geq u_p(D, \al) + (1 - \eps)^2 \cdot \left((1 - 6\eps) R - u_p(D, \al)\right) \\
        & = (1-(1-\eps)^2) \cdot u_p(D, \al) + (1-\eps)^2 \cdot (1 - 6\eps) R \\
        & \geq (1-\eps)^2 \cdot (1 - 6\eps) R \geq (1-14\eps) R.
    \end{align*}

    Consider now the budget constraint. Let $W_D := W-w(D)$. Since $x$ is feasible for $\textnormal{LP}(D,R)$, it satisfies
    $\sum_{e\in L\setminus D} x_e w_e \le W_D$.
    Setting $\gamma=\eps$ in \Cref{claim:rounding_props}, we have
    \[
    \mu_w := \mathbb E[w(M)]
    = \sum_{e\in L\setminus D} w_e\,\mathbb E[\mathbbm{1}_M(e)]
    = (1-\eps)\sum_{e\in L\setminus D} x_e w_e
    \le (1-\eps)W_D.
    \]
    By Markov's inequality, $\Pr[w(M) > W_D] \le \frac{\mu_w}{W_D} \le 1-\eps$, therefore $\Pr[w(M)\le W_D]\ge \eps$.
    By standard argument, we can obtain a matching $M$ that respects $W_D$ with probability at least $1- \delta$, for $\delta = \frac{1}{poly(n)}$, by repeating the rounding procedure independently $O(\frac{\ln (1/\delta)}{\eps})$ times.
    In the rare event that the budget is violated, we return the empty solution, which is feasible; since this occurs with probability at most $1/poly(n)$, 
    we have the statement of the lemma.

    It remains to show the running time of \cref{alg:innerLPMatching}. For any fixed guess $D$, the linear program $\textnormal{LP}(D,R)$ can be solved in $poly(n)$ time, and the randomized rounding procedure of \cite{chekuri2011multi} also runs in polynomial time. The number of iterations $\rho$ can be bounded by
    \begin{align*}
        \rho & \leq \left( |T| + 1 \right)^{2 \eps^{-1}} \leq \left( 4 \psi(\eps)^3 + 1 \right)^{2 \eps^{-1}} \leq (33 \cdot \eps^{-6})^{2 \eps^{-1}} = 33^{2 \eps^{-1}} \cdot \eps^{-12 \eps^{-1}} = 2^{O \left(\eps^{-1} \cdot \log \left( \frac{1}{\eps} \right) \right)}.
    \end{align*}
    The third inequality follows from the definition of $\psi(\eps)$ and the fact that $0 < \eps < \frac{1}{2}$. This implies the required running time of the algorithm.    
\end{proof}

\subsection{Local Phase: Matroid}
\label{subsec:matroid_LP}

We now prove \cref{lemma:InnerSolveLP} in the context of budgeted matroid. 

Given a representative set $T \se \sa$, we will exhaustively try each possible option of a subset $D \se T$ such that $|D| \leq \frac{2}{\eps}$. Subsequently, we will solve an LP for the remaining actions with small values of $p_i'$ and $q_i$. Formally, denote by 
\begin{equation}
\label{eq:BMICdefinitionE}
    L = \left \{ i \in \sa \mid p_i' \leq 2\eps R \text{ and } q_i \leq 2\eps \beta_q \right \}
\end{equation}
the set of actions with low values of $p_i'$ and $q_i$.
For each guess of $D$, we solve the following LP:
\begin{align}
\textnormal{LP}(D,L,R)~~~~~~~~~~~~~~~~~
\max \quad & \sum_{i \in L \setminus D} q_i \cdot x_i \nonumber \\
\text{subject to\quad} & (1-\al) \sum_{i \in L \setminus D} p_i \cdot x_i \geq (1-6\eps) \cdot R - u_p(D, \al), \label{constraint:BMIC-principal}\\
& \sum_{i \in L \setminus D} w_i \cdot x_i \leq W - w(D), \nonumber \\
& x \in \mathcal{P}_{M_D}. \nonumber
\end{align}
Here, $M_D = (\sa, \si')$ is the matroid (confirmed by Lemma 2.3 in \cite{doron2023eptas}), with the following independent sets: $\si' = \left \{ S \se L \setminus D \mid S \cup D \in \si \right \}$ and $\mathcal{P}_{M_D}$ is the matroid polytope of $M_D$. Notice that the guess $D = S \cap H$ is valid, since for all $i \in S \cap H$, it holds that $p_i' \geq \eps \cdot u_p(S_\al, \al)$ or $q_i \geq \eps \cdot u_a(S_\al, \al)$, thus, $|S \cap H| \leq \frac{2}{\eps}$. The following claims state some useful properties of $\textnormal{LP}(D,L,R)$:

\begin{claim}[Lemma 3.10 from \cite{doron2023eptas}]
\label{claim:polyLP}
    $\textnormal{LP}(D,L,R)$ can be solved in $poly(n)$ time.
\end{claim}

\begin{claim}[Theorem 3 from \cite{GrZe10}]
\label{claim:BMIC-numOfFractional}
    Let $x^*$ be a basic solution for $\textnormal{LP}(D,L,R)$, then $x^*$ has at most $4$ non-integral entries.
\end{claim}

\begin{claim}
\label{claim:matroid-LP}
When $D = S \cap H$ is correctly guessed, the set $S \setminus H$ forms a feasible solution to $\textnormal{LP}(D, R)$.
\end{claim}

The proof of this claim is given together with the proof of \cref{claim:matching-LP} in Appendix~\ref{appx:missing_proofs}.
Now, we present \cref{alg:innerLPMatroid} and the proof of \cref{lemma:InnerSolveLP} for the matroid constraint.

\algoInnerLP

\begin{proof}[Proof of \cref{lemma:InnerSolveLP}]
    We focus on the iteration of \cref{alg:innerLPMatroid} where $D = S \cap H$. Let $x$ be the basic solution obtained by $\textnormal{LP}(D,L,R)$. Denote by $F$ the set of fractionally allocated actions in $x$, i.e., 
    $F = \left \{ i \in L \setminus D \mid 0 < x_i < 1 \right \}$.

    We aim to analyze the utility of the solution $K$ obtained in Step \eqref{step:BMIC-defS} of \cref{alg:innerLPMatroid}. To generate a feasible solution, we round down $x_i = 0$ for each $i \in F$. The resulting solution satisfies all constraints except possibly the principal's utility constraint \eqref{constraint:BMIC-principal}. For the agent's utility it holds that:
    \begin{align*}
        u_a(K, \al) & = u_a(D, \al) + \sum_{i \in L \setminus D \text{ s.t. } x_i=1} q_i \cdot x_i \geq u_a(D, \al) + \sum_{i \in L \setminus D} q_i \cdot x_i - 4 \cdot 2 \eps \cdot u_a(S_\al, \al) \\
        & \geq u_a(S \cap H, \al) + u_a(S \setminus H, \al) - 8 \eps \cdot u_a(S_\al, \al) = u_a(S, \al) - 8 \eps \cdot u_a(S_\al, \al) \\
        & \geq (1-6\eps) \cdot u_a(S_\al, \al) - 8 \eps \cdot u_a(S_\al, \al) = (1-14\eps) \cdot u_a(S_\al, \al),
    \end{align*}
    where the first inequality is due to \cref{claim:BMIC-numOfFractional} and \cref{eq:BMICdefinitionE}. The second inequality is due to \cref{claim:matroid-LP}, and the third inequality is due to \cref{dfn:representative}.
    Similarly, for \emph{each guess} of $D$, we have the following bound on the principal's utility:
    \begin{align*}
        u_p(K, \al) & = (1-\al) \cdot \left( p(D) + \sum_{i \in L \setminus D \text{ s.t. } x_i=1} p_i \cdot x_i \right) \geq (1-\al) \cdot \left ( p(D) + \sum_{i \in L \setminus D} p_i \cdot x_i - 8 \eps R \right ) \\
        & \geq (1-\al) \cdot p(D) + (1 - 6\eps) \cdot R - (1-\al) \cdot p(D) - (1-\al) \cdot 8 \eps R \\
        & \geq (1 - 6\eps) \cdot R - 8 \eps  R = (1-14\eps) \cdot R,
    \end{align*}
    where the justification to the first inequality is as before. The second inequality is since $x$ is a feasible solution and hence it satisfies the budget constraint \eqref{constraint:BMIC-principal}. The third inequality is due to the fact that $(1-\al) \cdot 8 \eps R \leq 8 \eps R$ since $\al \in (0,1)$. The running time analysis mirrors that of the matching-constrained case.
\end{proof}

\subsection{Global Phase}

\begin{proof}[Proof of \cref{thm:BMICalgo}]
    \cref{lemma:approxBMI} establishes the local approximation guarantees necessary to apply \cref{lemma:phase2}, which in turn provides the conditions required by \cref{lemma:findAlpha}. Together, these results yield the desired global approximation guarantee.

    Regarding the running time, both \cref{lemma:phase2} and \cref{lemma:findAlpha} make a number of calls to \cref{lemma:approxBMI} that is polynomial in $n$ and $1/\eps$. Since \cref{lemma:approxBMI} runs in $2^{O \left(\eps^{-1} \cdot \log \left( \frac{1}{\eps} \right) \right)} \cdot poly(n)$ time, the overall algorithm is an EPTAS.    
\end{proof}

\section{Discussion}
\label{sec:Discussion}

In this paper, we present a method for translating positive algorithmic results for combinatorial problems to contract design subject to combinatorial constraints without demand queries. Our ``local to global'' framework for a single agent yields a multiplicative and additive FPTAS for a budget constraint which is essentially tight. We apply our framework to additional combinatorial constraints, including multi-budget, budgeted matroid, and budgeted matching, achieving approximation guarantees that match the best-known results for the corresponding algorithmic problems. For the multi-agent setting with combinatorial constraint, we obtain pure multiplicative performance guarantees.

Below, we highlight several intriguing directions for future work:
\begin{itemize}
    \item 
    Can we obtain approximately optimal contracts subject to combinatorial constraint using existing algorithms for the combinatorial problems \emph{as black-box}?
        
    \item 
    Can we characterize the class of combinatorial constraints for which the approximation guarantee for the contract problem (in the single- and multi-agent settings) can be matched to the best known approximation for the combinatorial problem? Alternatively, which combinatorial problems become harder when moving from algorithms to contracts?
    
    \item 
    We have shown the hardness of one-sided approximation in the single-agent setting, already for a single budget constraint. For which types of constraints we can obtain purely multiplicative approximation guarantees?
\end{itemize}

\section*{Acknowledgments}

We gratefully acknowledge feedback from anonymous reviewers that led to this improved version of our work. 
This work was supported by the European Research Council (ERC) under the European Union's Horizon 2020 research and innovation program (grant agreement No.~101077862, project ALGOCONTRACT), by the Israel Science Foundation (grant No.~3331/24), by the NSF-BSF (grant No.~2021680), and by a Google Research Scholar Award.

\appendix
\bibliographystyle{plain}
\bibliography{bibliography}

\section*{Appendix Organization} 
Appendix~\ref{appendix:Hardness} contains our hardness results. 
Missing proofs are collected in Appendix~\ref{appx:missing_proofs}.
Appendices~\ref{sec:MASK} and~\ref{sec:MAMK} address multi-agent settings. 
In Appendix~\ref{sec:SASK} we show an FPTAS for budgeted single-agent settings (improving upon the PTAS for multi-budgeted).

\section{Hardness of Contract Design with Combinatorial Constraints}
\label{appendix:Hardness}

In this appendix we investigate the computational complexity of contracts with added combinatorial constraints.
Our main results for the natural budget constraint (in Theorems~\ref{hard:principalEpsNP} and~\ref{hard:agentEpsNP}) indicate the impossibility of obtaining contracts in which only {\em one} of the objective functions (either the IC-constraint or the principal's utility) is {\em approximated}. These results highlight the necessity  of seeking contracts which approximate both the principal's utility and the IC-constraint.

\begin{theorem}
\label{hard:principalEpsNP}
Let $\eps \in (0,1)$ and $\delta > 0$ be fixed constants. There is no polynomial-time algorithm for $\sask$ that yields a solution $(S ,\al)$ satisfying 
    \begin{equation*}
        u_p(S, \al) \geq (1-\eps) \cdot u_p(S_{\al'}, \al') - \delta \quad \textrm{and} \quad u_a(S, \al) \geq u_a(S_\al, \al)
    \end{equation*}
    for every $\al' \in (0,1)$, unless $\pnp$.
\end{theorem}

\begin{theorem}
\label{hard:agentEpsNP}
    Let $\eps \in (0,1)$ be a fixed constant. There is no polynomial-time algorithm for $\sask$ that yields a solution $(S ,\al)$ satisfying 
    \begin{equation*}
        u_a(S, \al) \geq (1-\eps) \cdot u_a(S_{\al}, \al) \quad \textrm{and} \quad u_p(S, \al) \geq u_p(S_{\al'}, \al')
    \end{equation*}
    for every $\al' \in (0,1)$, unless $\pnp$.
\end{theorem}

The next result follows immediately from Theorem \ref{hard:principalEpsNP}.

\begin{corollary}
\label{hard:regularNP}
The budgeted single-agent ($\sask$) problem is NP-hard.
\end{corollary}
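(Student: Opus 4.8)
The plan is to read off \cref{hard:regularNP} directly from \cref{hard:principalEpsNP} (equivalently, from \cref{hard:agentEpsNP}); no new reduction is needed. The key observation is that an exact algorithm for the optimal contract would, for free, produce a solution meeting the two-sided guarantee that those theorems rule out under $\pnp$.

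Concretely, I would argue by contradiction. Suppose that $\sask$ (with exact incentive compatibility, i.e.\ $\eps = 0$) admits a polynomial-time algorithm $\mathcal{B}$: on every instance it outputs a feasible pair $(S, \al)$ attaining the optimum $\OPT$, where by feasibility the set $S$ is a best response of the agent to $\al$, so $u_a(S, \al) \ge u_a(S_\al, \al)$. Since all principal utilities are nonnegative and $\OPT = \max_{\al' \in (0,1)} u_p(S_{\al'}, \al')$, for every fixed constant $\eps \in (0,1)$ we get $u_p(S, \al) = \OPT \ge (1-\eps)\, u_p(S_{\al'}, \al')$ for all $\al' \in (0,1)$, simultaneously with $u_a(S, \al) \ge u_a(S_\al, \al)$. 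Hence $\mathcal{B}$ is precisely an algorithm of the kind forbidden by \cref{hard:principalEpsNP}, so $\pnp$. Because \cref{hard:principalEpsNP} is itself established via a polynomial-time reduction from an NP-hard problem, composing that reduction with $\mathcal{B}$ solves the NP-hard problem, which shows that $\sask$ is NP-hard.

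I do not anticipate any genuine difficulty here: the whole argument is a one-line reduction. The only point worth stating carefully is that a solver for $\sask$ returns a pair that is \emph{simultaneously} a best response for the agent and principal-optimal, which is immediate from the definitions of $S_\al$ and $\OPT$ in the model (together with the tie-breaking convention); everything else is bookkeeping over the inequalities above.
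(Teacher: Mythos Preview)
Your proposal is correct and matches the paper's approach exactly: the paper states that \cref{hard:regularNP} ``follows immediately from Theorem~\ref{hard:principalEpsNP} or~\ref{hard:agentEpsNP},'' which is precisely the one-line reduction you spell out (an exact solver trivially meets the one-sided approximation guarantee that those theorems rule out).
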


We note that the NP-hardness of budgeted multi-agent ($\mask$) follows from the NP-hardness of the multi-agent contract problem (with no constraints), already for additive reward function~\cite{dutting2023multi}.

The next result highlights the computational complexity of the $\samk$, $\mamk$, and $\bmic$ problems, indicating that algorithms which improve the best known approximation guarantees for the corresponding problems without the IC-constraint are unlikely to exist unless $\pnp$.

\begin{theorem*}
    There is no EPTAS for the multi-budgeted single-agent ($\samk$) or the multi-budgeted multi-agent ($\mamk$) problems, and no FPTAS for the budgeted matroid ($\bmic$) problem, unless $\pnp$.
\end{theorem*}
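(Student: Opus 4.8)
The plan is to reduce, in each case, from the corresponding purely combinatorial maximization problem, exploiting the fact that a contract instance with zero agent costs degenerates so that the agent's incentive constraint becomes nothing more than a statement of $(1-\eps)$-approximate optimality for the underlying combinatorial objective. Concretely, fix the combinatorial problem $\Pi$ over the relevant constraint family $\sv$ — $2$-dimensional knapsack for both $\samk$ and $\mamk$, and budgeted matroid maximization for $\bmic$ — with elements carrying values $v_i \ge 0$ together with feasibility data (two budgets, or one budget and a matroid). After rescaling the values so that $\sum_i v_i \le 1$ (polynomial in the input length, ratio-preserving, and consistent with the normalized-reward convention) and discarding elements with $v_i = 0$ so that all profits are strictly positive, build the contract instance with $p_i = v_i$, $c_i = 0$, and exactly the same feasibility data.

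The crux is that with $c_i = 0$ the agent's utility under any contract $\al \in (0,1)$ is $u_a(S,\al) = \al\, p(S)$, so a set $S$ is an $\eps$-best response to $\al$ — i.e., $(S,\al)$ is $\eps$-IC — if and only if $p(S) \ge (1-\eps)\, p(S')$ for every feasible $S'$, that is, if and only if $p(S) \ge (1-\eps)\,\OPT(\Pi)$. Since by \cref{dfn:approximationSchemes} any approximation scheme for one of these constrained single-agent problems (and likewise any scheme for the multi-agent problem~\eqref{prob:mask}) outputs a feasible pair $(S,\al)$ respecting the $\eps$-IC constraint, such a scheme hands us for free a feasible solution $S$ of $\Pi$ with $p(S) \ge (1-\eps)\,\OPT(\Pi)$; for $\mamk$ this is even more transparent, as with $c_i=0$ the objective of~\eqref{prob:mask} is literally $p(S)$, so zero-cost $\mamk$ \emph{is} $2$-dimensional knapsack. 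The degeneracy of the principal's side — when $c_i=0$ the optimal principal utility is a supremum approached as $\al \to 0$ — is immaterial here, since the reduction only uses the incentive constraint; if one wishes to also certify the principal guarantee in~\eqref{eq:approx-scheme}, taking $\al = \eps$ makes it hold trivially because rewards lie in $[0,1]$.

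The three claims then follow by substituting the known combinatorial lower bounds. An EPTAS for $\samk$, applied to the instance built from a $2$-dimensional knapsack instance, computes in time $h(1/\eps)\cdot poly(n)$ a feasible $S$ with $p(S)\ge (1-\eps)\,\OPT$, i.e., an EPTAS for $2$-dimensional knapsack, contradicting Kulik and Shachnai~\cite{kulik2010there} unless $\pnp$; the identical construction rules out an EPTAS for $\mamk$. Likewise, an FPTAS for $\bmic$, applied to the instance built from an instance of budgeted matroid maximization on a matroid admitting no FPTAS, computes in time $poly(n,1/\eps)$ a $(1-\eps)$-approximate solution, contradicting the lower bound of~\cite{doron2024lower}. (Note the reductions for $\samk$ and $\mamk$ necessarily use $d=2$, consistent with the FPTAS for the single-budget cases $\sask$ and $\mask$, and the reduction for $\bmic$ necessarily uses a genuinely matroidal constraint, since a single budget alone admits an FPTAS.)

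The point that needs the most care is the alignment argument of the second paragraph: one must preclude a near-optimal contract from satisfying the $\eps$-IC constraint through a set that is far from optimal for $\Pi$. Setting all agent costs to zero is precisely what forecloses this, since it forces every $\eps$-best response to be $(1-\eps)$-optimal for the combinatorial objective regardless of the contract value and of any tie-breaking rule. The remaining steps — that rescaling to normalized rewards is polynomial and ratio-preserving, that strict positivity of the profits is restored by removing zero-value elements, and that the constraint family of the contract instance matches that of $\Pi$ so the hard combinatorial instances are legal inputs — are routine.
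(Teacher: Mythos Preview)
Your approach is correct but differs from the paper's. The paper's reductions use strictly positive costs—specifically $c_i = p_i/2$ for the single-agent problems (reusing the construction from \cref{hard:agentEpsNP}) and $c_i = \eps' p_i/n$ for $\mamk$—and then argue via the \emph{principal's} approximation guarantee: with $c_i=p_i/2$ the optimal contract is pinned down at $\al^*=1/2$, the algorithm's output must have $\al\ge 1/2$, and a short calculation converts the principal bound into a $(1-\eps)$-approximation for the underlying combinatorial optimum. Your zero-cost reduction instead collapses the contract problem onto the combinatorial one through the \emph{agent's} side: the $\eps$-IC constraint (or, under \cref{dfn:approximationSchemes}, exact IC, which is only stronger) already forces $p(S)\ge(1-\eps)\OPT(\Pi)$ regardless of $\al$, and for $\mamk$ the objective becomes literally $p(S)$. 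This is more elementary and uses less of the scheme's output specification—you never touch the principal guarantee—whereas the paper's route has the virtue of recycling one construction across several hardness statements and avoids the degenerate ``optimal $\al$ is a supremum'' situation. One small inaccuracy: \cref{dfn:approximationSchemes} requires exact best response rather than $\eps$-IC, so your sentence citing it should appeal to the problem formulations \eqref{prob:sask} and \eqref{prob:bmic} (which do carry the $\eps$-IC constraint), or simply note that exact IC implies $\eps$-IC; either way your conclusion stands.
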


In what follows, Appendices \ref{sub:util_p_approx} and \ref{sub:util_a_approx} demonstrate the impossibility of approximating either the principal's or the agent's utility independently without simultaneously approximating both, thereby establishing Theorems \ref{hard:agentEpsNP} and \ref{hard:principalEpsNP}. In \cref{sub:multi-dim-hardness}, we prove that achieving an EPTAS for the multi-budgeted versions, namely the $\samk$ and $\mamk$ problems, is NP-hard. \cref{sub:BMIC-hardness} further establishes the intractability of obtaining an FPTAS for the $\bmic$ problem.

\subsection{Hardness of One-Sided Approximation: Principal}
\label{sub:util_p_approx}

\begin{proof}[Proof of \cref{hard:principalEpsNP}]
    Assume towards a contradiction that there exists such algorithm $\alg$. Recall that knapsack problem is defined by $n$ items, each with value $v_i$ and weight $w_i$, and a knapsack capacity $W$. The objective is to maximize the total value of the selected items under the knapsack constraint: $\max_{S \se [n]}{\left\{ \sum_{i \in S}{v_i} \mid \sum_{i \in S}{w_i} \leq W \right\}}$.
    
    Let $I$ be an instance of the knapsack problem. Reduce $I$ to an instance of the $\sask$ problem $I'$ as follows. Define $n+1$ actions $\sa = \{a_1, \ldots, a_{n+1}\}$. The first $n$ actions directly correspond to the knapsack items with profits $p_i = v_i$, costs $c_i = 0$, and weights $w_i$ identical to the original items. However, the final action $a_{n+1}$ is defined with profit $p_{n+1} = \frac{2}{1 - \eps} K$, cost $c_{n+1} = \frac{p_{n+1}}{2}$, and weight $w_{n+1} = 0$, where $K = \sum_{i=1}^{n} v_i + \delta$. Since $\epsilon,\delta$ are constant, this reduction can be observed to be polynomial.

    Running $\alg$ on $I'$ produces a solution $(S, \al)$. The structure of $I'$ ensures the following properties:  
    \begin{claim}
    \label{claim:hardAlphaProp12}
        There exists an optimal solution $S^*$ to $I$ for which the following hold:
        \begin{enumerate}
            \item $a_{n+1} \in S$ if and only if $\al \geq \frac{1}{2}$,
            \item For any $\al' \in (0,1)$, if $\al' < \frac{1}{2}$ then $S_{\al'} = S^*$, and if $\al' \geq \frac{1}{2}$ then $S_{\al'} = S^* \cup \{a_{n+1}\}$.
        \end{enumerate}
    \end{claim}
    \begin{proof}
        The first property follows immediately from the agent's utility for $a_{n+1}$. When $\alpha < \frac{1}{2}$ we have $\alpha p_{n+1} - c_{n+1} < \frac{1}{2} p_{n+1} - \frac{1}{2} p_{n+1} = 0$, so a rational agent would never select $a_{n+1}$, implying $a_{n+1} \notin S_\alpha$. Conversely, when $\alpha \geq \frac{1}{2}$, the action $a_{n+1}$ has zero weight and yields nonnegative utility to the agent, so adding it to the solution cannot violate feasibility and only improves the agent’s utility.
    
        For the second property, consider the following: as $w_{n+1}=0$, the inclusion of action $a_{n+1}$ does not impact the budget constraint. Hence, the agent's decision to include $a_{n+1}$ is solely determined by its associated utility, as established before.
        
        Therefore, for the remaining actions, the agent's problem effectively reduces to solving the original knapsack instance. Formally:
        \begin{align*}
            S^* & = \argmax_{S \se [n]}\left\{\sum_{i\in S} v_i \mid \sum_{i\in S} w_i \leq W \right\} = \argmax_{S \se [n]}\left\{\sum_{i\in S} \al' p_i - c_i \mid \sum_{i\in S} w_i \leq W \right\} = S_{\al'},
        \end{align*}
        where the last equality holds since $c_i=0$ and $p_i = v_i$ for all $i \in [n]$. Therefore, when $\alpha' \geq \frac{1}{2}$, the first property guarantees that $a_{n+1} \in S_{\alpha'}$. Combined with the fact that the remaining actions correspond to the optimal solution of the original knapsack instance, we have $S_{\alpha'} = S^{*} \cup \{a_{n+1}\}$. Conversely, when $\alpha' < \frac{1}{2}$, the agent does not select $a_{n+1}$, and the solution reduces to the optimal solution of the original knapsack instance, resulting in $S_{\alpha'} = S^{*}$.
    \end{proof}
    
    \begin{claim}
    \label{claim:alphaHalfMultiplicative}
    The contract $\al$ returned by $\alg$ must satisfy $\al \geq \frac{1}{2}$.
    \end{claim}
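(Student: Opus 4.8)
The plan is to argue by contradiction: suppose the pair $(S,\al)$ returned by $\alg$ has $\al<\frac12$, and derive that the principal's utility under $(S,\al)$ is too small to meet the $(1-\eps)$ multiplicative guarantee that $\alg$ is assumed to satisfy. Concretely, I will show $u_p(S,\al)<K$ on the one hand, while on the other hand the principal can secure utility at least $\frac{K}{1-\eps}$ with the contract $\al'=\frac12$, so that $(1-\eps)\,u_p(S_{1/2},\tfrac12)>u_p(S,\al)$ — a contradiction.

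First I would determine the structure of $S$. Since the guarantee of $\alg$ includes $u_a(S,\al)\ge u_a(S_\al,\al)$, the set $S$ is a best response to $\al$, so the first property of Claim~\ref{claim:hardAlphaProp12} applies and the assumption $\al<\frac12$ forces $a_{n+1}\notin S$. Hence $S\se[n]$, which gives $p(S)=\sum_{i\in S}v_i\le K$ and therefore $u_p(S,\al)=(1-\al)\,p(S)\le(1-\al)K$. Next I would lower-bound the benchmark by plugging the legal choice $\al'=\frac12\in(0,1)$ into the guarantee of $\alg$: by the second property of Claim~\ref{claim:hardAlphaProp12}, $S_{1/2}=S^*\cup\{a_{n+1}\}$, so
\[
u_p\!\left(S_{1/2},\tfrac12\right)=\tfrac12\bigl(p(S^*)+p_{n+1}\bigr)\ \ge\ \tfrac12\,p_{n+1}=\tfrac12\cdot\tfrac{2}{1-\eps}\,K=\tfrac{K}{1-\eps},
\]
and hence the approximation property yields $u_p(S,\al)\ge(1-\eps)\cdot\frac{K}{1-\eps}=K$.

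Combining the two bounds gives $(1-\al)K\ge u_p(S,\al)\ge K$, and since we may assume $K>0$ without loss of generality (items of value $0$ are irrelevant to the knapsack instance), this forces $1-\al\ge1$, i.e.\ $\al\le0$, contradicting $\al\in(0,1)$. Therefore $\al\ge\frac12$. I do not anticipate a genuine obstacle, since the action $a_{n+1}$ was engineered precisely so that it dominates the principal's objective exactly when $\al\ge\frac12$; the only mildly delicate point is the tie at $\al'=\frac12$, where adding $a_{n+1}$ leaves the agent exactly indifferent, but this is already handled by Claim~\ref{claim:hardAlphaProp12}(2) together with the standard tie-breaking rule in the principal's favor (equivalently, one could use any $\al'$ slightly larger than $\frac12$).
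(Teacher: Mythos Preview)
Your proof is correct and follows essentially the same approach as the paper: assume $\al<\tfrac12$, use Claim~\ref{claim:hardAlphaProp12} to exclude $a_{n+1}$ from $S$ and to identify $S_{1/2}=S^*\cup\{a_{n+1}\}$, then compare $u_p(S,\al)$ against $(1-\eps)\,u_p(S_{1/2},\tfrac12)$. The only cosmetic difference is that the paper derives the strict inequality $u_p(S,\al)<K\le(1-\eps)\,u_p(S_{1/2},\tfrac12)$ directly, while you instead bound $u_p(S,\al)\le(1-\al)K$ and combine with $u_p(S,\al)\ge K$ to force $\al\le 0$; both routes yield the same contradiction under $\al\in(0,1)$ and $K>0$.
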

    \begin{proof}
        Assume, by contradiction, that $\al < \frac{1}{2}$. The approximation guarantee of $\alg$ implies that $u_p(S, \al) \geq (1-\eps) u_p(S_{\al'}, \al') - \delta$ for all $\al' \in (0,1)$. In particular, it holds also for $\al' = \frac{1}{2}$. To derive a contradiction, it suffices to show that $u_p(S, \al) < (1-\eps) u_p(S_{\al'}, \al') - \delta$. Given that $\al <\frac{1}{2}$ and $\al'=\frac{1}{2}$, \cref{claim:hardAlphaProp12} implies that $a_{n+1} \notin S$ and $S_{\al'} = S^* \cup \{a_{n+1}\}$, respectively. Therefore, we have:
        \begin{align*}
            u_p(S, \al) & = (1-\al) \sum_{i \in S}{p_i} < K - \delta = \frac{1-\eps}{2} \cdot \frac{2}{1-\eps} K - \delta = \frac{1-\eps}{2} \cdot p_{n+1} - \delta \\
            & \leq \frac{1-\eps}{2} \left(\sum_{i \in S^*}{p_i} + p_{n+1} \right) - \delta = (1-\eps)(1-\al') \left(\sum_{i \in S^*}{p_i} + p_{n+1} \right) - \delta \\
            & = (1-\eps) \cdot u_p(S^* \cup \{a_{n+1}\}, \al') - \delta = (1-\eps) \cdot u_p(S_{\al'}, \al') - \delta,
        \end{align*}
        where the first inequality is by the definition of $K$.
    \end{proof}

    Resuming the proof, Claims \ref{claim:hardAlphaProp12} and \ref{claim:alphaHalfMultiplicative} establish that $\al \geq \frac{1}{2}$ and $S_\al = S^* \cup \{a_{n+1}\}$, respectively. Since the agent acts optimally, it follows that $S = S_\al = S^* \cup \{a_{n+1}\}$. Therefore, by excluding $a_{n+1}$ from $S$, we obtain the optimal solution $S^*$ to the original knapsack problem $I$, thus concluding the proof.
\end{proof}

\subsection{Hardness of One-Sided Approximation: Agent}
\label{sub:util_a_approx}

\begin{proof}[Proof of \cref{hard:agentEpsNP}]
    Assume towards a contradiction that there exists such algorithm $\alg$. Let $I$ be an instance of the knapsack problem and reduce it to an instance of the $\sask$ problem $I'$ in the following way. Define $n$ actions $\sa = \{a_1, \ldots, a_n\}$, where each action $a_i \in \sa$ has a profit $p_i = v_i$, a cost $c_i = \frac{p_i}{2}$, and a weight $w_i$. The weight limit $W$ remains the same. This reduction can be observed to be polynomial.
    
    Applying algorithm $\alg$ to the instance $I'$ yields a solution $(S, \al)$. The following claim characterizes the properties of an optimal solution to $I'$.
    \begin{claim}
    \label{claim:propsHalfReduction}
        Let $S^*$ be an optimal solution to the original knapsack problem $I$, and let $(S_{\al'}, \al')$ be an optimal solution to $I'$. Then, $\al'=\frac{1}{2}$ and $S_{\al'} = S^*$.
    \end{claim}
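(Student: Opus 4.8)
The plan is to pin down the agent's best response $S_\al$ for every contract $\al \in (0,1)$, exploiting the fact that the reduction gives every action the uniform cost-to-profit ratio $c_i = p_i/2$. I would begin by computing that for any feasible action set $S$ (one with $w(S) \le W$),
\[
u_a(S,\al) = \sum_{i \in S}\bigl(\al p_i - c_i\bigr) = \Bigl(\al - \tfrac12\Bigr)\, p(S),
\]
and assume without loss of generality that $v_i > 0$ for all $i$ (zero-value items are irrelevant to the knapsack objective of $I$ and can be removed), so that $p(S) > 0$ exactly when $S \neq \emptyset$. With this identity in hand, the whole claim reduces to a three-way case analysis on the sign of $\al - \tfrac12$.

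For $\al < \tfrac12$: $u_a(S,\al) \le 0$ for every feasible $S$, with equality only at $S = \emptyset$, so $S_\al = \emptyset$ and $u_p(S_\al,\al) = 0$. For $\al > \tfrac12$: the coefficient $\al - \tfrac12$ is positive, so the agent maximizes $p(S) = \sum_{i\in S} v_i$ subject to $w(S) \le W$, which is exactly the knapsack instance $I$; hence $S_\al$ is an optimal knapsack set $S^*$, and $u_p(S_\al,\al) = (1-\al)\,p(S^*) < \tfrac12 p(S^*)$. For $\al = \tfrac12$: $u_a(S,\tfrac12) = 0$ for \emph{all} feasible $S$, so every feasible set is a best response, and the tie-breaking rule selects the feasible set maximizing $u_p(S,\tfrac12) = \tfrac12 p(S)$, again an optimal knapsack set $S^*$, giving $u_p(S_{1/2},\tfrac12) = \tfrac12 p(S^*)$.

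Comparing the three cases shows the principal's utility $u_p(S_\al,\al)$ is uniquely maximized at $\al = \tfrac12$ (value $\tfrac12 p(S^*)$, versus $0$ for $\al < \tfrac12$ and something strictly smaller for $\al > \tfrac12$), so the optimal solution to $I'$ has $\al' = \tfrac12$ and $S_{\al'} = S^*$, as claimed. The spot requiring the most care — and the nearest thing to an obstacle — is the $\al = \tfrac12$ case: there the agent is completely indifferent, and it is precisely the standard tie-breaking convention (the principal gets to pick the agent-best-response set that is best for the principal) that makes the reduction go through, forcing $S_{1/2}$ to be a knapsack optimum rather than, say, $\emptyset$. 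A minor accompanying point is that optimal knapsack solutions need not be unique; since they all share the profit $p(S^*)$ the principal is indifferent among them, and we simply let $S^*$ denote whichever one is selected, which affects nothing downstream.
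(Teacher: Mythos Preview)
Your proof is correct and follows essentially the same approach as the paper: both compute $u_a(S,\al)=(\al-\tfrac12)p(S)$, identify $S_\al$ for each regime of $\al$, and invoke the tie-breaking rule at $\al=\tfrac12$ to force $S_{1/2}=S^*$. The only cosmetic difference is that the paper first argues $\al'\ge\tfrac12$ and then derives a contradiction from $\al'>\tfrac12$, whereas you do a direct three-case comparison of the principal's utility; the underlying content is the same.
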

    \begin{proof}
        First of all, in order to ensure the agent's participation, the contract $\al'$ must satisfy $\al' \geq \frac{1}{2}$ to prevent negative utility for the agent. Otherwise, no action would be chosen, resulting in zero utility for the principal.

        Note that for any contract $\tilde{\al} \geq \frac{1}{2}$, the agent's optimal response, $S_{\tilde{\al}}$, is independent of the specific value of the contract and equals $S^*$. This can be shown as follows:
        \begin{align*}
            S_{\tilde{\al}} & = \argmax_{S \in \sa} \left \{ {\sum_{i \in S}{\tilde{\al} p_i - c_i}} \mid \sum_{i \in S}{w_i} \leq W \right \} = \argmax_{S \in \sa} \left \{ {\sum_{i \in S}{\tilde{\al} p_i - \frac{p_i}{2}}} \mid \sum_{i \in S}{w_i} \leq W \right \} \\
            & = \argmax_{S \in \sa} \left \{ {\left( \tilde{\al}-\frac{1}{2} \right) \sum_{i \in S}{p_i}} \mid \sum_{i \in S}{w_i} \leq W \right \} = \argmax_{S \in \sa} \left \{ {\sum_{i \in S}{v_i}} \mid \sum_{i \in S}{w_i} \leq W \right \} = S^*,
        \end{align*}
        where the last equality holds assuming $\tilde{\al} > \frac{1}{2}$ and the tie-breaking rule.

        Assume, by contradiction, that $\al' > \frac{1}{2}$. Since we have established that $S_{\al'} = S^*$, setting $\tilde{\al} = \frac{1}{2}$ yields a higher principal utility:
        \begin{equation*}
            u_p(S_{\tilde{\al}}, \tilde{\al}) = (1-\tilde{\al}) \sum_{i \in S_{\tilde{\al}}}{p_i} = (1-\tilde{\al}) \sum_{i \in S^*}{p_i} > (1-\al') \sum_{i \in S^*}{p_i} = (1-\al') \sum_{i \in S_{\al'}}{p_i} = u_p(S_{\al'}, \al').
        \end{equation*}
        This contradicts the optimality of $(S_{\al'}, \al')$ and completes the proof.
    \end{proof}
    
    Resuming the proof of \cref{hard:agentEpsNP}, \cref{claim:propsHalfReduction} establishes that $\al'=\frac{1}{2}$ and $S_{\al'} = S^*$. Since $(S,\al)$ maximizes the principal's utility, it follows that $\al=\al'$ and $S=S^*$. Consequently, we have identified an optimal solution to the original knapsack instance $I$, contradicting the fact that the problem is NP-hard. This completes the proof.
\end{proof}

\subsection{Multi-Budgeted Variants: Ruling out an EPTAS}
\label{sub:multi-dim-hardness}

We proceed by establishing the hardness of approximation for the multi-budgeted variants ($\samk$ and $\mamk$ problems). This will be achieved by demonstrating a reduction from the multi-dimensional knapsack problem, which is known to have no EPTAS even for the two-dimensional case \cite{kulik2010there}, unless $\pnp$.

\begin{prop}
\label{hard:SAMK-EPTAS}
    There is no EPTAS for the $\samk$ problem, unless $\pnp$.
\end{prop}
\begin{proof}
    Suppose to the contrary that there that there exists an EPTAS $\alg$ for the $\samk$ problem. Let $I$ an instance of the multi-dimensional knapsack problem with a constant weight dimension $d \geq 2$. Reduce it to an instance $I'$ of the $\samk$ problem by adapting the construction from \cref{hard:agentEpsNP} to accommodate the $d$-dimensional weight vector.

    Let $(S, \al)$ be the solution obtained by applying $\alg$ to $I'$, and let $(S^*, \al^*)$ be an optimal solution to the original instance $I$. By \cref{claim:propsHalfReduction}, we have $\al^* = \frac{1}{2}$ and $S^*$ is also an optimal solution to $I$. The following claim establishes the approximation ratio of the solution $(S, \al)$.

    \begin{claim}
    \label{claim:approxReductionEPTAS}
        It holds that: $\sum_{i \in S}{v_i} \geq (1-\eps) \sum_{i \in S^*}{v_i}$.
    \end{claim}
    \begin{proof}
        It holds that:
        \begin{equation*}
        (1-\al) \sum_{i \in S}{v_i} = u_p(S, \al) \geq (1-\eps) u_p(S^*, \al^*) = (1-\eps) (1-\al^*) \sum_{i \in S^*}{v_i}.
        \end{equation*}
        Since $\al^* = \frac{1}{2}$, we obtain:
        \begin{equation*}
            (1-\al) \sum_{i \in S}{v_i} \geq \frac{(1-\eps)}{2} \sum_{i \in S^*}{v_i}.
        \end{equation*}
        As shown in \cref{claim:propsHalfReduction}, $\al \geq \frac{1}{2}$ in order to ensure non-negative utility for the agent. This implies $1 \geq 2 (1-\al)$. Therefore, we have:
        \begin{equation*}
            \sum_{i \in S}{v_i} \geq 2(1-\al) \sum_{i \in S}{v_i} \geq (1-\eps) \sum_{i \in S^*}{v_i},
        \end{equation*}
        as required.
    \end{proof}

    Note that the running time of $\alg$ is $h \left ( \frac{1}{\eps} \right ) \cdot poly(n)$ time for some arbitrary function $h:\dr_+ \rightarrow \dr_+$. Therefore, by \cref{claim:approxReductionEPTAS}, we have developed an EPTAS for the multi-dimensional knapsack problem, which contradicts its NP-hardness.  
\end{proof}

\begin{prop}
\label{hard:MAMK-EPTAS}
    There is no EPTAS for the $\mamk$ problem, unless $\pnp$.
\end{prop}
\begin{proof}
    Suppose to the contrary that there that there exists an EPTAS $\alg$ for the $\mamk$ problem. Let $\eps>0$, and consider an instance $I$ of the multi-dimensional knapsack problem with a constant weight dimension $d \geq 2$.

    We construct a corresponding instance $I'$ of the $\mamk$ problem with $n$ agents, denoted by $\sa = \{a_1, \ldots, a_n\}$ with the following properties. Each agent $a_i$ has a profit $f(\{i\})=p_i=v_i$, and a cost $c_i = \frac{\eps' p_i}{n}$, where $\eps'$ is an error parameter satisfying $\eps' \leq 2\eps - \eps^2$. The weight vector $\Vec{w}_i$ for each agent $a_i$ is identical to the weight vector of item $i$ in $I$. The weight limit $\Vec{W}$ for $I'$ is also identical to the weight limit in $I$.

    Apply $\alg$ on $I'$ using $\eps'$ as the error bound, and denote the resulting solution by $(S, \al)$. By the approximation guarantee of $\alg$, it holds that:
    \begin{equation}
    \label{eq:mmFPTASreduction}
        \left( 1-\sum_{i \in S}{\frac{c_i}{p_i}} \right) \sum_{i \in S}{p_i} \geq  (1-\eps') \left( \left(1-\sum_{i \in S'}{\frac{c_i}{p_i}} \right) \sum_{i \in S'}{p_i} \right) \quad \forall S' \se \sa.
    \end{equation}
    
    We focus on each side of \cref{eq:mmFPTASreduction} separately. For the left-hand side, it holds that:
    \begin{equation}
    \label{eq:mmFPTASleft}
        \sum_{i \in S}{v_i} = \sum_{i \in S}{p_i} \geq \left( 1-\sum_{i \in S}{\frac{c_i}{p_i}} \right) \sum_{i \in S}{p_i}.
    \end{equation}
    For the right-hand side, by definition of $c_i$ and $\eps'$, and since $|S'| \leq n$, we have:
    \begin{align*}
        & (1-\eps') \left( \left(1-\sum_{i \in S'}{\frac{c_i}{p_i}} \right) \sum_{i \in S'}{p_i} \right) = (1-\eps') \left( \left(1-\sum_{i \in S'}{\frac{\eps'}{n}} \right) \sum_{i \in S'}{p_i} \right) \\
        &= (1-\eps') \left( \left(1-\frac{\eps' |S'|}{n} \right) \sum_{i \in S'}{p_i} \right) \geq (1-\eps') \left( \left(1-\frac{\eps' n}{n} \right) \sum_{i \in S'}{p_i} \right) \\
        &= (1-\eps')^2 \sum_{i \in S'}{p_i} \geq (1-\eps) \sum_{i \in S'}{v_i}.
    \end{align*}
    Combining it with Eq. \eqref{eq:mmFPTASreduction} and \eqref{eq:mmFPTASleft}, we get:
    \begin{equation}
    \label{eq:mmFPTASresult}
        \sum_{i \in S}{v_i} \geq (1-\eps) \sum_{i \in S'}{v_i}.
    \end{equation}
    Finally, \cref{eq:mmFPTASresult}  holds for all $S' \se \sa$, it specifically holds for $S' = S^*$. Furthermore, the runtime of $\alg$ is polynomial in $n$ and $\frac{1}{\eps'}$, and consequently in $\frac{1}{\eps}$. This implies that we have developed an EPTAS for the multi-budgeted problem, contradicting the fact that it is NP-hard to do so.
\end{proof}

\subsection{Budgeted Matroid Variant: Ruling out an FPTAS}
\label{sub:BMIC-hardness}

We establish the NP-hardness of achieving an FPTAS for the $\bmic$ problem by a reduction from the budgeted matroid problem, for which it is known that obtaining an FPTAS is NP-hard, as shown in \cite{doron2024lower}.

\begin{prop}
\label{hard:BMIC-FPTAS}
    There is no FPTAS for the $\bmic$ problem, unless $\pnp$.
\end{prop}
\begin{proof}
    Suppose to the contrary that there that there exists an FPTAS $\alg$ for the $\bmic$ problem. Let $I$ an instance of the budgeted matroid problem. Reduce it to an instance $I'$ of the $\bmic$ problem by adapting the construction from \cref{hard:agentEpsNP}.

    Let $(S, \al)$ be the solution obtained by applying $\alg$ to $I'$, and let $(S^*, \al^*)$ be an optimal solution to the original instance $I$. By \cref{claim:propsHalfReduction}, we have $\al^* = \frac{1}{2}$ and $S^*$ is also an optimal solution to $I$.

    Note that \cref{claim:approxReductionEPTAS} holds in this case as well, establishing the approximation ratio of the solution $(S, \al)$. Finally, the running time of $\alg$ is polynomial in $n$ and $\frac{1}{\eps}$. Therefore, we have developed an FPTAS for the budgeted matroid problem, which contradicts its NP-hardness.
\end{proof}

\subsection{Additive Approximation}
\label{appx:add-approx}

We adopt multiplicative approximation guarantees for both the principal and the agent. This choice aligns with the conventions of approximation algorithms and fits naturally with our model: utilities scale linearly with rewards, so multiplicative guarantees capture the relevant structure without depending on arbitrary absolute units.
We note that,
without normalization, additive guarantees are intractable: any additive $\eps$-approximation would imply an additive approximation for a knapsack instance, which is NP-hard~\cite[Theorem 2.5.2]{kellerer2004basic}. 

Moreover, the reductions from additive $\eps$-IC to optimal contracts in \cite{dutting2021complexity,zuo2024new} do not apply here, as they require computing the agent’s demand set under a contract—a subroutine that is NP-hard in our model. More strongly, \cref{hard:principalEpsNP} rules out any workaround: when the agent best-responds optimally, even an asymptotic (i.e., simultaneously multiplicative and additive) approximation to the principal’s utility is NP-hard.

In the regime of bounded rewards that can be normalized to $1$, additive guarantees are straightforward: if the underlying algorithmic problem admits an additive approximation, the same follows for the contract problem.

\begin{prop}
\label{prop:additiveAlgorithm}
    Let $\Pi$ be an optimization problem subject to a constraint~$\mathcal{C}$, 
    and $\alg$ an algorithm that, for any $\eps>0$, computes an $\eps$-additive approximation for $\Pi$ in time polynomial in $\eps$ and in the input size.
    Then, for the corresponding normalized contract-design problem under the same constraint~$\mathcal{C}$, one can compute a solution $(S,\alpha)$ satisfying $u_p(S,\al) \geq u_p(S_{\al^*},\al^*) - \eps$ and $u_a(S,\al) \geq u_a(S_{\al},\al) - 2 \eps$, in $poly(n,\eps)$ time.
\end{prop}

\begin{proof}
    Let $\eps > 0$ and consider the contract $\alpha^* + \eps$.
    Applying $\alg$ with error bound $\eps^2$ to the agent’s problem under this contract yields a set $S$ such that
    \begin{equation}
    \label{eq:lbound_additive_ua}
        u_a(S,\al^*+\eps) \geq u_a(S_{\al^*+\eps},\al^*+\eps) - \eps^2 \geq u_a(S_{\al^*},\al^*+\eps) - \eps^2,
    \end{equation}
    where the second inequality holds since $S_{\al^*+\eps}$ is optimal for $\al^*+\eps$. By expanding the utilities,
    \begin{align*}
        (\alpha^* + \eps)\,f(S) -c(S) \geq (\alpha^* + \eps)\,f(S_{\alpha^*}) - c(S_{\alpha^*}) - \eps^2,
    \end{align*}
    or equivalently,
    \begin{align*}
        \alpha^* f(S) + \eps f(S) -c(S) \geq \alpha^* f(S_{\alpha^*}) + \eps f(S_{\alpha^*}) - c(S_{\alpha^*}) - \eps^2.
    \end{align*}
    Since $S_{\alpha^*}$ is optimal for the contract $\alpha^*$, we have
    \begin{align*}
        \alpha^* f(S) - c(S) \leq \alpha^* f(S_{\alpha^*}) - c(S_{\alpha^*}).
    \end{align*}
    Subtracting the latter inequality from the former and dividing by $\eps > 0$ yields
    \begin{align*}
        f(S) \geq f(S_{\alpha^*}) - \eps.
    \end{align*}
    Multiplying both sides by $(1 - \alpha^*)$ and using $0 < 1 - \alpha^* < 1$, we obtain
    \begin{align*}
        u_p(S, \al^*) = (1-\al^*) f(S) \geq (1-\al^*) f(S_{\al^*}) - (1-\al^*) \eps \geq u_p(S_{\al^*}, \al^*) - \eps.
    \end{align*}
    This establishes the desired guarantee for the principal’s utility. To derive the corresponding bound for the agent, note that
    \begin{equation}
    \label{eq:ubound_ua}
        u_a(S,\al^*+\eps) = \alpha^* f(S) + \eps f(S) - c(S) = u_a(S,\al^*) + \eps f(S) \leq u_a(S,\al^*) + \eps,
    \end{equation}
    where the inequality follows from the normalization $f(S) \le 1$.
    Hence, combining (\ref{eq:ubound_ua}) with (\ref{eq:lbound_additive_ua}), we have that
    \[
    \begin{array}{ll}
    u_a(S, \alpha^*) & \ge u_a(S, \alpha^* + \eps) - \eps \\
    & \geq u_a(S_{\alpha^*}, \alpha^* + \eps) - \eps^2 - \eps \\
    &= \alpha^* f(S_{\alpha^*}) + \eps f(S_{\alpha^*}) -c(S_{\alpha^*}) - \eps^2 - \eps \\
    & \geq u_a(S_{\alpha^*}, \alpha^*) -2 \eps.
    \end{array}
    \]
    
    Finally, while $\alpha^*$ is not directly known, it can be efficiently approximated
    by \cref{claim:alphaStarRange}, one can find a contract $\alpha$ within a $(1 - \eps)$ multiplicative range of $\alpha^*$.
    If rewards are normalized, then any algorithm achieving a multiplicative $(1-\eps)$-approximation to the agent’s utility also achieves an additive $\eps$-approximation. Thus,
    this multiplicative proximity translates into an additive $\eps$ approximation, yielding the desired guarantees for both the principal and the agent.
\end{proof}

\section{Missing Proofs and Algorithms}
\label{appx:missing_proofs}

\subsection{Missing Proofs from \cref{sec:introduction}}
\label{appx:XOS-proof}

Notation: Given a set function $f$ and an item $i$, we use the notation $f(i):=f(\{i\})$. 

\begin{proof}[Proof of \cref{obs:XOS}]
For a base set of $n$ items, a given budget $W > 0$, and additive functions $w\ge 0, p\ge 0$, let $f(S):={\max}_{\{T\subseteq S : w(T)\le W \}}p(T)$. We show that $f$ is monotone XOS, but not submodular.

\paragraph{\textbf{$f$ is monotone}.}

Let $S\subseteq S'$. Let $T^*$ be a maximizer in the definition of $f$, i.e., $f(S)=p(T^*)$, and $T^*$ satisfies the budget constraint. By definition of $f$ as a maximization over such subsets, $f(S')\ge p(T^*)\ge f(S)$, as required.

\paragraph{\textbf{$f$ is XOS}.}
For every $T\subseteq [n]$, define an additive function $a_T$ as follows: If $w(T)\le W$ (i.e., $T$ fits in the knapsack), let $a_T(i)=p(i)$ for every $i\in T$, and let $a_T(i')=0$ for every $i'\notin T$. Otherwise, if $w(T)> W$, let $a_T(i)=0$ for every $i\in [n]$. Define an XOS function $g$ from the collection of additive functions $\mathcal{A}=\{a_T\}_{T\subseteq [n]}$, i.e., $g(S)=\max_{a_T\in\mathcal{A}}\{a_T(S)\}$ for every subset $S$. We now show that $f(S)=g(S)$ for every $S$, implying that $f$ is XOS.

We first observe that for every subset $T$, by definition of $a_T$ we have 
\begin{equation}
    a_T(S)=
    \begin{cases}
        p(S\cap T) & \text{if } w(T)\le W,\\
        0 & \text{otherwise.}
    \end{cases}
    \label{eq:add-func-in-XOS}
\end{equation}
By definition, $f(S)={\max}_{\{T\subseteq S : w(T)\le W \}}p(T)$. 
Notice we can write this also as 
\begin{equation}
    f(S)={\max}_{\{T\subseteq S : w(T)\le W \}}p(S\cap T).\label{eq:f-alternative-def}
\end{equation}
Let $T^*$ be the subset of $S$ achieving the maximum, i.e., 
\begin{equation}
    f(S)=p(S\cap T^*).\label{eq:role-of-T-star} 
\end{equation}
Since $T^*$ fits within the budget, by \eqref{eq:add-func-in-XOS}-\eqref{eq:role-of-T-star} we have $a_{T^*}(S)=p(S\cap T^*)=f(S)$. 

It remains to show that $T^*$ is the subset that achieves the maximum in the definition of $g(S)$, i.e., that $g(S)=a_{T^*}(S)$. Clearly $a_{T^*}(S)$ is not strictly exceeded by $a_T(S)$ for any $T$ that doesn't fit within the budget, since then $a_T(S)=0$. Assume for contradiction that $T$ fits and also $p(S\cap T)>p(S\cap T^*)$, i.e., the maximum in the definition of $g(S)$ is obtained by $T$, not $T^*$. Denote by $T':=S\cap T$ the subset of items in $T$ that are also in $S$; we know that $T'$ fits within the budget just like $T$, and $p(S\cap T')=p(S\cap T)>p(S\cap T^*)$. Thus we get, by the definition of $f$ in \eqref{eq:f-alternative-def} and the fact that $T'$ is a subset of $S$ that fits within the budget, that $f(S)\ge p(S\cap T')>p(S\cap T^*)$, in contradiction to \eqref{eq:role-of-T-star}. This completes the proof that $f$ belongs to XOS.

\paragraph{\textbf{$f$ is not submodular}.} 
We show by example that with $f$, items may exhibit non-diminishing marginal values. Consider a base set with $n=3$ items $\{1,2,3\}$, let $w$ be defined by item weights $(1.5,1,1)$, and let $p$ be defined by item values $(2-\epsilon,1,1)$ for some small $0<\epsilon\ll 1/2$. Let the budget $W$ be $2$. 
\begin{itemize}
    \item Observe that the marginal value of item $3$ given the set $\{1\}$ is $f(3\mid \{1\})=f(\{1,3\})-f(1)=f(1)-f(1)=0$, where the key step uses that $f(\{1,3\})=f(1)$ since the most valuable subset of $\{1,3\}$ that fits within the budget is $\{1\}$. 
    \item By symmetry, it also holds that $f(\{1,2\})=f(1)$.
    \item However, the marginal value of item $3$ given the set $\{1,2\}$ is $f(3\mid \{1,2\})=f(\{1,2,3\})-f(\{1,2\})=f(\{2,3\})-f(1)=p(2)+p(3)-p(1)=2-(2-\epsilon)=\epsilon$, where the key step uses that $f(\{1,2,3\})=f(\{2,3\})$ since the most valuable subset of $\{1,2,3\}$ that fits within the budget is $\{2,3\}$ (notice that $w(2)+w(3)=2$). 
\end{itemize}
\end{proof}

\subsection{Missing Proofs from \cref{sec:global_phase}}

\begin{proof}[Proof of \cref{claim:alphaStarRange}]
    Let $j^* \in \argmax_{j \in S_{\alpha^*}} c_j$ be the action with maximum cost under the optimal contract $\alpha^*$. As we do not know $\alpha^*$ or $S_{\alpha^*}$, we iterate over all $j \in \sa$ and use the following procedure for each candidate $j^*$.

    Let $S_1 = S_{\alpha=1} = \argmax_{S \subseteq \sa} u_a(S,1)$ be the agent's best-response action set when the contract is $\alpha = 1$. If $\alpha^* \in \{0,1\}$, then we can simply test these values directly. If $S_{\alpha^*} = \emptyset$, then any contract is trivially optimal and we are done. Thus, we focus on the non-trivial case where $\alpha^* \in (0,1)$ and $S_{\alpha^*} \neq \emptyset$.

    Since $S_1$ maximizes the agent's utility under the contract $\alpha = 1$, and we are assuming that $\alpha^* < 1$, it follows that
    \begin{equation*}
        u_a(S_1,1) \geq u_a(S_{\al^*}, 1) \geq u_p(S_{\al^*}, \al^*) > 0,
    \end{equation*}
    where the second inequality holds because the agent receives non-negative utility under $\alpha^*$, i.e., $\alpha^* f(S_{\alpha^*}) \geq c(S_{\alpha^*})$. The strict inequality follows from $\alpha^* < 1$ and $S_{\alpha^*} \neq \emptyset$.

    We now make use of the following result from \cite{duetting2025multi}, restated here using our notation:
    \begin{claim}[Lemma 4.1 in \cite{duetting2025multi}]
    \label{claim:alphaHelper}
    Suppose $u_a(S_1,1) > 0$, and let $j^* \in \argmax_{j \in S_{\alpha^*}} c_j$. Then $\al_{\min} \leq \al^* \leq \al_{\max}$, where 
    \begin{equation*}
        \alpha_{\min} = 1 - \frac{u_a(S_1,1)}{c_{j^*} + u_a(S_1,1)} \quad \text{and} \quad \alpha_{\max} = 1 - \frac{u_a(S_1,1)}{n \cdot 2^n (c_{j^*} + u_a(S_1,1))}.
    \end{equation*}
    \end{claim}
    We now construct a candidate set of contract values:
    \begin{equation*}
        C = \{0,1 \} \cup \left\{1-(1-\eps)^{k+1} \cdot \frac{u_a(S_1,1)}{c_{j^*} + u_a(S_1,1)} \mid k \in \left\{0,1,\ldots,\left\lceil\log_{\frac{1}{1-\eps}}n \cdot 2^n\right\rceil \right\} \right\}.
    \end{equation*}
    Since the size of $C$ is $\mathrm{poly}(n, 1/\eps)$, we can exhaustively evaluate each $\alpha \in C$ in polynomial time. We now argue that one of the $\alpha$ values in $C$ satisfies the desired approximation bounds. Specifically, for $k = 0$, we obtain:
    \begin{equation*}
        \frac{1-\al}{1-\eps} = 1-\al_{\min} \geq 1-\al^*,    
    \end{equation*}
    where the inequality follows from \cref{claim:alphaHelper}. Similarly, for the largest value of $k$, we get:
    \begin{equation*}
        1-\al^* \geq 1-\al_{\max} \geq 1-\al,
    \end{equation*}
    where the first inequality again uses \cref{claim:alphaHelper}, and the second follows by the construction of $C$. Thus, there exists some value $\alpha \in C$ that satisfies the desired approximation guarantees.
\end{proof}

\subsection{Missing Proofs from Section~\ref{sec:SAMK}}
\label{appx:missing_proof_from_mask}

\begin{proof}[Proof of \Cref{claim1:SAD}]
    Algorithm \ref{alg:phase1} guarantees a feasible solution as it returns an integral solution by rounding down fractional allocations, thereby preventing violation of the budget constraint. The algorithm's polynomial-time complexity is due to the enumeration of $O(n^{2h}) = O(n^{\frac{2d+2}{\eps}})$ possible combinations of $S_1$ and $S_2$, with each combination requiring the solution of a polynomial-time LP.
\end{proof}

\begin{proof}[Proof of \cref{claim:numOfFractional}]
   Let $n' = n - |S| - |E(S)|$ denote the number of variables in LP \eqref{prob:singleLP} with undetermined values. By renumbering the remaining actions, the LP can be reformulated as follows:
    \begin{align}
    \max_{x \in [0,1]^{n'}} \quad & \sum_{i=1}^{n'} q_i \cdot x_i \nonumber \\
    \textrm{subject to} \quad &  \sum_{i=1}^{n'} w_{i,j} \cdot x_i \leq W_j \quad \text{for all } j \in \{1, \ldots, d\}, \label{constraint:LPKnapsack} \\
    & (1-\alpha) \cdot \sum_{i=1}^{n'} p_i x_i \geq R', \label{constraint:lemmaPrincipalUtility} \\
    & x_i \leq 1 \quad \text{for } i \in \{1,\ldots,n'\}, \label{constraing:upTo1} \\
    & x_i \geq 0 \quad \text{for } i \in \{1,\ldots,n'\}, \nonumber
    \end{align}
    where $R' = R - (1-\al) \cdot \sum_{i \in S}{p_i}$ and $W'_j = W_j - \sum_{i \in S}{w_{i,j}}$ for $j \in \{1, \ldots, d\}$. This LP can be converted to a standard form using $n'+d+1$ slack variables to satisfy the $n'+d+1$ non-trivial constraints.

    Let $y_i \in \{0,1\}$ be a binary indicator denoting whether the slack variable associated with the $i$-th budget constraint \eqref{constraint:LPKnapsack} is strictly positive, where $i \in \{1, \ldots, d\}$. Similarly, let $y \in \{0,1\}$ be a binary indicator for the slack variable associated with the principal utility constraint \eqref{constraint:lemmaPrincipalUtility}. These indicators will be instrumental in bounding the number of fractional variables in the subsequent analysis.

    Recall that the number of strictly positive variables (both original and slack) in any basic feasible solution of the standard form LP is bounded from above by the number of constraints. In our case, with $n'+d+1$ non-trivial constraints, we have:
    \begin{equation*}
    2|F| + (n' - |F|) + \sum_{i=1}^{d} y_i + y \leq n' + d + 1,
    \end{equation*}
    where $2|F|$ represents the number of variables in $F$ and their corresponding slack variables, both of which are strictly positive in constraint \eqref{constraing:upTo1}. The term $n'-|F|$ accounts for the remaining variables not in $F$, where either the variable itself or its corresponding slack variable is strictly positive (but not both) in the same constraint. Simplifying this expression yields:
    \begin{equation*}
        |F| + \sum_{i=1}^{d} y_i + y \leq d + 1, 
    \end{equation*}
    Given that $y, y_i \in \{0,1\}$ for all $i \in \{1, \ldots, d\}$, it follows that $|F| \leq d + 1$, establishing an upper bound of $d+1$ on the number of fractional variables in the optimal basic solution, as required.
\end{proof}

\subsection{Missing Proofs from Section~\ref{sec:BMIC}}

\begin{proof}[Proof of \cref{claim:LinMatroid}]
    Since $|L_p|, |L_q| \leq \eps^{-2}$ (as exceeding this bound would yield a utility greater than that of $S_\alpha$) it follows that:
    \begin{equation*}
        |L| \leq |L_p| + |L_q| \leq 2 \cdot \eps^{-2} = \psi(\eps).
    \end{equation*}
    Moreover, because $L \subseteq S_\alpha$ and $S_\alpha \in \sv$, it follows that $L \in \sv$ as well. This holds either by the hereditary property of matroids or by the fact that any subset of a matching remains a valid matching. Hence, $L \in \sv_{\leq \psi(\eps)}$.
\end{proof}

\begin{proof}[Proof of \cref{claim:findXtoS}]
    We begin the proof by considering each type of constraint separately.

    \paragraph{Matroid constraint.} By the definition of $\Delta_L$ and \cref{dfn:replacement}, it follows that:
    \begin{equation*}
        |\Delta_L| \leq |L \setminus H| + |Z_L| \leq |L \setminus H| + |L \cap H| = |L| \leq |S_\al|.
    \end{equation*}
    By recursively applying the exchange property of the matroid, there exists a set $D \se S_\al \setminus \Delta_L$ such that $\Delta_L \cup D \in \sv$ and $|D| = |S_\al| - |\Delta_L|$.

    Define $X = D \cap Q$. Since $X \cup \Delta_L \se D \cup \Delta_L \in \sv$, the hereditary property implies that $X \cup \Delta_L \in \sv$. Moreover, as $Q \se S_\al$, we obtain:
    \begin{equation*}
        X = D \cap Q \se (S_\al \setminus \Delta_L) \cap Q \se Q \setminus \Delta_L.
    \end{equation*}
    It remains to lower-bound $|X|$. We begin by bounding the size of $D$:
    \begin{equation}
    \label{eq:Dprofit}
        |D| = |S_\al| - |\Delta_L| \geq |S_\al| - |L \setminus H| - |Z_L| = |Q| + |L \cap H| - |Z_L|,
    \end{equation}
    where the last equality uses the decomposition $S_\al = (L \cap H) \cupdot (L \setminus H) \cupdot Q$.

    Moreover, since $D \se S_\al \setminus \Delta_L \se Q \cup (L \cap H)$, every element of $D$ that is not in $Q$ must belong to $L \cap H$. Hence,
    \begin{equation}
    \label{eq:Xprofit}
        |X| = |D \cap Q|
        = |D| - |D \setminus Q| \geq |D| - |L \cap H| \geq |Q| - |Z_L|,
    \end{equation}
    where the last inequality follows from \eqref{eq:Dprofit}.

    Finally, by \cref{dfn:replacement}, $|Z_L| \leq |L \cap H| \leq 2\eps^{-1}$, therefore we have
    \begin{equation*}
        |X| \geq |Q| - |Z_L| \geq |Q| - 2\eps^{-1}.
    \end{equation*}

    \paragraph{Matching constraint.} In this case, observe that $Q, \Delta_L \in \sv$, and that $Q \cup (L \setminus H) \subseteq S_\alpha \in \sv$. However, the union $Q \cup \Delta_L = Q \cup \left( (L \setminus H) \cup Z_L \right)$ is not guaranteed to be a feasible matching, as adding the edges in $Z_L$ may introduce vertex conflicts with edges already in $Q$. To restore feasibility, we construct a set $B \subseteq Q$ of conflicting edges to be removed.

    Observe that each edge $e = (u, v) \in Z_L$ can share endpoints with at most two edges in $Q$—one incident to $u$ and one to $v$. Therefore, to maintain feasibility when adding all edges in $Z_L$, it suffices to remove at most $2|Z_L|$ edges from $Q$. Thus, $|B| \leq 2 |Z_L|$. By \cref{dfn:replacement}, $|Z_L| \leq |L \cap H|$, and since $|H| \leq 2\eps^{-1}$—to avoid surpassing the utility of $S_\alpha$—we have:
    \begin{equation}
    \label{eq:Bsize}
        |B| \leq 2 |Z_L| \leq 2|L \cap H| \leq 4 \eps^{-1}.
    \end{equation}
    
    Finally, we define $X = Q \setminus (B \cup Z_L)$, noting that we remove $Z_L$ as well since some edges may belong to both $Q$ and $Z_L$. Observe that $X \subseteq Q \setminus \Delta_L$, and by construction, $X \cup \Delta_L \in \sv$. It remains to establish a lower bound on the size of $X$. We have
    \begin{align*}        
        |X| & = |Q \setminus (B \cup Z_L)| \geq |Q| - |B| - |Z_L| \geq |Q| - 4 \eps^{-1} - \eps^{-1} = |Q| - 5 \eps^{-1},
    \end{align*}
    where the second inequality follows from \eqref{eq:Bsize}.
\end{proof}

\begin{proof}[Proof of \cref{claim:SisFeasible}]
    By the definition of $X$ it holds that $S = \Delta_L \cup X \in \sv$. Moreover, 	
    \begin{align*}
        w(S) &= w(\Delta_L \cup X) \leq w(L \setminus H) + w(Z_L) + w(X) \\
        & \leq w(L \setminus H) + w(L \cap H) + w(X) \leq w(L) + w(Q) \leq w(S_\al) \leq W.
    \end{align*}
    The second inequality holds since, by \cref{dfn:replacement}, $w(Z_L) \leq w(L \cap H)$. For the third inequality, recall that $X \subseteq Q \setminus \Delta_L$. Lastly, the final inequality is justified by the feasibility of the solution $S_\al$.

    It remains to show that $S \cap H \se T$. Indeed, by the definition of $Q$, we have that $Q \cap H = \emptyset$. Also, 
    \begin{align*}
        S \cap H & = (\Delta_L \cup X) \cap H = \left( (L \setminus H) \cup Z_L \cup X \right) \cap H = Z_L \cap H \se T.
    \end{align*}
    where the third equality is since $X \se Q \setminus \Delta_L$ and $Q \cap H = \emptyset$ and by the fact that $(L \setminus H) \cap H = \emptyset$. The last transition holds because $Z_L \se T$.
\end{proof}

\begin{proof}[Proof of \cref{claim:thereIsSubstitution}]
    Let $G \in \sv_{\leq \psi(\eps)}$, and let $\pi(G)$ denote all substitutions of $G$. Denote by
    \begin{equation*}
        Z_G = \argmax_{Z \in \pi(G)} \left \{ |Z \cap T| \right \}
    \end{equation*}
    the substitution of $G$ such that $|Z \cap T|$ is maximal among all substitutions $Z$ of $G$. Note that $Z_G$ is well defined since $G \cap H$ is a substitution of $G$, and thus $\pi(G) \neq \emptyset$.

    For convenience, given a set $A$ and an element $x$, we define the following notations:
    \begin{equation*}
        A + x = A \cup \{x\}, \quad A - x = A \setminus \{x\}.
    \end{equation*}

    Now, it remains to show that $Z_G \se T$. Assume to the contrary that there is some $a \in Z_G \setminus T$, hence, by \cref{obs:exactlyOne}, there is exactly one pair $(r, t) \in \sigma$ such that $a \in {\sk}_{r, t}(\beta_p, \beta_q)$. Let $\Delta_G = (G \setminus H) \cup Z_G$. By \cref{dfn:sub}, it holds that $\Delta_G \in \sv_{\leq \psi(\eps)}$. Since $T \cap {\sk}_{r, t}(\beta_p, \beta_q)$ is an exchange set and $a \in (\Delta_G \cap {\sk}_{r, t}(\beta_p, \beta_q)) \setminus (T \cap {\sk}_{r, t}(\beta_p, \beta_q))$, then, by definition, there is $b \in ({\sk}_{r, t}(\beta_p, \beta_q) \cap T) \setminus \Delta_G$ such that $w(b) \leq w(a)$ and $\Delta_G - a + b \in \sv_{\leq \psi(\eps)}$. 
    
    We aim to show that $\Delta_G - a + b$ is also a substitution of $G$. If we can establish this, it will lead to a contradiction to the maximality of $Z_G$ since it holds that:
    \begin{equation*}
        |T \cap (Z_G - a + b)| > |T \cap Z_G| = \max_{Z \in \pi(G)} |T \cap Z|,
    \end{equation*}
    where the inequality holds because $a \in Z_G \setminus T$ and $b \in T$.

    To show that $Z_G-a+b$ is indeed a substitution, we need to verify that it satisfies the properties of \cref{dfn:sub}:
    \begin{enumerate}
        \item By the definition of $b$, it holds that $(G \setminus H) \cup (Z_G - a + b) = \Delta_G - a + b \in \sv_{\leq \psi(\eps)}$.

        \item Since $w(b) \leq w(a)$, then $w(Z_G - a + b) \leq w(Z_G) \leq w(G \cap H)$.

        \item $\lvert (G \setminus H) \cap (Z_G - a + b) \rvert \leq \lvert (G \setminus H) \cap Z_G \rvert = 0$, implying that $(G \setminus H) \cap (Z_G - a + b) = \emptyset$. The inequality holds since $b \notin \Delta_G$, and the equality follows from the fact that $Z_G$ is a substitution of $G$.

        \item For all $(r', t') \in \sigma$, it holds that:
        \begin{equation*}
            \lvert \mathcal{K}_{r',t'}(\alpha) \cap (Z_G - a + b) \rvert = \lvert \mathcal{K}_{r',t'}(\alpha) \cap Z_G \rvert = \lvert \mathcal{K}_{r',t'}(\alpha) \cap G \cap H \rvert,
        \end{equation*}
        The first equality holds because, by \cref{obs:exactlyOne}, $a$ and $b$ are exclusively in $\sk_{r ,t}(\beta_p, \beta_q)$. The last equality follows from the fact that $Z_G$ is a substitution.
    \end{enumerate}
\end{proof}

\begin{proof}[Proof of \Cref{claim:subtitueIsRepset}]
    Let $G \in \sv_{\leq \psi(\eps)}$ and let $Z_G$ be a substitution of $G$ such that $Z_G \se T$. It holds that:
    \begin{equation}
    \label{eq:profitR}
    \begin{aligned}
        z(Z_G) \geq & \sum_{(r, t) \in \sigma} z({\sk}_{r,t}(\beta_p, \beta_q) \cap Z_G) \\
        \geq & \sum_{(r, t) \in \sigma \text{ s.t. } {\sk}_{r,t}(\beta_p, \beta_q) \neq \emptyset} |{\sk}_{r,t}(\beta_p, \beta_q) \cap Z_G| \cdot \min_{i \in {\sk}_{r,t}(\beta_p, \beta_q)} z_i \\
        \geq & \sum_{(r, t) \in \sigma \text{ s.t. } {\sk}_{r,t}(\beta_p, \beta_q) \neq \emptyset} |{\sk}_{r,t}(\beta_p, \beta_q) \cap G \cap H| \cdot (1-\eps) \cdot \max_{i \in {\sk}_{r,t}(\beta_p, \beta_q)} z_i \\
        = & (1-\eps) \cdot \sum_{(r, t) \in \sigma \text{ s.t. } {\sk}_{r,t}(\beta_p, \beta_q) \neq \emptyset} |{\sk}_{r,t}(\beta_p, \beta_q) \cap G \cap H| \cdot \max_{i \in {\sk}_{r,t}(\beta_p, \beta_q)} z_i \\
        \geq & (1-\eps) \cdot z(G \cap H).
    \end{aligned}
    \end{equation}
    The third inequality follows from the definition of profit classes and from Property~\eqref{prop:sub4} of \cref{dfn:sub}, which holds since $Z_G$ is a valid substitution. The final inequality is implied by \cref{obs:exactlyOne}, as each element $j \in G \cap H$ belongs to exactly one ${\sk}_{r,t}(\beta_p, \beta_q)$ for some $(r,t)\in \sigma$ and $z_j \leq \max_{i \in {\sk}_{r,t}(\beta_p, \beta_q)} z_i$. Now, applying Property~\eqref{prop:sub3} from \cref{dfn:sub}, we obtain:
    \begin{equation}
    \label{eq:profitFINAL}
    \begin{aligned}
        z((G \setminus H) \cup Z_G) = z(G \setminus H) + z(Z_G) \geq z(G \setminus H)+(1-\eps) \cdot z(G \cap H) \geq (1-\eps) \cdot z(G).
    \end{aligned}
    \end{equation}
    The first inequality in \eqref{eq:profitFINAL} follows directly from \cref{eq:profitR}. To establish that $Z_G$ is a valid replacement for $G$, observe that Properties~\eqref{prop:rep1} and \eqref{prop:rep2} of \cref{dfn:sub} are immediately satisfied by the corresponding Properties~\eqref{prop:rep1} and \eqref{prop:rep2} in \cref{dfn:replacement}. Furthermore, Properties~\eqref{prop:rep3} and \eqref{prop:rep4} follow from the bound established in \eqref{eq:profitFINAL}. It therefore remains to verify that Property~\eqref{prop:rep5} holds.
    
    Indeed, we have:
    \begin{equation*}
        |Z_G| = \sum_{(r, t) \in \sigma} |Z_G \cap {\sk}_{r, t}(\beta_p, \beta_q)| = \sum_{(r, t) \in \sigma} |G \cap H \cap {\sk}_{r, t}(\beta_p, \beta_q)| = |G \cap H|,
    \end{equation*}
    where the first equality follows from the fact that $Z_G \se \bigcup_{(r, t) \in \sigma } {\sk}_{r,t}(\beta_p, \beta_q)$, as $Z_G$ is a substitution. The second equality holds because $Z_G$ is a substitution for $G$ and the last equality is implied by \cref{obs:exactlyOne}.
\end{proof}

\begin{proof}[Proof of \cref{claim:matching-LP}]
We verify the constraints one by one for both the matching and matroid constraints

\emph{Ground set.}
Let $e \in S \setminus H$. Since $e \notin H_p$, by definition we have $p'_e < \eps \cdot u_p(S_\al,\al)$.
Thus, in the matching case, using $\beta_p \ge u_p(S_\al,\al)/2$, we obtain $p'_e \le 2\eps \beta_p$; and in the matroid case, for any $R \in \left[\frac{u_p(S_\al,\al)}{2}, u_p(S_\al,\al)\right]$, we have $p'_e \le 2\eps R$.
Similarly, since $e \notin H_q$ and $\beta_q \ge u_a(S_\al,\al)/2$, we obtain $q_e < \eps \cdot u_a(S_\al,\al) \le 2\eps \beta_q$.
Hence $e \in L$ (in both the matching and matroid constarints), implying $S \setminus H \subseteq L$; therefore, $S \setminus H \subseteq L \setminus D$ (as $D=S\cap H$).

\emph{Principal constraint.}
The constraint is equivalent to
\begin{align*}
    (1-\al) \cdot p(S \setminus H)& \geq (1-4\eps) \cdot R - (1-\al)\cdot p(D) \Leftrightarrow (1-\al) \cdot \left ( p(S \setminus H) + p(S \cap H) \right ) \geq (1-4\eps) R \\
    & \Leftrightarrow (1-\al) \cdot p(S) \geq (1-4\eps) \cdot R \Leftrightarrow u_p(S, \al) \geq (1-4\eps) R. 
\end{align*}
Recall that we assumed that $R \leq u_p(S_\al, \al)$. Additionally, since $T$ is a representative set, we have $u_p(S, \al) \geq (1-4\eps) \cdot u_p(S_\al, \al)$. Combined together, we have $u_p(S, \al) \geq (1-4\eps) \cdot R$, as required to satisfy the first constraint.

\emph{Budget constraint.}
Since $S$ is a feasible solution, $w(S) \leq W$. Hence $w(S \cap H) + w(S \setminus H) \leq W$, implying $w(S \setminus H) \leq W - w(D)$.

\emph{Matching/matroid constraint.}
Since $S$ is feasible and $D \subseteq S$, it follows that $S\setminus H$ is feasible in the residual instance: in the matching case, $S\setminus H$ is a matching in $G_D$; and in the matroid case, $S\setminus H$ is independent in the matroid $M_D$.
\end{proof}

\section{Budgeted Multi-Agent Settings}
\label{sec:MASK}

In this section we study multi-agent contracts with budgeted reward functions.
We present a framework that reduces such constrained multi-agent instances to a polynomial-size family of structured subproblems and provides a black-box lifting guarantee: an approximation routine for the subproblems, invoked only polynomially many times, yields a near-optimal solution for the original instance. We then instantiate the framework for the $\mask$ problem and obtain an FPTAS under a budget constraint, building on the structural insights of D\"utting et al.~\cite{dutting2023multi}. After discretizing agents' parameters via rounding, we solve the resulting structured subproblems with a dynamic program that achieves the required approximation while respecting budget feasibility; the lifting guarantee then combines these solutions to obtain a near-optimal solution for the original instance. 
We begin by formally defining the multi-agent model of~\cite{dutting2023multi} and our extension to the budgeted setting.

\subsection{Multi-Agent Preliminaries}

In the model of~\cite{dutting2023multi}, we are given a ground set of agents \(\mathcal{A}\). Each agent \(i\) chooses whether to exert effort. When exactly the agents in \(S\subseteq \mathcal{A}\) exert effort, the principal receives reward \(f(S)\), and each \(i\in S\) incurs a cost \(c_i\ge 0\). The principal commits to a nonnegative contract \(\alpha\in \mathbb{R}_{\ge 0}^n\), where \(\alpha_i\) specifies agent \(i\)'s fraction of the realized reward.

In our setting, each agent \(i\) additionally has a size \(w_i\in \drz\), and the principal faces a total budget \(W\in \drz\). We restrict attention to induced sets \(S\) that are budget-feasible, $\sum_{i\in S} w_i \le W$. Accordingly, the principal maximizes \(g(S)\) over all equilibrium-inducible sets \(S\) satisfying this constraint.

To ensure stability of the chosen solution, we require that given the contracts, the set of agents exerting effort (i.e., members of $S$) will remain unchanged.
This leads us to consider the Nash equilibria of the game induced among the agents. A contract $\al \in \drz^n$ is said to incentivize a set $S$ of agents to exert effort if the following conditions hold for all $i \in \sa$:
\begin{eqnarray}
\al_i f(S) -  c_i  \geq \al_i f(S \setminus \{i\}) & &\text{for all }i \in S\text{, and} \label{constraint:multiAlphaIn}\\
\al_i f(S) \geq \al_i f(S \cup \{i\}) -  c_i & & \text{for all }i \notin S.\label{constraint:multiAlphaNotIn}
\end{eqnarray}
Hence, the principal can achieve optimality in incentivizing the agents in $S$ by offering the following contract, which satisfies constraints \eqref{constraint:multiAlphaIn} and \eqref{constraint:multiAlphaNotIn}:
\[
\begin{array}{ll}
\al_i = \frac{c_i}{f(S) - f(S \setminus \{i\})} & \text{for all } i \in S, \\
\al_i = 0 & \text{for all } i \notin S.
\end{array}
\]
Note, that we interpret $\frac{c_i}{f(S) - f(S \setminus \{i\})}$ as $0$ if $c_i = f(S) - f(S \setminus \{i\}) = 0$, and as infinity when $c_i > 0$ and $f(S) - f(S \setminus \{i\}) = 0$. Therefore, the principal's optimization problem reduces to finding the set $S$ that maximizes the following function:
\begin{equation*}
    g(S) = \left(1 - \sum_{i \in S}{\frac{c_i}{f(S) - f(S \setminus \{i\})}} \right) \cdot f(S).
\end{equation*}

As discussed above, the optimal contract structure in the multi-agent setting is primarily determined by the set of agents who exert effort. While the budget constraint affects the feasible set of agents, it does not alter the fundamental structure of the optimal contract.

The only modification required in the model is in constraint \eqref{constraint:multiAlphaNotIn}. We modify this constraint to consider only feasible sets, i.e., $\al_i f(S) \geq \al_i f(S \cup \{i\}) -  c_i$ for all $i \in S\text{ and } S \cup \{i\} \in \sv$. It is important to note that modifying constraint \eqref{constraint:multiAlphaNotIn} from its former definition to also require that $S \cup \{i\} \in \sv$ doesn't affect the model. This modification pertains only to an agent who is not exerting effort. If this agent wishes to exert effort as well, they can only do so if $S \cup \{i\} \in \sv$. Therefore, the optimal contract remains solely determined by constraints \eqref{constraint:multiAlphaIn} and \eqref{constraint:multiAlphaNotIn}, even in the presence of the budget constraint.

To conclude, we can express the \textit{multi-budgeted multi-agent} ($\mamk$) problem as:
\begin{equation}
\label{prob:mask}
\begin{aligned}
    \max_{S \se \sa} \quad & \left(1 - \sum_{i \in S}{\frac{c_i}{p_i}} \right) \cdot \sum_{i \in S}{p_i} \\
    \textrm{subject to} \quad & \sum_{i \in S} w_{i,j} \leq W_j \quad \forall j \in \{1, \ldots, d\} 
\end{aligned}
\end{equation}
Note that the $\mask$ problem is a special case of $\mamk$ with a single budget constraint, i.e., when $d = 1$.

\subsection{A Framework for Multi-Agent}

Let $S^*$ be an optimal solution of the $\mask$ problem, and let $b = \max_{i \in S^*} p_i$ represent the highest profit from an individual agent exerting effort in $S^*$. Since there are only $n$ possible values for $b$, we can determine its value by running our algorithm for each possibility.

Let $\delta = \frac{\eps}{n}$. For each $i \in \sa$, define a new value $\tilde{p}_i$ by rounding down each $p_i$ to the nearest multiple of $\delta b$. Specifically, $\tilde{p}_i = \left\lfloor \frac{p_i}{\delta b} \right\rfloor \cdot \delta b$.
For convenience, for each $S \se \sa$ denote $\tilde{p}(S) = \sum_{i \in S}{\tilde{p}_i}$. Notice that each $\tilde{p}(S)$ becomes a multiple of $\delta b$. Let $T^*_k$ be an optimal solution for the following optimization problem:
\begin{align}
\label{prob:defTx}
    T^*_k := \argmax_{S \se \sa} \quad & 1 - \sum_{i \in S} \frac{c_i}{p_i} \\
    \text{subject to} \quad & \tilde{p}(S) \geq k, \label{constraint:TildeX}\\
    & \sum_{i\in S} w_i \leq W. \label{constraint:knapsackDefTx}
\end{align}

Finding the set $T_k^*$ is NP-hard, but we can compute an approximation, as will be described later. Assuming the existence of an algorithm that provides such an approximation, we can leverage it to derive the following result, which essentially serves as a general framework for multi-agent contract problems with additive $f$:

\begin{lemma}
\label{lemma:MultiApprox}
    Let $S^*$ be an optimal solution for the $\mask$ (or $\mamk$) problem, and assume there exists an algorithm $\alg$ that returns a set $T_k$ satisfying:
    \begin{equation*}
        1 - \sum_{i \in T_k} \frac{c_i}{p_i} \geq (1 - \eps) \left(1 - \sum_{i \in T^*_k} \frac{c_i}{p_i} \right),
    \end{equation*}
    while possibly violating constraint \eqref{constraint:TildeX} such that $\tilde{p}(S) \geq (1-\eps) k$. Then, one can find a set $S$ ensuring:
    \begin{equation*}
        g(S) \geq (1-\eps) g(S^*).
    \end{equation*}
\end{lemma}

\algoMulti

\begin{proof}
    Consider the iteration of Algorithm \ref{alg:algoMulti} where it used $b = \max_{i \in S^*}p_i$. Since $\tilde{p}_i$ is obtained by rounding down the value of $p_i$, we have:
    \begin{equation*}
        p(T_k) \geq \tilde{p}(T_k) \geq (1-\eps)k.
    \end{equation*}
    Thus, it holds that:
    \begin{equation*}
        g(T_k) = \left(1-\sum_{i \in T_k}{\frac{c_i}{p_i}}\right) p(T_k) \geq \left(1-\sum_{i \in T_k}{\frac{c_i}{p_i}}\right) \cdot (1-\eps)k.
    \end{equation*}

    \cref{alg:algoMulti} returns the set $T_k$ that maximizes $\left(1-\sum_{i \in T_k}{\frac{c_i}{p_i}}\right) \cdot k$ among all $k = j \delta b$ for $j \in \left\{0, 1, \ldots, \left \lceil \frac{n}{\delta} \right \rceil \right \}$. Since $\tilde{p}(S^*)$ is a multiple of $\delta b$, then there exists some $m \in \left\{0, 1, \ldots, \left \lceil \frac{n}{\delta} \right \rceil \right \}$ such that $\tilde{p}(S^*) = m \delta b$. Therefore, we get:
    \begin{equation*}
        (1-\eps) \cdot \left(1-\sum_{i \in T_k}{\frac{c_i}{p_i}}\right) k \geq (1-\eps) \cdot \left(1-\sum_{i \in T_{\tilde{p}(S^*)}}{\frac{c_i}{p_i}}\right) \tilde{p}(S^*).
    \end{equation*}

    By the definition of $T_{\tilde{p}(S^*)}$, the set $S^*$ satisfies the constraints of problem \eqref{prob:defTx} for $k = \tilde{p}(S^*)$. Specifically, constraint \eqref{constraint:TildeX} holds trivially since $\tilde{p}(S^*) \geq \tilde{p}(S^*)$, and the budget constraint \eqref{constraint:knapsackDefTx} is satisfied due to $S^*$ being a feasible solution to the $\mask$ (or $\mamk$) problem. Hence, $S^*$ is a feasible solution to problem \eqref{prob:defTx}. By the approximation guarantee of $\alg$ for $k = \tilde{p}(S^*)$, we get:
    \begin{equation*}
        (1-\eps) \cdot \left(1-\sum_{i \in T_{\tilde{p}(S^*)}}{\frac{c_i}{p_i}}\right) \tilde{p}(S^*) \geq (1-\eps)^2 \cdot \left(1-\sum_{i \in S^*}{\frac{c_i}{p_i}}\right) \tilde{p}(S^*).
    \end{equation*}

    Finally, notice that:
    \begin{equation*}
        \tilde{p}(S^*) = \sum_{i \in S^*} \tilde{p_i} \geq \sum_{i \in S^*} (p_i - \delta b) \geq p(S^*) - n \delta b \geq (1 - \eps)p(S^*).
    \end{equation*}

    Combining it all together we get:
    \begin{equation*}
        g(T_k) \geq (1-\eps)^3 \cdot \left(1-\sum_{i \in S^*}{\frac{c_i}{p_i}}\right) p(S^*) = (1-\eps)^3 \cdot g(S^*).
    \end{equation*}
    Choosing an appropriate value of $\eps$ would give the required approximation ratio.
\end{proof}

\subsection{Refining Constraints and DP for Approximate Optimality}

Now, in the second phase, we present an implementation of $\alg$ as required in \cref{lemma:MultiApprox}, ensuring its applicability to the $\mask$ problem.

\algoSMFindTx

\algoMultiSolveDP

\begin{lemma}
\label{lemma:multiTx}
    Let $k = j \delta b$ for $j \in \left\{0, 1, \ldots, \left \lceil \frac{n}{\delta} \right \rceil \right \}$, and let $T_k$ be the returned solution from Algorithm \ref{alg:SMFindTx}, then:
    \begin{equation*}
        1 - \sum_{i \in T_k} \frac{c_i}{p_i} \geq (1 - \eps) \left(1 - \sum_{i \in T^*_k} \frac{c_i}{p_i} \right).
    \end{equation*}
\end{lemma}

\begin{proof}
    Rather than directly addressing problem \eqref{prob:defTx}, we consider the following equivalent optimization problem (as will be shown in Lemma \ref{lemma:equivalentProblems}):
    \begin{align}
    \label{prob:multiAugmentedProblem}
        \argmax_{S \se \sa} \quad & \sum_{i \in S} \frac{1}{t} - \frac{c_i}{p_i} \\
        \text{subject to} \quad & \tilde{p}(S) \geq k, \nonumber \\
        & \sum_{i \in S} w_i \leq W, \nonumber \\
        & |S| \leq t. \label{constraint:multiSizeSolution}
    \end{align}
    Where $t \in \{1, \ldots, n\}$ represents the maximal size of the solution allowed during this iteration. In each iteration of the algorithm, we try another value of $t \in \{1, \ldots, n\}$ and run the rest of the algorithm as it is immediately described.

    Denote by $\ell_i$ the relative contribution of agent $i$ to the objective function, i.e. $\ell_i = \frac{1}{t} - \frac{c_i}{p_i}$. Let $S^*$ be an optimal solution of the $\mask$ problem, and let $r = \max_{i \in S^*} \ell_i$. Since there are only $n$ possible values for $r$, we can determine its value by running our algorithm for each possibility.

    For each $i \in \sa$, define a new value $\tilde{\ell}_i$ by rounding down each $\ell_i$ to the nearest multiple of $\delta r$. Specifically, $\tilde{\ell}_i = \left\lfloor \frac{\ell_i}{\delta r} \right\rfloor \delta r$. For convenience, for each $S \se \sa$ denote: $\ell(S) = \sum_{i \in S}{\ell_i}$ and $\tilde{\ell}(S) = \sum_{i \in S}{\tilde{\ell}_i}$. Notice that each $\tilde{\ell}(S)$ becomes a multiple of $\delta r$.

    Given values $t \in [n]$ and $b, r \in \drz$, we define a DP table $B(\cdot,\cdot,\cdot,\cdot)$ of size $n \times t \times \left( \left \lceil \frac{n}{\delta} \right \rceil + 1 \right) \times \left( \left \lceil \frac{n}{\delta} \right \rceil + 1 \right)$. The entry $B(i,x,y,z)$ represents the minimum total weight required for a feasible set of agents $S$ satisfying the following conditions:
    \begin{enumerate}
        \item $S \se [i]$: Only the first $i$ agents are considered,
        \item $|S| \leq x$: At most $x$ agents can be selected,
        \item $\tilde{p}(S) \geq y$: The rounded total profit of the selected agents is at least $y$,
        \item $\tilde{\ell}(S) \geq z$: The rounded objective value of the selected agents is at least $z$.
    \end{enumerate}
    Essentially, $B(i,x,y,z)$ captures the minimum weight needed to select at most $x$ agents from the first $i$ agents, achieving a rounded total profit of at least $y$ and a rounded objective function value of at least $z$.

    The algorithm iterates over all possible values of $t, r$ and $b$, initializing a DP table $B$ for each combination. For each triplet $(t,r,b)$, the algorithm systematically fills the DP table by considering whether to include or exclude each agent in the potential solution. This process involves exploring all possible configurations of agents within the given constraints. The following lemma establishes the optimality of the solution obtained by \cref{alg:SolveDP} for the rounded instance.

    \begin{lemma}
    \label{lemma:optimalDPMASK}
        Given the parameters $t,b,r$ and $k$, the output of \cref{alg:SolveDP} is optimal for the rounded version of problem \eqref{prob:multiAugmentedProblem}, i.e., where the objective function is $\tilde{\ell}(S)$.
    \end{lemma}
    \begin{proof}    
        The proof follows the same guidelines as the proof of \cref{lemma:optimalDPSASK}. Let $O(i,x,y,\beta)$ denote the set of agents $S$ that maximizes $\tilde{\ell}(S)$, subject to the constraints $S \se [i]$, $|S| \leq x$, $\tilde{p}(S) \geq y$ and $w(S) \leq \beta$. We prove by induction on $i$ that it holds that:
        \begin{equation*}
            \tilde{\ell}(O(i,x,y,\beta)) = \max \left\{z \mid B(i,x,y,z) \leq \beta \right\}.
        \end{equation*}

       \textbf{Base Case:} For $i=0$, there are no agents to choose from, so:
       \begin{itemize}
            \item If $y \leq 0$ and $z \leq 0$, then $B(0,0,0,0) = 0 = \tilde{\ell}(O(0,x,y,\beta))$, because no weight is required when selecting no agents and expecting no utility.
            
            \item For any $y > 0$ or $z > 0$, achieving any positive utility is impossible, so $B(0,x,y,z) = \infty = \tilde{\ell}(O(0,x,y,\beta))$.
       \end{itemize}
       
        \textbf{Inductive Step:} Assume the statement holds for $i-1$, and we will prove it for $i$. Consider the optimal set $O(i,x,y,\beta)$. Since the corresponding set of $B(i,x,y,z)$ (when $B(i,x,y,z) \leq \beta$) is feasible for $O(i,x,y,\beta)$ for all $i,x,y,z,\beta$, the optimality of $O(i,x,y,\beta)$ implies:
        \begin{equation*}
            \tilde{\ell}(O(i,x,y,\beta)) \geq \max \left\{z \mid B(i,x,y,z) \leq \beta \right\}.
        \end{equation*}
        
        To prove the other direction of the inequality, we divide the analysis into two cases:
        \begin{itemize}
            \item $i \notin O(i,x,y,\beta)$. In this case, $O(i,x,y,\beta)$ is feasible for $O(i-1,x,y,\beta)$. By the inductive hypothesis, we have:
            \begin{equation*}
               \tilde{\ell}(O(i,x,y,\beta)) \leq \tilde{\ell}(O(i-1,x,y,\beta)) = \max \left\{z \mid B(i-1,x,y,z) \leq \beta \right\}.
            \end{equation*}
    
            \item $i \in O(i,x,y,\beta)$. If agent $i$ is included in the optimal solution for the first $i$ agents, removing agent $i$ yields a feasible solution for the first $i-1$ agents with a reduced rounded profit of $y-\tilde{p}_i$ and reduced weight of $\beta-w_i$. Thus, by the inductive hypothesis we have:
            \begin{align*}
                \tilde{\ell}(O(i,x,y,\beta)) & = \tilde{\ell}(O(i,x,y,\beta) \setminus \{i\}) + \tilde{\ell}_i \leq \tilde{\ell}(O(i-1,x-1,y-\tilde{p}_i,\beta-w_i)) + \tilde{\ell}_i \\
                & = \max \left\{z \mid B(i-1,x-1,y-\tilde{p}_i,z) \leq \beta-w_i \right\} + \tilde{\ell}_i \\
                &= \max \left\{z + \tilde{\ell}_i \mid B(i-1,x-1,y-\tilde{p}_i,z) \leq \beta-w_i \right\} \\
                & = \max \left\{z \mid B(i-1,x-1,y-\tilde{p}_i,z-\tilde{\ell}_i)+w_i \leq \beta \right\}. \\
            \end{align*}
            Where the last equality follows from a simple variable exchange.
        \end{itemize}
        Therefore, by the recursive definition of $B(i,x,y,z)$, we have:
        \begin{equation*}
            \tilde{\ell}(O(i,x,y,\beta)) \leq \max \left\{z \mid B(i,x,y,z) \leq \beta \right\}.
        \end{equation*}
        We have shown both inequalities, thus proving the inductive step. The proof of the lemma follows immediately, as we return the set of agents corresponding to the maximum $z$ such that $B(i,x,y,z) \leq W$, which by the inductive proof, maximizes the rounded version of problem \eqref{prob:multiAugmentedProblem}.
    \end{proof}
    
    To establish the desired approximation guarantees, we first demonstrate that an optimal solution to the augmented problem \eqref{prob:multiAugmentedProblem} is also an optimal solution for the original problem \eqref{prob:defTx}, and vice versa. This is formalized in the following claim and lemma.
    \begin{claim}
    \label{claim:optimalValueT}
        Let $S^*$ be an optimal solution to problem \eqref{prob:multiAugmentedProblem}, and let $t^*$ be the specific value of $t$ from $\{1,\ldots,n\}$ used to construct $S^*$. Then, $t^* = |S^*|$.
    \end{claim}
    \begin{proof}
        Since $S^*$ is a feasible solution, then due to constraint \eqref{constraint:multiSizeSolution}, $|S^*| \leq t^*$. If $|S^*| < t^*$, then choosing $t' = |S^*|$ results in higher objective value:
        \begin{equation*}
            \sum_{i \in S^*} \frac{1}{t'} - \frac{c_i}{p_i} = 1 - \sum_{i \in S^*} \frac{c_i}{p_i} > \sum_{i \in S^*}\frac{1}{t^*} - \sum_{i \in S^*} \frac{c_i}{p_i} = \sum_{i \in S^*} \frac{1}{t^*} - \frac{c_i}{p_i}. 
        \end{equation*}
        The inequality holds since $t^* > |S^*|$, and thus $\sum_{i \in S^*}\frac{1}{t^*} < 1$. It contradicts the optimality of $S^*$. Therefore, $t^* = |S^*|$.
    \end{proof}
    
    \begin{lemma}
    \label{lemma:equivalentProblems}
        Problem \eqref{prob:defTx} and problem \eqref{prob:multiAugmentedProblem} are equivalent in the sense that if $S$ is an optimal solution to one problem, then it is also an optimal solution to the other.
    \end{lemma}
    \begin{proof}
        Let $S$ be an optimal solution to the original problem \eqref{prob:defTx}, we want to prove that $S$ is also an optimal solution for the augmented problem \eqref{prob:multiAugmentedProblem}. Assume, for the sake of contradiction, that $S$ is not optimal for the augmented problem, thus there exists a feasible solution $S'$ which satisfies:
        \begin{equation*}
            \sum_{i \in S'} \frac{1}{t'} - \frac{c_i}{p_i} > \sum_{i \in S} \frac{1}{t} - \frac{c_i}{p_i},
        \end{equation*}
        where $t$ and $t'$ are the optimal values that corresponds to $S$ and $S'$ respectively. Due to \cref{claim:optimalValueT}, it holds that $t = |S|$ and $t' = |S'|$. Hence, we have:
        \begin{equation*}
            1 - \sum_{i \in S'} \frac{c_i}{p_i} = \sum_{i \in S'} \frac{1}{t'} - \frac{c_i}{p_i} > \sum_{i \in S} \frac{1}{t} - \frac{c_i}{p_i} = 1 - \sum_{i \in S'} \frac{c_i}{p_i}.
        \end{equation*}
        It contradicts the optimality of $S$ to original problem.
        
        Now, let $S$ be an optimal solution for the augmented problem. Assume towards contradiction that $S$ is not an optimal solution to the original problem. Thus, there exists 
        a solution $S'$ for the original problem such that:
        \begin{equation*}
            1 - \sum_{i \in S'} \frac{c_i}{p_i} > 1 - \sum_{i \in S} \frac{c_i}{p_i},
        \end{equation*}
        Choosing $t = |S|$ for the set $S$ and $t' = |S'|$ for the set $S'$ results in:
        \begin{equation*}
            \sum_{i \in S'} \frac{1}{t'} - \frac{c_i}{p_i} = 1 - \sum_{i \in S'} \frac{c_i}{p_i} > 1 - \sum_{i \in S} \frac{c_i}{p_i} = \sum_{i \in S} \frac{1}{t} - \frac{c_i}{p_i}.
        \end{equation*}
        This contradicts the optimality of $S$ for the augmented problem. Therefore, $S$ must be optimal for the original problem as well.
    \end{proof}
    
    To complete the proof, we demonstrate that the solution $T_k$ obtained by Algorithm \ref{alg:SMFindTx} achieves the required approximation ratio. Specifically, we will show that $\ell(T_k) \geq (1-\eps) \ell(S^*)$. Note that in the iteration where the correct values of $t$ and $r$ are selected, $T_k$ is an optimal solution for the rounded instance. This implies that $\tilde{\ell}(T_k) \geq \tilde{\ell}(S^*)$, as established by \cref{lemma:optimalDPMASK}. Therefore, we have:
    \begin{equation*}
        \ell(T_k) \geq \tilde{\ell}(T_k) \geq \tilde{\ell}(S^*) = \sum_{i \in S^*}{\tilde{\ell}_i} \geq \sum_{i \in S^*}{(\ell_i - \delta r)} \geq \ell(S^*) - n \delta r.
    \end{equation*}
    By the definition of $\delta = \frac{\eps}{n}$, and the fact that $r = \ell_{max}(S^*) \leq \ell(S^*)$, we get:
    \begin{equation*}
        \ell(S^*) - n \delta r \geq \ell(S^*) - \eps \ell(S^*) = (1-\eps) \ell(S^*),
    \end{equation*}
    as required.
\end{proof}

We can conclude with the following theorem:
\begin{theorem}
\label{thm:MASKalgo}
    There is an FPTAS for the $\mask$ problem.
\end{theorem}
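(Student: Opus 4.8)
The plan is to obtain \cref{thm:MASKalgo} as an essentially immediate consequence of the two lemmas developed in this section, \cref{lemma:MultiApprox} and \cref{lemma:multiTx}, followed by a routine running-time count. First I would note that \cref{lemma:multiTx} supplies exactly the black-box algorithm $\alg$ hypothesized in \cref{lemma:MultiApprox}: for each admissible value $k = j\delta b$, \cref{alg:SMFindTx} returns a set $T_k$ with $1 - \sum_{i \in T_k} \frac{c_i}{p_i} \ge (1-\eps)\bigl(1 - \sum_{i \in T_k^*} \frac{c_i}{p_i}\bigr)$; moreover, since the set it outputs corresponds to the table entry $\max\{z \mid B(n,t,k,z) \le W\}$, it satisfies the rounded-profit requirement $\tilde p(T_k) \ge k$ exactly, hence a fortiori the relaxed form $\tilde p(T_k) \ge (1-\eps)k$ that \cref{lemma:MultiApprox} tolerates. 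Plugging \cref{alg:SMFindTx} into \cref{alg:algoMulti} and applying \cref{lemma:MultiApprox} therefore produces a feasible set $S$ with $g(S) \ge (1-\eps)^3\, g(S^*)$, where $S^*$ denotes an optimal solution to $\mask$.

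To turn this into a clean $(1-\eps)$-guarantee I would run the whole procedure with error parameter $\eps' = \eps/3$, using $(1-\eps')^3 \ge 1 - 3\eps' = 1-\eps$; since $1/\eps' = 3/\eps$, this preserves polynomiality in $1/\eps$. (We are in the purely multiplicative regime, so no additive slack is needed here, in contrast to the single-agent rows of \cref{tab:results}.)

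For the running time, fix $\delta = \eps/n$, so that each of the two rounded coordinates ranges over $\lceil n/\delta\rceil + 1 = O(n^2/\eps)$ values. The DP table of \cref{alg:SolveDP} then has $n \cdot t \cdot O(n^2/\eps) \cdot O(n^2/\eps) = O(n^6/\eps^2)$ entries, each computed in $O(1)$ time from at most two earlier entries, so one invocation costs $\mathrm{poly}(n,1/\eps)$. \cref{alg:SMFindTx} wraps this in loops over $t \in \{1,\dots,n\}$ and over the at most $n$ distinct values $r \in \{\ell_i \mid i \in \sa\}$, so one call to $\alg$ costs $\mathrm{poly}(n,1/\eps)$; and \cref{alg:algoMulti} loops over the $n$ choices of $b \in \{p_i\}$ and the $O(n^2/\eps)$ choices of $k$, calling $\alg$ each time. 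The total is $\mathrm{poly}(n, 1/\eps)$, so the scheme is an FPTAS, proving \cref{thm:MASKalgo}.

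I do not expect a genuine obstacle: the theorem is a corollary of the section's machinery. The only points that warrant a sentence of care are confirming that \cref{alg:SMFindTx} genuinely meets the interface required by \cref{lemma:MultiApprox} (including the mild profit-constraint relaxation, which, as observed, is in fact satisfied with equality), and checking that the three cascaded $(1-\eps)$ factors accumulated in the proof of \cref{lemma:MultiApprox} (one of which is the $\alg$-approximation provided by \cref{lemma:multiTx}) collapse to a single $(1-\eps)$ under the standard $\eps \mapsto \eps/O(1)$ rescaling while keeping the running time polynomial in $n$ and $1/\eps$.
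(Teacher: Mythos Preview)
Your proposal is correct and matches the paper's approach: the theorem is stated as an immediate corollary of \cref{lemma:MultiApprox} together with \cref{lemma:multiTx}, and you have simply made explicit the interface check, the $\eps$-rescaling, and the running-time accounting that the paper leaves implicit.
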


\section{Multi-Budgeted Multi-Agent Settings}
\label{sec:MAMK}

In this section we consider the $\mamk$ problem. As shown in \cref{hard:MAMK-EPTAS}, introducing a multi-budgeted constraint in the multi-agent setting makes the problem computationally more challenging; namely, the existence of an EPTAS is ruled out unless $\pnp$. Yet, as we show below, the problem admits a PTAS. Our PTAS for $\mamk$ builds upon our algorithm for the $\mask$ problem, leveraging the framework introduced in the previous section to achieve the desired approximation guarantee.

We start by transforming the original problem instance into an equivalent instance with additional constraints to ensure feasibility and approximation guarantee. We then show that the first-phase approach in our algorithm for $\samk$ problem can be applied here with similar effectiveness. Specifically, we initially select small subsets of agents in the transformed problem, followed by solving a linear program for the remaining agents. This process enables us to approximate the principal’s optimal utility within factor of $1- \eps$.

Recall that \cref{lemma:multiTx} shows we can obtain an approximate solution for problem \eqref{prob:defTx} subject to a single budget constraint. We now consider the analogous problem with a multi-budgeted constraint, where constraint \eqref{constraint:knapsackDefTx} is replaced by the constraints $\sum_{i \in S} w_{i,j} \leq W_j$ for all $j \in \{1, \ldots, d\}$.
This is formalized in the next lemma.

\begin{lemma}
\label{lemma:mmTx}
    Let $k = j \delta b$ for $j \in \left\{0, 1, \ldots, \left \lceil \frac{n}{\delta} \right \rceil \right \}$, and let $T_k$ be the returned solution from \cref{alg:MultiFindTx}, then:
    \begin{equation*}
        1 - \sum_{i \in T_k} \frac{c_i}{p_i} \geq (1 - \eps) \left(1 - \sum_{i \in T^*_k} \frac{c_i}{p_i} \right).
    \end{equation*}
    Furthermore, constraint \eqref{constraint:TildeX} might be violated, but we still have
    \begin{equation*}
        \tilde{p}(T_k) \geq (1-\eps)k.
    \end{equation*}
\end{lemma}
We present \cref{alg:MultiFindTx} and the proof of \cref{lemma:mmTx} shortly. Assuming the correctness of the lemma, we have the following immediate result:
\begin{theorem}
\label{thm:MAMKalgo}
    There exists a PTAS for the $\mamk$ problem.
\end{theorem}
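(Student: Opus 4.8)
The plan is to derive the PTAS by instantiating the multi-agent framework of \cref{lemma:MultiApprox} with the subroutine provided by \cref{lemma:mmTx}, in exact analogy with the proof of \cref{thm:MASKalgo} for $\mask$. Concretely, \cref{lemma:mmTx} guarantees that \cref{alg:MultiFindTx}, on inputs $k$ and $b$, returns a set $T_k$ satisfying $1-\sum_{i\in T_k}\frac{c_i}{p_i}\ge(1-\eps)\bigl(1-\sum_{i\in T_k^*}\frac{c_i}{p_i}\bigr)$ while at worst violating the rounded-profit threshold by a $(1-\eps)$ factor, i.e.\ $\tilde{p}(T_k)\ge(1-\eps)k$. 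This is exactly the hypothesis imposed on $\alg$ in \cref{lemma:MultiApprox}, whose statement covers both $\mask$ and $\mamk$. Hence running \cref{alg:algoMulti} with \cref{alg:MultiFindTx} as the subroutine $\alg$ produces a feasible set $S$ with $g(S)\ge(1-\eps)\,g(S^*)$ for the optimal $\mamk$ solution $S^*$, after the standard rescaling of $\eps$ internal to the proof of \cref{lemma:MultiApprox}. This gives correctness; since $\mamk$ has no contract parameter left to optimize (the optimal contract is pinned down by $S$), no ``global phase'' over $\al$ is needed.

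The substance lies in \cref{lemma:mmTx}, whose proof I would structure in parallel with the multi-budgeted single-agent analysis of \cref{sec:SAMK}. First, reformulate problem \eqref{prob:defTx}, now with the single budget replaced by the $d$ constraints $\sum_{i\in S}w_{i,j}\le W_j$, as an \emph{augmented} additive problem by guessing a cardinality bound $t=|S|$ and rewriting the objective as $\sum_{i\in S}\bigl(\tfrac{1}{t}-\tfrac{c_i}{p_i}\bigr)=\sum_{i\in S}\ell_i$; the equivalence is the multi-budget analogue of \cref{lemma:equivalentProblems}, relying on \cref{claim:optimalValueT} to force $t=|S|$ at an optimum. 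For each guess of $t$, write the natural LP relaxation \eqref{prob:multiLP}: maximize $\sum_i\ell_i x_i$ subject to the rounded-profit lower bound $\sum_i\tilde{p}_i x_i\ge k$, the $d$ budget constraints, the cardinality constraint $\sum_i x_i\le t$, and $x\in[0,1]^n$. Following \cite{frieze1984approximation}, put $h=\lceil(d+2)/\eps\rceil$, guess the $h$ actions of largest $\ell_i$ and, independently, the $h$ actions of largest $\tilde{p}_i$ (forcing to $0$ those actions outside a guessed set whose value exceeds the smallest value in that set), compute a basic optimum $x^*$, and round down to $0$ every fractional coordinate. Since \eqref{prob:multiLP} has $d+2$ non-box constraints, a basic solution has at most $d+2$ fractional coordinates; by the guessing, each such coordinate contributes at most a $1/h$ fraction of the relevant sum over the guessed set, which in turn lower-bounds the LP optimum (for the $\ell_i$ side) and $k$ (for the $\tilde{p}_i$ side). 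Hence the rounding loses at most a $(d+2)/h\le\eps$ fraction of the objective and of $k$ --- precisely the two conclusions of \cref{lemma:mmTx}. (If $|T_k^*|\le 2h$, some iteration guesses $T_k^*$ exactly and no LP rounding is needed.)

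The main obstacle is the simultaneous control of the two \emph{unrelated} ``profit-like'' quantities $\ell_i$ (objective) and $\tilde{p}_i$ (the threshold constraint): one must guess large elements for \emph{both}, and then verify --- as in the proof of \cref{lemma:ApproximationRatioProof} --- that an element of $T_k^*$ is never erroneously excluded from the LP, using that any action of $T_k^*$ either lies in one of the two guessed sets or has both $\ell_i$ and $\tilde{p}_i$ below the respective thresholds and is therefore admissible. For the running time: \cref{alg:MultiFindTx} enumerates $t\in\{1,\dots,n\}$ together with two families of subsets of $\sa$ of size at most $h=\lceil(d+2)/\eps\rceil$, and solves a polynomial-size LP for each triple, so it runs in $n^{O((d+2)/\eps)}\cdot\mathrm{poly}(n)$ time; as $d$ is a fixed constant this is $n^{O(1/\eps)}$. \cref{alg:algoMulti} invokes \cref{alg:MultiFindTx} once for each of the $n$ choices of $b$ and each of the $O(n^2/\eps)$ choices of $k$ (recall $\delta=\eps/n$), i.e.\ $\mathrm{poly}(n,1/\eps)$ calls. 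The overall running time is thus $n^{h'(1/\eps)}\cdot\mathrm{poly}(n)$ for some function $h'$, so the algorithm is a PTAS; by \cref{hard:MAMK-EPTAS}, the dependence of the exponent on $1/\eps$ cannot be removed, as no EPTAS for $\mamk$ exists unless $\pnp$.
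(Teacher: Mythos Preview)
Your proposal is correct and follows essentially the same approach as the paper: invoke \cref{lemma:mmTx} to certify that \cref{alg:MultiFindTx} meets the hypotheses of \cref{lemma:MultiApprox}, and then let the framework of \cref{alg:algoMulti} deliver the $(1-\eps)$-approximation. Your sketch of the proof of \cref{lemma:mmTx}---the augmented reformulation via $\ell_i=\tfrac{1}{t}-\tfrac{c_i}{p_i}$, the Frieze--Clarke-style double guessing with $h=\lceil (d+2)/\eps\rceil$, the bound of $d+2$ fractional coordinates, and the simultaneous control of both the objective and the $\tilde p$-threshold---matches the paper's argument, and your running-time accounting confirming PTAS (but not EPTAS) is accurate.
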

\begin{proof}
    The proof follows directly from \cref{lemma:mmTx}, which establishes that \cref{alg:MultiFindTx} serves as a valid implementation of $\alg$ from \cref{lemma:MultiApprox}. Consequently, we can leverage the framework developed in that context to complete the argument.
    
\end{proof}

\begin{proof}[Proof of \cref{lemma:mmTx}]
    The proof of this lemma follows the same guidelines as the proof of \cref{lemma:ApproximationRatioProof} for \cref{alg:phase1} in the single-agent setting. Following a similar approach to the proof of Lemma \ref{lemma:multiTx}, we consider the following optimization problem as an alternative for problem \eqref{prob:defTx}:
    \begin{align}
\label{prob:mmAugmentedProblem}
        \argmax_{S \se \sa} \quad & \sum_{i \in S} \frac{1}{t} - \frac{c_i}{p_i} \\
        \text{subject to} \quad & \tilde{p}(S) \geq k, \nonumber \\
        & \sum_{i \in S} w_{i,j} \leq W_j \quad \text{for all } j = 1, \ldots, d,
        \nonumber \\
        & |S| \leq t. \nonumber
    \end{align}
    where $t \in \{1, \ldots, n\}$ serves the same purpose as in problem \eqref{prob:multiAugmentedProblem}, bounding the maximum size of the solution. In each iteration of the algorithm, we try another value of $t \in \{1, \ldots, n\}$ and run the rest of the algorithm as it is immediately described.
    
    Given a value of $t$, guess a partial solution for the problem, given by two small sets of agents $S_1, S_2 \se \sa$ of maximum size $h = \left \lceil \frac{d+2}{\eps} \right \rceil$ each. The set $S_1 \cup S_2$ will be extended to an approximate solution by solving a LP for the remaining agents.

    Denote by $\ell_i$ the relative contribution of agent $i$ to the objective function, i.e. $\ell_i = \frac{1}{t} - \frac{c_i}{p_i}$. Define two sets of agents $E_1(S_1), E_2(S_2)$ which will be excluded from consideration during the LP phase:
    \begin{align*}
      E_1(S_1) = \{i\in \sa \setminus S_1 \mid \tilde{p}_i > \tilde{p}_{min}(S_1)\}, \qquad
      E_2(S_2) = \{i\in \sa \setminus S_2 \mid \ell_i > 
      \ell_{min}(S_2)\},  
    \end{align*}
    where $\tilde{p}_{min}(S) = \min_{j\in S}{\tilde{p}_j}$ and $\ell_{min}(S) = \min_{j\in S}{\ell_j}$. Denote $S = S_1 \cup S_2$ and $E(S) = E_1(S_1) \cup E_2(S_2)$. Find an optimal \textit{basic} solution for the following LP, in which $x_i$ is an indicator for the selection of agent $i$.
    \begin{align}
    \max_{x \in [0,1]^n} \quad & \sum_{i=1}^n \left(\frac{1}{t} - \frac{c_i}{p_i}\right) \cdot x_i \label{prob:multiLP}\\
    \textrm{subject to} \quad & \sum_{i=1}^{n}{\tilde{p}_i x_i \geq k}, \label{constraint:pTildeValueK} \\
    & \sum_{i=1}^n w_{i,j} x_i \leq W_j \quad \text{for all } j = 1, \ldots, d, \nonumber \\
    & \sum_{i=1}^n x_i \leq t, \nonumber \\
    & 0 \leq x_i \leq 1 \quad \text{for all } i \notin S \cup E(S), \nonumber \\
    & x_i = 1 \quad \text{for all } i \in S, \nonumber \\
    & x_i = 0 \quad \text{for all } i \in E(S) \setminus S. \nonumber
    \end{align}
    
    Given an initial guess of agents $S_1, S_2$, compute an optimal basic solution $x^*$ to this LP. Denote by $F$ the set of agents that are fractionally allocated in the solution, i.e., $F = \left\{i \in \sa \mid 0 < x^*_i < 1 \right\}$. Substituting $x_i = 0$ for all $i \in F$ results in an integral solution. Finally, we return as a solution to the problem the set of agents that maximizes the objective function among all possible guesses of $S_1, S_2$ and $t$. We provide a formal description of \cref{alg:MultiFindTx} below.

    \algoMMFindTx

    Building upon \cref{claim:numOfFractional}, we establish the following claim, whose proof closely mirrors that of the aforementioned lemma:
    \begin{claim}
    \label{claim:multiNumFractional}
        The number of fractional variables in LP \eqref{prob:multiLP} is at most $d+2$, i.e., $|F| \leq d+2$.
    \end{claim}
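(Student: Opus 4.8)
The plan is to mimic the argument used for Claim~\ref{claim:numOfFractional} (the single-agent case), counting the tight constraints at a basic solution of LP~\eqref{prob:multiLP}. Recall that a basic feasible solution $x^*$ of a polytope in $\dr^n$ is determined by $n$ linearly independent tight constraints. The constraints of LP~\eqref{prob:multiLP} come in several groups: the single profit constraint~\eqref{constraint:pTildeValueK} ($\sum_i \tilde p_i x_i \geq k$), the $d$ budget constraints ($\sum_i w_{i,j} x_i \leq W_j$), the single cardinality constraint ($\sum_i x_i \leq t$), and the box constraints $0 \leq x_i \leq 1$ (with the box constraints for $i \in S$ fixed to $x_i = 1$ and for $i \in E(S)\setminus S$ fixed to $x_i = 0$).

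First I would observe that the variables indexed by $S \cup (E(S)\setminus S)$ are fixed to integral values by equality constraints, so they contribute no fractional entries; restrict attention to the remaining "free" variables $x_i$, $i \notin S \cup E(S)$. Among these free variables, let $F$ be those that are fractional, i.e.\ $0 < x^*_i < 1$; for each such $i$ neither of its two box constraints is tight. Every other free variable is at $0$ or $1$, hence has one box constraint tight. Since $x^*$ is basic, the number of linearly independent tight constraints equals (at least) the number of free variables. Counting: each non-fractional free variable can account for one tight box constraint, and the only other potentially tight constraints available are the profit constraint~\eqref{constraint:pTildeValueK}, the $d$ budget constraints, and the cardinality constraint — that is, at most $1 + d + 1 = d+2$ additional tight constraints beyond the box constraints. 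Therefore the number of free variables that are \emph{not} pinned by a box constraint — precisely $|F|$ — is at most $d+2$.

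Making this rank-counting rigorous is the main (mild) obstacle: one should phrase it as in the standard polyhedral argument, namely that if $x^*$ is a vertex of the feasible region of LP~\eqref{prob:multiLP} then the tight constraints at $x^*$ have rank $n$, and after accounting for the $|S| + |E(S)\setminus S|$ equality-fixed coordinates and the tight box constraints of the non-fractional free coordinates, the remaining rank deficiency of at most $|F|$ must be covered by the $d+2$ "global" constraints~\eqref{constraint:pTildeValueK}, the $d$ budgets, and the cardinality bound — giving $|F| \leq d+2$. This is exactly the analogue of Claim~\ref{claim:numOfFractional}, the only difference being the one extra global constraint (cardinality $|S| \le t$) on top of the $d$ budgets and the one profit constraint, which is why the bound is $d+2$ rather than $d+1$. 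I would then remark that the proof is verbatim that of Claim~\ref{claim:numOfFractional} with this single extra row added to the constraint matrix.
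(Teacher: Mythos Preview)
Your proposal is correct and follows essentially the same approach as the paper: the paper explicitly states that the proof ``closely mirrors'' that of Claim~\ref{claim:numOfFractional}, and both arguments amount to the standard polyhedral counting at a basic solution, the only difference being the extra cardinality constraint $\sum_i x_i \le t$ which raises the bound from $d+1$ to $d+2$. The paper's version of Claim~\ref{claim:numOfFractional} phrases the count via slack variables in standard form rather than via tight constraints at a vertex, but this is the same argument in different clothing.
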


Resuming the proof, let $S^* = \{i_1, \ldots, i_g\}$ denote an optimal integral solution for problem \eqref{prob:mmAugmentedProblem}. Note that while $S^*$ is an optimal solution for the augmented problem (and not for the original one), we have demonstrated in \cref{lemma:equivalentProblems} that this distinction does not impact our results.

If $g \leq 2h$, then we are done, since in some iteration the algorithm will inevitably consider sets $S_1, S_2$ such that $S_1 \cup S_2 = S^*$. Otherwise, define two subsets of $S^*$ in the following way. The first subset, denoted by $S_{1,h}^*$, is constructed by selecting the $h$ agents from $S^*$ with the highest profit values $p_i$. The second subset, $S_{2,h}^*$, is defined analogously, comprising the $h$ agents from $S^*$ with the highest values of $\ell_i$.

Denote $\sigma_1 = \sum_{i \in S_{1,h}^*}{\tilde{p}_i}$ and $\sigma_2 = \sum_{i \in S_{2,h}^*}{\ell_i}$. Consequently, for any agent $i \notin \bigcup_{j \in \{1,2\}}{S_{j,h}^* \cup E_j(S_{j,h}^*)}$, we have $\tilde{p}_i \leq \sigma_1 / h$ and $\ell_i \leq \sigma_2 / h$.

Let $z_{LP}, x^{LP}$ denote the objective value and the fractional allocation vector of LP \eqref{prob:multiLP}, respectively. Consider the iteration where $S_{1,h}^*, S_{2,h}^*$ are guessed. Importantly, each agent $j \in S^*$ belongs to either $S_{1,h}^* \cup S_{2,h}^*$ or to the complement of $E_1(S_{1,h}^*) \cup E_2(S_{2,h}^*)$. This follows directly from the definitions of these sets: if $\tilde{p}_j > \tilde{p}_{min}(S_{1,h}^*)$ then $j \in S_{1,h}^*$, and if $\ell_j > \ell_{min}(S_{2,h}^*)$ then $j \in S_{2,h}^*$. If neither condition holds, then $j \notin E_1(S_{1,h}^*) \cup E_2(S_{2,h}^*)$. Consequently, during the LP phase, selecting $S^*$ is indeed a valid option. Hence, it holds that:
\begin{equation}
\label{eq:zlpBRelation}
    z_{LP} \geq \sum_{i \in S^*}{\ell_i} \geq \sum_{i \in S_{2,h}^*}{\ell_i} = \sigma_2,
\end{equation}
where the second inequality holds since $S_{2,h}^* \se S^*$. Now, let $x^I$ denote the integral solution allocation vector obtained by rounding all $x_i$ to 0 for each $i \in F$. The resulting loss in the objective function can be bounded by:
\begin{align*}
    \sum_{i = 1}^n {\ell_i \cdot x_i^{I}} & = z_{LP} - \sum_{i \in F}{\ell_i \cdot x_i^{LP}} \geq z_{LP} - \sum_{i \in F}{\ell_i} \geq z_{LP} - {(d+2) \cdot \frac{\sigma_2}{h}} \\
    &\geq z_{LP} - {(d+2) \cdot \frac{z_{LP}}{h}} \geq z_{LP} \cdot (1-\eps),
\end{align*}
where the second inequality holds by \cref{claim:multiNumFractional} and the fact that $F \cap (S_{2,h}^* \cup E(S_{2,h}^*)) = \emptyset$ implies $\ell_i \leq \frac{\sigma_2}{h}$ for all $i \in F$, and the third inequality is due to \cref{eq:zlpBRelation}.

To bound the violation of constraint \eqref{constraint:TildeX}, note that the feasibility of $x^{LP}$ implies $\sum_{i=1}^{n}{\tilde{p_i} x^{LP}_i \geq k}$, due to constraint \eqref{constraint:pTildeValueK}. We distinguish between two cases. In the first case we assume that $\sigma_1 > k$. Here, it holds that:
\begin{equation*}
    \sum_{i=1}^{n}{\tilde{p_i} x^{I}_i} \geq \sigma_1 > k \geq (1-\eps)k,
\end{equation*}
where the first inequality holds since $\{ i \in \sa \mid x_i^I = 1 \} \supseteq S = S_1 \cup S_2 \supseteq S^*_{1,h}$, and $\sigma_1 = \sum_{i \in S_{1,h}^*}{\tilde{p}_i}$. In the second case, where it holds that $\sigma_1 \leq k$, we have:
\begin{align*}
\sum_{i=1}^{n}{\tilde{p_i} x^{I}_i} & = \sum_{i=1}^{n}{\tilde{p_i} x^{LP}_i} - \sum_{i \in F}{\tilde{p_i} x^{LP}_i} \geq k - \sum_{i \in F}{\tilde{p_i}} \\
& \geq k - {\frac{(d+2) \sigma_1}{h}} \geq k - {\frac{(d+2) k}{h}} \geq k \cdot (1-\eps).
\end{align*}
The first inequality follows from the feasibility of $x^{LP}$, thereby satisfying constraint \eqref{constraint:pTildeValueK}. The third second inequality holds due to \cref{claim:multiNumFractional} and $\tilde{p}_i \leq \frac{\sigma_1}{h}$ for all $i \in F$. Finally, the third inequality is a consequence of the assumption $\sigma_1 \leq k$ in this case.

The equivalence between the augmented problem \eqref{prob:mmAugmentedProblem} and the original problem \eqref{prob:defTx} (with the multi-budgeted constraint), as shown in \cref{lemma:equivalentProblems}, gives us that a set $T_k$ satisfying the properties outlined in \cref{lemma:mmTx} can be found in polynomial time, concluding the proof.
\end{proof}

\section{Budgeted Single-Agent Settings}
\label{sec:SASK}

In this section, we propose an FPTAS that, for any $\eps > 0$, computes a contract $\al \in (0,1)$ and a feasible set of actions $S \se \sa$ such that $u_p(S, \al) \geq (1-\eps) \cdot u_p(S_{\al^*}, \al^*)$ and $(S, \al)$ satisfies the $\eps$-IC constraint, where $\al^*$ is the optimal contract. Building on the local-global framework introduced in \cref{sec:SAMK}, we first develop a local approximation algorithm \footnote{A similar result for the local algorithm can also be obtained using the method in \cite{papadimitriou2000approximability}.}, which we then extend to achieve a global approximation result. To satisfy the requirements of the first phase, we apply a rounding procedure to the utility values $p_i$ and $q_i$ for each action $i \in \sa$ effectively reducing the problem's complexity. Subsequently, we employ dynamic programming to find a set $S$ that satisfies the $\eps$-IC constraint and provides an approximation $u_p(S, \al) \geq (1-\eps) \cdot u_p(S_{\al}, \al)$.

\subsection{Local Phase}

Assume we are given some $\al \in (0,1)$. Let $S_\al$ denote the set that maximizes the agent's utility for this contract $\al$. Define $\delta = \eps / n$. Let $b = \max_{i \in S_\al} p_i$ and $r = \max_{i \in S_\al} q_i$. These values can be determined by iterating through all $n^2$ possible pairs of actions from $\sa$.

We round down the values of $p_i$ and $q_i$ to the nearest multiple of $\delta b$ and $\delta r$, respectively. Formally,
\begin{equation*}
    \tilde{p}_i = \delta b \left\lfloor \frac{p_i}{\delta b} \right\rfloor \quad \text{and} \quad \tilde{q}_i = \delta r \left\lfloor \frac{q_i}{\delta r} \right\rfloor.
\end{equation*}
To simplify notation, we denote $\tilde{p}(S) = \sum_{i \in S} \tilde{p}_i$ and $\tilde{q}(S) = \sum_{i \in S} \tilde{q}_i$. Note that each $\tilde{p}(S)$ is now a multiple of $\delta b$, and the same holds for $\tilde{q}(S)$ and $\delta r$. Moreover, the number of distinct values of $\tilde{p}(S)$ and $\tilde{q}(S)$ is now bounded by $\frac{n}{\delta}$.

We present Algorithms \ref{alg:SSInit} and \ref{alg:SSSolveDP}, which create a DP table $B$ such that the entry $B(i,x,y)$ stores the minimum weight required to choose a subset of actions $S$ from the first $i$ actions with $\tilde{p}(S) \geq x \cdot \delta b$ and $\tilde{q}(S) \geq y \cdot \delta r$.

\algoSSFindTx

\algoSSSolveDP

\cref{alg:SSInit} defines $bound$ as $(1-\eps)$ times the result of the traditional FPTAS for the knapsack problem when using $q_i$ as the profit of the $i$-th item and $w_i$ as the weight of the $i$-th item. Subsequently, for each possible combination of $b$ and $r$, \cref{alg:SSSolveDP} iteratively fills the DP table and returns the set of actions that maximizes the principal's profit while ensuring the agent receives at least $bound$ utility and does not exceed the weight constraint. The following lemma establishes the optimality of the solution obtained by \cref{alg:SSSolveDP} for the rounded instance.
\begin{lemma}
\label{lemma:optimalDPSASK}
    The output $S$ of \cref{alg:SSSolveDP} maximizes the rounded principal's utility, $\tilde{p}(S)$, subject to the constraint $\tilde{q}(S) \geq bound$.
\end{lemma}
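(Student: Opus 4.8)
The plan is to establish the standard dynamic-programming invariant for the table $B$ and then read off the claim from the final return step. Concretely, I would first prove by induction on $i$ that for every $i \in \{0,1,\dots,n\}$, every $x \in \{0,\delta b,\dots,\lceil n/\delta\rceil\cdot\delta b\}$ and every $y \in \{0,\delta r,\dots,\lceil n/\delta\rceil\cdot\delta r\}$, the entry $B(i,x,y)$ equals the minimum total weight $\sum_{j\in S}w_j$ over all subsets $S\subseteq\{1,\dots,i\}$ satisfying $\tilde p(S)\ge x$ and $\tilde q(S)\ge y$ (with the convention that the value is $\infty$ if no such subset exists). The base case $i=0$ is exactly the two case lines in the recurrence: $B(0,x,y)=0$ when $x\le 0$ and $y\le 0$ (the empty set works), and $B(0,x,y)=\infty$ otherwise. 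For the inductive step, fix $(i,x,y)$ with $i\ge1$; an optimal subset $S\subseteq\{1,\dots,i\}$ either excludes $i$, in which case its weight is captured by $B(i-1,x,y)$, or includes $i$, in which case $S\setminus\{i\}\subseteq\{1,\dots,i-1\}$ must satisfy $\tilde p(S\setminus\{i\})\ge x-\tilde p_i$ and $\tilde q(S\setminus\{i\})\ge y-\tilde q_i$, so its weight is $w_i$ plus something captured by $B(i-1,x-\tilde p_i,y-\tilde q_i)$; the minimum of the two options is exactly the recurrence. Conversely, any subset realizing either branch is a valid candidate for $(i,x,y)$, so the recurrence neither under- nor over-counts. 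One technical point to spell out: the indices $x-\tilde p_i$ and $y-\tilde q_i$ may be negative or zero, and they are still multiples of $\delta b$ (resp.\ $\delta r$) since $\tilde p_i,\tilde q_i$ are; the base-case/illegal-case bookkeeping (the sets $Q_0,Q_\infty$ in the analogous Algorithm~\ref{alg:SolveDP}, and the two guard lines here) ensures these out-of-range lookups return $0$ when the residual demand is already met and $\infty$ otherwise, matching the combinatorial meaning.

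Given the invariant, the return step outputs (the subset corresponding to) $\max\{x \mid B(n,x,y)\le W \text{ and } y\ge bound\}$, i.e.\ among all subsets $S$ of the full ground set with $w(S)\le W$ and $\tilde q(S)\ge bound$, it returns one maximizing the largest threshold $x$ with $\tilde p(S)\ge x$—which is precisely $\tilde p(S)$ rounded down to a multiple of $\delta b$, and since every $\tilde p_i$ is a multiple of $\delta b$, $\tilde p(S)$ is itself such a multiple, so the reported value equals $\tilde p(S)$. Hence $S$ maximizes $\tilde p(S)$ subject to $w(S)\le W$ and $\tilde q(S)\ge bound$, as claimed. I would also note that the grid is large enough not to truncate: any feasible $S$ has $|S|\le n$ and each $\tilde p_i\le p_i\le b$ (since $b=\max_{i\in S_\al}p_i$ and, after the guessing over all pairs, $b\ge p_i$ for the actions that can appear in the relevant optimal set—but more simply, $\tilde p(S)/(\delta b)\le n/\delta\le\lceil n/\delta\rceil$ for the subsets that matter), so the relevant thresholds lie within the tabulated range; the same holds for $\tilde q$.

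The main obstacle I anticipate is purely a matter of careful bookkeeping rather than a conceptual difficulty: getting the boundary conditions exactly right so that negative or zero residual indices are handled consistently, and making sure the finite grid $\{0,\delta b,\dots,\lceil n/\delta\rceil\delta b\}$ is not a source of lost solutions (one must argue that no subset we care about has $\tilde p(S)$ or $\tilde q(S)$ exceeding the top of the grid, which follows because a subset with more than $\lceil n/\delta\rceil$ "units" in either coordinate would need more than $n$ elements or elements violating the chosen scaling $b,r$). Everything else is the textbook correctness argument for a two-dimensional knapsack-style DP, so the proof should be short once the invariant is stated precisely.
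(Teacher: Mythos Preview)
Your proposal is correct and takes essentially the same approach as the paper: both establish the correctness of the DP table by induction on $i$, with the base case and the include/exclude case split in the inductive step. The only cosmetic difference is that you state the invariant directly on $B(i,x,y)$ (as the minimum feasible weight), whereas the paper states the equivalent dual formulation $\tilde{p}(O(i,y,\beta)) = \max\{x \mid B(i,x,y)\le\beta\}$; these are two ways of expressing the same semantic meaning of the table, and your version is arguably the more natural one.
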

\begin{proof}
    Let $O(i,y,\beta)$ denote the set of actions $S$ that maximizes the rounded principal's utility $\tilde{p}(S)$ when considering only the first $i$ actions, subject to the constraints that $\tilde{q}(S) \geq y$ and $w(S) \leq \beta$. We prove by induction on $i$, that:
    \begin{equation*}
        \tilde{p}(O(i,y,\beta)) = \max \left\{x \mid B(i,x,y) \leq \beta \right\}.
    \end{equation*}
    
    \textbf{Base Case:} For $i=0$, there are no actions to choose from, so:
    \begin{itemize}
        \item If $x \leq 0$ and $y \leq 0$, the only possible outcome is selecting no actions, which gives a total weight of 0. Thus, $B(0,0,0) = 0 = \tilde{p}(O(0,y,\beta))$ for any $\beta \geq 0$.

        \item For any $x>0$ or $y>0$, it's impossible to achieve non-zero utility with no actions, so $B(0,x,y) = \infty = \tilde{p}(O(0,y,\beta))$ for these values and any $\beta \geq 0$.
    \end{itemize}
   
    \textbf{Inductive Step:} Assume the claim holds for $i-1$, and we will prove it for $i$. Consider the optimal solution $O(i,y,\beta)$ for the first $i$ agents. Since the corresponding set of $B(i,x,y)$ (when $B(i,x,y) \leq \beta$) is feasible for $O(i,y,\beta)$ for all $i,x,y$, the optimality of $O(i,y,\beta)$ implies:
    \begin{equation*}
        \tilde{p}(O(i,y,\beta)) \geq \max \left\{x \mid B(i,x,y) \leq \beta \right\}.
    \end{equation*}   
    To prove the other direction of the inequality, we divide the analysis into two cases:
    \begin{itemize}
        \item $i \notin O(i,y,\beta)$. In this case, $O(i,y,\beta)$ is feasible for $O(i-1,y,\beta)$. By the inductive hypothesis, we have:
        \begin{equation*}
           \tilde{p}(O(i,y,\beta)) \leq \tilde{p}(O(i-1,y,\beta)) = \max \left\{x \mid B(i-1,x,y) \leq \beta \right\}.
        \end{equation*}
        
        \item $i \in O(i,y,\beta)$. If action $i$ is included in the optimal solution for the first $i$ actions, removing action $i$ yields a feasible solution for the first $i-1$ actions with a reduced rounded agent utility of $y-\tilde{q}_i$ and reduced weight of $\beta-w_i$. Thus, by the inductive hypothesis we have:
        \begin{align*}
            \tilde{p}(O(i,y,\beta)) & = \tilde{p}(O(i,y,\beta) \setminus \{i\}) + \tilde{p}_i \leq \tilde{p}(O(i-1,y-\tilde{q}_i,\beta-w_i)) + \tilde{p}_i \\
            & = \max \left\{x \mid B(i-1,x,y-\tilde{q}_i) \leq \beta-w_i \right\} + \tilde{p}_i \\
            & = \max \left\{x+\tilde{p}_i \mid B(i-1,x,y-\tilde{q}_i)+w_i \leq \beta \right\} \\
            & = \max \left\{x \mid B(i-1,x-\tilde{p}_i,y-\tilde{q}_i)+w_i \leq \beta \right\},
        \end{align*}
        where the last equality follows from a simple variable exchange.
    \end{itemize}
    Therefore, by the recursive definition of $B(i,x,y)$, we have:
    \begin{equation*}
        \tilde{p}(O(i,y,\beta)) \leq \max \left\{x \mid B(i,x,y) \leq \beta \right\}.
    \end{equation*}
    We have shown both inequalities, thus proving the inductive step. The proof of the lemma follows immediately, as we return the set of actions corresponding to the maximum $x$ such that $B(i,x,y) \leq W$ and $y \geq bound$, which by the inductive proof, maximizes the principal's utility while satisfying the constraint on the agent's utility.
\end{proof}

The following lemma establishes the approximation ratio of the solution returned by \cref{alg:SSInit}.

\begin{lemma}
\label{lemma:SSaprox}
    Let $S$ be the output of \cref{alg:SSInit}, then:
    \begin{equation*}
        u_a(S, \al) \geq (1-\eps) \cdot u_a(S_\al, \al) \quad \text{and} \quad u_p(S, \al) \geq (1-\eps) \cdot u_p(S_\al, \al).
    \end{equation*}
    Moreover, the running time of \cref{alg:SSInit} is polynomial in $n$ and $\frac{1}{\eps}$.
\end{lemma}

\begin{proof}
    Since $z$ is the result of the FPTAS for the knapsack problem applied to the agent's utility, we have $z \geq (1-\eps) u_a(S_\al, \al)$. Therefore, 
    \begin{equation}
    \label{eq:boundDefinition}
        (1-\eps) \cdot u_a(S_\al, \al) \geq bound \geq (1-\eps)^2 \cdot u_a(S_\al, \al).
    \end{equation}

    Since for each guess of $b$ and $r$, the returned set $S$ has a utility of at least $bound$, it follows that $u_a(S, \al) \geq bound$. For the correct values of $b$ and $r$, the definition of the rounding procedure implies:
    \begin{equation*}
        \tilde{q}(S_\al) = \sum_{i \in S_\al} \tilde{q_i} \geq \sum_{i \in S_\al} (q_i - \delta r) \geq q(S_\al) - n \delta r \geq (1 - \eps) q(S_\al).
    \end{equation*}
    Since $q(S_\al) = u_a(S_\al, \al)$, we can use \cref{eq:boundDefinition} to obtain:
    \begin{equation*}
        \tilde{q}(S_\al) \geq (1 - \eps) \cdot q(S_\al) = (1-\eps) \cdot u_a(S_\al, \al) \geq bound.
    \end{equation*}

    By \cref{lemma:optimalDPSASK}, \cref{alg:SSSolveDP} returns a set $S$ that maximizes the principal's utility in the rounded instance, subject to the constraint $\tilde{q}(S) \geq bound$. Since $S_\al$ is a valid candidate (as it satisfies $\tilde{q}(S_\al) \geq bound$), we conclude that $\tilde{p}(S) \geq \tilde{p}(S_\al)$. To quantify the impact of rounding on the principal's utility, we now bound the potential loss in profit due to the rounding process:
    \begin{equation*}
        p(S) \geq \tilde{p}(S) \geq \tilde{p}(S_\al) = \sum_{i \in S_\al} \tilde{p_i} \geq \sum_{i \in S_\al} (p_i - \delta b) \geq p(S_\al) - n \delta b \geq (1 - \eps) \cdot p(S_\al).
    \end{equation*}
    Multiplying both sides of the inequality by $(1-\al)$ yields:
    \begin{equation*}
        u_p(S, \al) = (1-\al) \cdot p(S) \geq (1 - \eps) \cdot (1-\al) \cdot p(S_\al) = (1 - \eps) \cdot u_p(S_\al, \al).
    \end{equation*}
    Since we previously established that $u_a(S, \al) \geq bound \geq (1-\eps)^2 \cdot u_a(S_\al, \al)$, it follows that by choosing an appropriate value of $\eps$, we can achieve the desired approximation ratio for the agent's utility as well.

    The running time of the algorithm is polynomial in $n$ and $\frac{1}{\eps}$. This is because the initial FPTAS is inherently polynomial in these parameters, and \cref{alg:SSSolveDP} is invoked $n^2$ times, each taking polynomial time to fill the DP table. Therefore, the overall complexity remains polynomial in $n$ and $\frac{1}{\eps}$.
\end{proof}

\subsection{Global Phase}

Using the preceding lemmas, we can conclude with the following theorem:
\begin{theorem}
\label{thm:SASK-FPTAS}
    There exists an FPTAS for the $\sask$ problem.
\end{theorem}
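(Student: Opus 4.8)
The plan is to apply the local–global framework of \cref{sec:SAMK}, with the local routine from the previous subsection playing the role of the black-box algorithm. By \cref{lemma:SSaprox}, \cref{alg:SSInit}, run on any contract $\al' \in (0,1)$ with error parameter $\eps$, returns in $\mathrm{poly}(n,1/\eps)$ time a feasible set $S'$ with $u_a(S',\al') \ge (1-\eps)\,u_a(S_{\al'},\al')$ and $u_p(S',\al') \ge (1-\eps)\,u_p(S_{\al'},\al')$. This is exactly the hypothesis on $\alg$ in \cref{lemma:findAlpha}. Note that, in contrast to the threshold-based local solver of \cref{sec:SAMK} (\cref{alg:phase1}), the dynamic program of \cref{alg:SSSolveDP} already maximizes the principal's (rounded) utility directly, so the intermediate threshold search of \cref{lemma:phase2} is not needed here.

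I would then invoke \cref{lemma:findAlpha} with $\alg$ instantiated by \cref{alg:SSInit}. This produces, using $\mathrm{poly}(n,1/\eps)$ calls to \cref{alg:SSInit}, a solution $(S,\al)$ with $u_a(S,\al) \ge (1-\eps)\,u_a(S_\al,\al)$ — which is precisely the $\eps$-IC constraint of \cref{dfn:epsIC} — and $u_p(S,\al) \ge (1-\eps)\,u_p(S_{\al'},\al')$ for every $\al' \in (0,1)$; specializing to $\al' = \al^*$ gives $u_p(S,\al) \ge (1-\eps)\,\OPT$. Since each of the $\mathrm{poly}(n,1/\eps)$ calls takes $\mathrm{poly}(n,1/\eps)$ time by \cref{lemma:SSaprox}, the whole procedure runs in time polynomial in both $n$ and $1/\eps$, i.e.\ it is an FPTAS.

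Two routine points finish the argument. First, \cref{lemma:findAlpha} (via \cref{claim:alphaStarRange}) accumulates only a constant number of $(1-\eps)$ factors, so to match the clean $(1-\eps)$-approximation of \cref{dfn:approximationSchemes} one runs everything with $\eps' = \Theta(\eps)$, affecting the running time only by a constant factor. Second, the additive $\eps$-slack on the principal's side in \cref{dfn:approximationSchemes} follows from the multiplicative guarantee via \cref{cor:additiveMulti} under the standard normalization of rewards to $[0,1]$.

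I do not expect a genuine obstacle here: the substantive work is already contained in \cref{lemma:SSaprox} (the rounding of the $p_i$ and $q_i$ values together with the dynamic program of \cref{alg:SSSolveDP}) and in \cref{lemma:findAlpha} (adapted from \cite{duetting2025multi}). The only things to check carefully are that \cref{alg:SSInit} meets the hypothesis of \cref{lemma:findAlpha} \emph{uniformly over all contracts} $\al'$ — in particular that exhaustively guessing $b = \max_{i \in S_{\al'}} p_i$ and $r = \max_{i \in S_{\al'}} q_i$ over the $n^2$ action pairs is valid for every $\al'$ — and that the agent-utility approximation guarantee coincides with $\eps$-IC, so that the chain \cref{lemma:SSaprox} $\Rightarrow$ \cref{lemma:findAlpha} $\Rightarrow$ \cref{dfn:approximationSchemes} goes through without loss.
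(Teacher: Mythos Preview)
Your proposal is correct and follows essentially the same approach as the paper: invoke \cref{lemma:SSaprox} to get the per-contract guarantee and then apply \cref{lemma:findAlpha} directly, bypassing \cref{lemma:phase2} since the dynamic program already approximates the principal's utility without a threshold search. The paper's own proof is slightly terser but makes exactly this chain of implications and the same running-time argument.
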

\begin{proof}
    As established in \cref{lemma:SSaprox}, for any contract $\al \in (0,1)$, there exists a polynomial-time computable set of actions $S$ such that:
    \begin{equation*}
        u_a(S, \al) \geq (1-\eps) \cdot u_a(S_{\al}, \al) \quad \text{and} \quad u_p(S, \al) \geq (1-\eps) \cdot u_p(S_{\al}, \al).
    \end{equation*}
    Moreover, by applying \cref{lemma:findAlpha}, we can efficiently find a solution $(S, \al)$ that meets this approximation guarantee within polynomial time.

    The algorithm's running time is polynomial in $n$ and $\frac{1}{\eps}$. This is due to the fact that \cref{alg:SSInit} has a polynomial running time, as established in \cref{lemma:SSaprox}, and finding the contract $\al$ requires a polynomial number of queries to this algorithm, as stated in \cref{lemma:findAlpha}. Therefore, the overall algorithm maintains a polynomial running time in $n$ and $\frac{1}{\eps}$, confirming it is indeed an FPTAS.
\end{proof}

\end{document}